\documentclass[12pt]{amsart}
\usepackage{amssymb,amscd}
\usepackage{verbatim}

\usepackage{amsmath,amssymb,graphicx,mathrsfs} % my new
\usepackage[colorlinks=true,allcolors = blue]{hyperref} % my new

\usepackage{cancel} % gemini asked to add

\let\frak\mathfrak

\def\>{\relax\ifmmode\mskip.666667\thinmuskip\relax\else\kern.111111em\fi}
\def\<{\relax\ifmmode\mskip-.333333\thinmuskip\relax\else\kern-.0555556em\fi}
\def\vsk#1>{\vskip#1\baselineskip}
\def\vv#1>{\vadjust{\vsk#1>}\ignorespaces}
\def\vvn#1>{\vadjust{\nobreak\vsk#1>\nobreak}\ignorespaces}
 \let\alb\allowbreak

 \let\tsize\textstyle 
\let\sssize\scriptscriptstyle 
 \let\vp\vphantom \let\hp\hphantom

\let\Maketitle\maketitle
\def\maketitle{\Maketitle\thispagestyle{empty}\let\maketitle\empty}

\newtheorem{thm}{Theorem}[section]
\newtheorem{cor}[thm]{Corollary}
\newtheorem{lem}[thm]{Lemma}

\theoremstyle{definition} % My June 14 2017
\newtheorem{exmp}{Example}[section]

\numberwithin{equation}{section}

\theoremstyle{definition}
\newtheorem*{rem}{Remark}

\let\mc\mathcal
\let\nc\newcommand

\let\eps\varepsilon

\let\ka\kappa
\let\la\lambda
\let\La\Lambda

\let\phi\varphi
\let\si\sigma

\let\thi\vartheta

\def\Zeta{\mathrm Z}

\let\ox\otimes

\let\geq\geqslant
\let\le\leqslant
\let\leq\leqslant

\let\on\operatorname
\let\bi\bibitem
\let\bs\boldsymbol

\def\C{{\mathbb C}}
\def\Z{{\mathbb Z}}

\def\F{{\mc F}}

\def\Oc{{\mc O}}

\def\lsym#1{#1\alb\dots\relax#1\alb}
\def\lc{\lsym,}

\def\End{\on{End}}

\def\0{{\sssize0}}

\def\beq{\begin{equation}}
\def\eeq{\end{equation}}
\def\be{\begin{equation*}}
\def\ee{\end{equation*}}

\nc{\bea}{\begin{eqnarray*}}
\nc{\eea}{\end{eqnarray*}}
\nc{\bean}{\begin{eqnarray}}
\nc{\eean}{\end{eqnarray}}

\let\ga\gamma
\let\Ga\Gamma

\nc{\Il}{{\mc I_{\bs\la}}}
\nc{\bla}{{\bs\la}}
\nc{\Fla}{\F_\bla}
\nc{\tfl}{{T^*\Fla}}
\nc{\GL}{{GL_n(\C)}}
\nc{\GLC}{{GL_n(\C)\times\C^*}}

\def\KZ/{{\slshape KZ\/}}
\def\qKZ/{{\slshape qKZ\/}}
\def\XXX/{{\slshape XXX\/}}

\nc{\A}{{\mc C}}

\begin{document}

\hrule width0pt
\vsk->

\title[Monodromy Eigenvectors for Difference Equations with Root-of-Unity Step]
      {Monodromy Eigenvectors for Difference Equations with Root-of-Unity Step}

\author[Vitaly Tarasov and Alexander Varchenko]
{Vitaly Tarasov$\>^\circ$ and Alexander Varchenko$\>^\star$}

\maketitle

\begin{center}
{\it $^{\star}\<$Department of Mathematics, University
of North Carolina at Chapel Hill\\ Chapel Hill, NC 27599-3250, USA\/}
\vsk.5>
{\it $^\circ$Department of Mathematical Sciences,
Indiana University,
402 North Blackford St,
\\
Indianapolis, IN 46202-3216, USA\/}
\end{center}

{\let\thefootnote\relax
\footnotetext{\vsk-.8>\noindent
$^\circ\<${\sl E\>-mail}:\enspace vtarasov@iu.edu\>,
supported in part by Simons Foundation grants \rlap{430235, 852996}
\\
$^\star\<${\sl E\>-mail}:\enspace anv@email.unc.edu\>,
supported in part by Simons Foundation grant TSM-00012774}}

\vsk>
{\leftskip3pc \rightskip\leftskip \parindent0pt \Small
{\it Key words\/}:
Discrete flat qKZ connection, commuting operators, integral representation,
discrete integral
\vsk.6>
{\it 2010 Mathematics Subject Classification\/}: 81R50, 33C70, 33D80
%33C70, 55N91, 33D80 %14N35, 53D45, 14D05, 33C70
\par}

\begin{abstract}

A qKZ-type discrete flat connection with multiplicative step $p$ has monodromy operators when $p$ is a root of unity. The monodromy operators are commuting endomorphisms of the connection. We construct their eigensections and eigenvalues. The construction is based on the presentation of the discrete flat connection as a discrete Gauss-Manin connection.

\end{abstract}

{\small\tableofcontents\par}

\setcounter{footnote}{0}
\renewcommand{\thefootnote}{\arabic{footnote}}

\section{Introduction}
\label{sec 1}

A qKZ-type discrete flat connection with multiplicative step is a basic
structure in representation theory, mathematical physics, and enumerative geometry of Nakajima varieties. If the step is a root of unity, the connection acquires the monodromy operators. The monodromy operators are commuting endomorphisms of the connection. We construct their eigensections and eigenvalues.
The construction is based on the presentation of the discrete flat connection as a discrete Gauss-Manin connection.

\smallskip
The general setting is as follows.
Let $V$ be a complex vector space and $p\in\C^\times$. Let $z=(z_1,\dots,z_n)$.
Suppose we are given $\on{GL}(V)$-valued functions $K_m(z;p)$, $m=1,\dots,n$, satisfying
\beq
\label{flatness}
K_m(z_1,\dots,pz_l,\dots,z_n;p)\,K_l(z;p)
=
K_l(z_1,\dots,pz_m,\dots,z_n;p)\,K_m(z;p).
\eeq
Then these operators define a discrete flat connection with step $p$ and fiber $V$ on the space with
coordinates $z$.

A flat section $I(z;p)$ is a $V$-valued solution of the system of equations:
\beq
\label{qqKZ}
I(z_1,\dots,pz_m,\dots,z_n;p)=K_m(z;p)\,I(z;p),
\qquad m=1,\dots,n.
\eeq
A basic problem is to find integral representations for flat sections, or equivalently,
to realize the discrete connection $(K_m(z;p))$ as a suitable discrete Gauss--Manin connection.

\smallskip

When $p=1$, the operators $(K_m(z;1))$ commute, and one seeks their joint eigenvectors and
eigenvalues.

\smallskip

Equations \eqref{qqKZ} arise naturally in several guises: as trigonometric qKZ equations in
representation theory, and as quantum difference equations in the equivariant $K$-theory of
Nakajima varieties. In the case $p=1$, the commuting operators $(K_m(z;1))$ specialize to
transfer matrices of the XXZ integrable chain and to operators of quantum multiplication in
that same $K$-theory; see, for instance, \cite{S, FR, EFK, O, OS, AO}.

\smallskip

Integral representations are known both
for qKZ equations and for quantum difference equations of
Nakajima varieties. Such a representation consists of a scalar
(master) function $F(t;z;p)$
and a $V$-valued (weight) function $W(t;z)$, where $t=(t_1,\dots,t_r)$ is a collection of
auxiliary variables. The corresponding integrals
\[
I(z;p)=\int F(t;z;p)\,W(t;z)\,dt
\]
solve \eqref{qqKZ}.

The operators $(K_m(z;1))$ are diagonalized by the Bethe ansatz as follows. One introduces
\[
\Phi_l(t;z)
=
\lim_{p\to 1}
\frac{F(t_1,\dots,pt_l,\dots,t_r;z;p)}{F(t;z;p)},
\qquad
\Psi_m(t;z)
=
\lim_{p\to 1}
\frac{F(t;z_1,\dots,pz_m,\dots,z_n;p)}{F(t;z,p)}.
\]
If $(t^0;z^0)$ solves the system of Bethe ansatz equations:
\[
\Phi_l(t;z)=1,
\qquad l=1,\dots,r,
\]
then $W(t^0;z^0)$ is an eigenvector of $K_m(z^0;1)$ with eigenvalue $\Psi_m(t^0;z^0)$:
\[
K_m(z^0;1)\,W(t^0;z^0)=\Psi_m(t^0;z^0)\,W(t^0;z^0),
\qquad m=1,\dots,n.
\]

\medskip

We call the discrete connection $k$-resonant when $p=p_0$ is a primitive root of unity of
order $k$. In this case, the monodromy operators are defined by the formula:
\[
\tilde K_m(z)
=
K_m(z_1,\dots,p_0^{k-1}z_m,\dots,z_n;p_0)\cdots
K_m(z_1,\dots,p_0z_m,\dots,z_n;p_0)\,K_m(z;p_0).
\]
These operators commute:
\[
\tilde K_l(z;p_0)\,\tilde K_m(z;p_0)
=
\tilde K_m(z;p_0)\,\tilde K_l(z;p_0),
\qquad 1\le l,m\le n,
\]
and each $\tilde K_m(z;p_0)$ is an endomorphism of the discrete connection:
\[
\tilde K_m(z_1,\dots,p_0z_l,\dots,z_n;p_0)\,K_l(z;p_0)
=
K_l(z;p_0)\,\tilde K_m(z;p_0).
\]

\smallskip

The main result of this paper diagonalizes the monodromy operators, using the integral representation for the discrete flat connection $(K_m(z;p))$. Namely, define
\[
\tilde \Phi_l(t;z)
=
\lim_{p\to p_0}
\frac{F(t_1,\dots,p^kt_l,\dots,t_r;z,p)}{F(t;z;p)},
\qquad
\tilde \Psi_m(t;z)
=
\lim_{p\to p_0}
\frac{F(t;z_1,\dots,p^kz_m,\dots,z_n;p)}{F(t;z;p)}.
\]
We show that if $(t^0;z^0)$ satisfies
\bean
\label{gbae}
\tilde \Phi_l(t;z)=1,
\qquad l=1,\dots,r,
\eean
then the vector
\bean
\label{mx00}
&&
I(t^0;z^0)
=
\sum_{j_1,\dots,j_r=0}^{k-1}
W(p_0^{j_1}t_1^0,\dots,p_0^{j_r}t_r^0;z^0)\,
\frac{F(p_0^{j_1}t_1^0,\dots,p_0^{j_r}t_r^0;z^0;p_0)}
{F(t^0;z^0;p_0)}
\\
\notag
&&
=
\sum_{j_1,\dots,j_r=0}^{k-1}
\!\!
W(p_0^{j_1}t_1^0,\dots,p_0^{j_r}t_r^0;z^0)
\prod_{i=1}^r\prod_{d_i=0}^{j_i-1}
\Phi_i\bigl(p_0^{j_1}t_1^0,\dots,p_0^{j_{i-1}}t_{i-1}^0,p_0^{d_i}t_i^0,t_{i+1}^0,\dots,t_r^0;z^0\bigr),
\eean
if nonzero, is an eigenvector of $\tilde K_m(z^0;p_0)$ with eigenvalue $\tilde \Psi_m(t^0;z^0)$:
\[
\tilde K_m(z^0;p_0)\,I(t^0;z^0)=\tilde \Psi_m(t^0;z^0)\,I(t^0;z^0),
\qquad m=1,\dots,n,
\]
 see Corollary \ref{cor m eigen}.

Formula \eqref{mx00} shows that the vector $I(t^0;z^0)$ is not simply the value
$W(t^0;z^0)$ of the weight function at a Bethe ansatz solution
$(t^0;z^0)$, as in the case $p=1$.
Rather, it is a sum of weight-function values at shifted points in the $t$-variables, with
scalar coefficients recording the displacement from the initial point $(t^0;z^0)$.

The scalar functions $\tilde \Phi_l(t;z)$ and $\tilde \Psi_m(t;z)$ satisfy a special symmetry:
\[
\tilde \Phi_l(t;z)=\Phi_l(t_1^k,\dots,t_r^k;z_1^k,\dots,z_n^k),
\qquad
\tilde \Psi_m(t;z)=\Psi_m(t_1^k,\dots,t_r^k;z_1^k,\dots,z_n^k),
\]
see Lemma \ref{lem 7.3} for the precise statement.

Thus the Bethe ansatz picture for $p=1$ and the monodromy picture for a primitive root of
unity are closely parallel. In the case $p=1$, solving
\[
\Phi_l(t;z)=1,
\qquad l=1,\dots,r,
\]
produces an eigenvector $W(t^0;z^0)$ of the operators $(K_m(z^0;1))$ with eigenvalues
$(\Psi_m(t^0;z^0))$. For a primitive root $p_0$, one instead solves
\[
\Phi_l(t_1^k,\dots,t_r^k;z_1^k,\dots,z_n^k)=1,
\qquad l=1,\dots,r,
\]
and obtains an eigenvector $I(t^0;z^0)$ of the monodromy operators $(\tilde K_m(z^0;p_0))$
with eigenvalues
\[
(\Psi_m\bigl((t_1^0)^k,\dots,(t_r^0)^k;(z_1^0)^k,\dots,(z_n^0)^k\bigr)).
\]

\subsection{Motivating Example}
\label{n=2 p=-1}\rm

Let \,$n=2$\,. Let
\,$V=(M_{\La_1}\ox M_{\La_2})[\<\>\La_1+\La_2-2\>]$ \, be the two-dimensional
weight subspace of the tensor product to two Verma modules over the quantum
group $U_q(\frak{sl}_2)$. The space $V$ has the basis
\be
f^{(1,0)}=f\<\>v_{\La_1}\ox v_{\La_2}\>,\qquad
f^{(0,1)}=v_{\La_1}\ox f\<\>v_{\La_2}\>.
\ee
In this basis, the trigonometric $R$-matrices on $V$ are
\be
R^{(1,2)}_{\La_1,\La_2}(z)\,=\,\frac1{q^{\La_1+\La_2}\!-z}\,
\begin{pmatrix} \,q^{\La_1}\!-z\<\>q^{\La_2} & q^{2\La_2}\!-1\\[4pt]
\,z\>(q^{2\La_1}\!-1) & q^{\La_2}\!-q^{\La_1} z \end{pmatrix}
\ee
and
\be
R^{(2,1)}_{\La_2,\La_1}(z)\,=\,\frac1{q^{\La_1+\La_2}\!-z}\,
\begin{pmatrix} \,q^{\La_1}\!-z\<\>q^{\La_2} & z\>(q^{2\La_2}\!-1)\\[4pt]
\,q^{2\La_1}\!-1 & q^{\La_2}\!-q^{\La_1} z \end{pmatrix}.
\ee
The discrete qKZ connection is given by the operators
\begin{align*}
K_1(z_1,z_2;p)\,&{}=\,\ka^{\La_1-h^{(1)}}R^{(1,2)}_{\La_1,\La_2}(z_1/z_2)\,,
\\
K_2(z_1,z_2;p)\,&{}=\,R^{(2,1)}_{\La_2,\La_1}(p\<\>z_2/z_1)\>\ka^{\La_2-h^{(2)}}\,,
\end{align*}
where
\be
\ka^{\La_1-h^{(1)}}\,=\,\begin{pmatrix} \,\ka & 0\,\\[2pt] \,0 & 1\,\end{pmatrix},\qquad
\ka^{\La_2-h^{(2)}}\,=\,\begin{pmatrix} \,1 & 0\,\\[2pt] \,0 & \ka\,\end{pmatrix}.
\ee
In this example $t$ is a single variable, and
\begin{align*}
\Phi(t;z_1,z_2)\,&{}=\,\ka\,
\frac{(1-q^{\La_1}t/z_1)\>(1-q^{\La_2}t/z_2)}{(q^{\La_1}\!-p\<\>t/z_1)\>(q^{\La_2}\!-p\<\>t/z_2)}\;,
\\[4pt]
\Psi_m(t;z_1,z_2)\,&{}=\;\frac{q^{\La_m}\!-t/z_m}{1-q^{\La_m}t/(pz_m)}\,,\qquad m\,=\,1,2\,.
\end{align*}
The $V$-valued weight function is
\be
W(t;z_1,z_2)\,=\,\frac t{z_1}\>(q^{\La_2}\!-t/z_2) f^{(1,0)} +\>
\frac t{z_2}\>(1-q^{\La_1}t/z_1) f^{(0,1)}\>.
\ee
Let \,$p=-1$\,, so that $p^2=1$. Then the monodromy functions of the qKZ connection are
\begin{align*}
\tilde K_1(z_1,z_2)\,&{}=\,K_1(-z_1,z_2;-1)\,K_1(z_1,z_2;-1)\,,
\\
\tilde K_2(z_1,z_2)\,&{}=\,K_2(z_1,-z_2;-1)\,K_2(z_1,z_2;-1)\,,
\end{align*}
and
\begin{align*}
& \tilde\Phi(t;z_1,z_2)\,=\,
\ka^2\,\frac{(1-q^{2\La_1}\<\>t^2\</z_1^2)\>(1-q^{2\La_2}\<\>t^2\</z_2^2)}
{(q^{2\La_1}\!-t^2\</z_1^2)\>(q^{2\La_2}\!-t^2\</z_2^2)}\;,
\\[4pt]
& \tilde\Psi_m(t;z_1,z_2)\,=\;\frac{q^{2\La_m}\!-t^2\</z_m^2}{1-q^{2\La_m}\<\>t^2\</z_m^2}\;,
\qquad m\,=\,1,2\,.
\end{align*}

\smallskip
\noindent
Corollary \ref{cor m eigen} in this case reads as follows.
Let \,$(t^0; z_1^0,z_2^0)$ \,be a solution to the Bethe ansatz equation
\be
\ka^2\,\frac{(1-q^{2\La_1}\<\>t^2\</z_1^2)\>(1-q^{2\La_2}\<\>t^2\</z_2^2)}
{(q^{2\La_1}\!-t^2\</z_1^2)\>(q^{2\La_2}\!-t^2\</z_2^2)}\;=\,1\,.
\ee
Then the vector
\be
I(t^0;z_1^{0},z_2^0)\,=\,W(t^0;z_1^0,z_2^0)\>+\>
W(-\<\>t^0;z_1^0,z_2^0)\,\Phi(t^0;z_1^0,z_2^0),
\ee
if nonzero, is an eigenvector of the monodromy functions \,$\tilde K_m(z_1^0,z_2^0)$\,,
\,$m=1,2$\,, with respective eigenvalues
\be
\frac{q^{2\Lambda_m}-(t^0\!/z^0_m)^2}
{1-q^{2\La_m}\<\>(t^0\!/z^0_m)^2}\;.
\ee

\bigskip

Our construction raises the usual Bethe-ansatz-type questions:
\begin{itemize}
\item
Is the constructed vector nonzero?
\item Do all eigenvectors of the monodromy operators arise in this way?
\item Can one compute their norms?
\item Are the resulting eigenvectors orthogonal?
\end{itemize}
\medskip

The first two questions are answered positively in the first nontrivial example in Appendix \ref{appB}.

\medskip

The exposition is as follows.
\begin{itemize}
\item \textbf{Sections \ref{sec 2} and \ref{sec 3}:} We introduce additive $b$-periodic discrete flat connections with step $a$ and construct flat eigensections for them.
\item \textbf{Sections \ref{sec 4} and \ref{sec 5}:} We apply this construction to multiplicative
discrete flat connections with integral representations.
\end{itemize}

To keep the notation minimal, in this paper we focus on applications of the general construction
 to  the trigonometric qKZ difference connection
associated with a tensor product of Verma modules over
$U_q(\mathfrak{sl}_2)$ with multiplicative root-of-unity step.

\smallskip

As an illustration, in Section \ref{sec 6} we consider the case when $q = p^c$ for some integer $c$ and other parameters of the trigonometric qKZ connection taking special values. In this case, the functions $\tilde\Phi_l(t;z)$, $\tilde\Psi_m(t;z)$ are identically equal to 1. This makes the system of equations \eqref{gbae} trivial: every $(t^0; z^0)$ satisfies the system. In this case our method produces Laurent polynomial flat sections of the discrete connection, and our construction has a flavor similar to that of the construction of polynomial solutions to the KZ and qKZ equations in finite characteristic presented in \cite{SV, MV1}.

\smallskip

The paper has two appendices. 
In Appendix~\ref{appA}, we describe the semiclassical limit $q\to 1$ of the qKZ construction. We show that the monodromy operators of the qKZ discrete connection converge to the cyclotomic Gaudin Hamiltonians associated with a tensor product of Verma modules over $\frak{sl}_2$, and that the limit of our eigenvector construction yields new proofs of the commutativity and diagonalization theorems for the cyclotomic Gaudin Hamiltonians due to \cite{VY}.

In Appendix~\ref{appB}, the qKZ discrete connection is considered on the first nontrivial weight subspace of a tensor product of Verma modules over $U_q(\frak{sl}_2)$. It is shown that, for generic values of the parameters, our construction produces a basis of common eigenvectors of the monodromy operators.

\smallskip

This paper is related to \cite{TV3, TV4}.
In \cite{TV3},
a qKZ--type additive discrete flat connection in characteristic $p$ is considered.
The connection has $p$-curvature operators. They
are commuting endomorphisms of the connection. A Bethe-ansatz-type
construction of eigensections and eigenvalues of the $p$-curvature operators is presented.
An eigensection is a discrete
hypergeometric sum over the finite lattice
$\Z^r/p\Z^r$.

In \cite{TV4},
a KZ--type differential connection is considered.
A Bethe-ansatz-type
construction of eigensections and eigenvalues of the $p$-curvature operators of the connection
is presented.
An eigensection is a characteristic $p$ hypergeometric integral.

\smallskip

This paper is related to the work of P. Koroteev and A.
Smirnov \cite{KS}, where the eigenvalues of the monodromy operators are discussed, though
not the eigenvectors.
That paper is written in the style of a proposal, however: several
objects and statements are introduced without full precision, and the proofs are only sketched.

\smallskip

\noindent\textbf{Acknowledgments.}
The second author thanks IH\'ES for its hospitality during May--June 2026, when this paper was developed.

\section{Invariant discrete connection}
\label{sec 2}

\subsection{Definition}

Let $a,b$ be nonzero complex numbers, $V$ a complex vector space,
$x=(x_1,\dots, x_n)$ variables.
Let $L_l(x)$, $l=1,\dots,n$, be $\on{GL}(V)$-valued functions such that:
\bean
\label{L com}
L_m(x_l +a)L_l(x)
&=&
L_l(x_m +a)L_m(x),
\qquad
1\leq l,m\leq n,
\\
\label{L per}
L_l(x_m + b)
&=& L_l(x), \qquad 1\leq l,m\leq n.
\eean
We say that the {\it transition} functions $(L_l(x))$ define
a {\it $b$-periodic discrete flat connection with
step $a$.}\footnote{We use the following abbreviation. Instead of
$(x_1,\dots, x_{m-1}, x_m+a, x_{m+1}, \dots, x_n)$
we write $(x_m+a)$, and instead of
$(s_1, \dots, s_{l-1}, s_l+a,s_{l+1},\dots, s_r)$ for some $r$ and $l$
we write $(s_l+a)$, and so on. We indicate only changed variables in the list of all variables. }

\medskip

A $V$-valued function $I(x)$ is a flat section if
\bean
\label{fL}
I(x_l +a) = L_l(x) I(x), \qquad l=1,\dots, n.
\eean

\subsection{Monodromy of resonance}

Let $k$ be a positive integer.
A $b$-periodic discrete flat connection with step $a$ is
$k$-resonant,
if
\bean
\label{ka=b}
b=ka .
\eean
Then the monodromy functions are defined as:
\bean
\label{L def}
\tilde L_m(x)
&=&
L_m(x_m + (k-1)a) \dots L_m(x_m + a) L_m(x), \qquad
m=1,\dots, n.
\eean

\begin{lem}
\label{lem 1.1}

For $1\leq l,m\leq n$, we have
\bean
\label{tL per}
\tilde L_l(x_m+b)
&=&
\tilde L_l(x),
\\
\label{lm=ml}
\tilde L_l(x_m+a) L_m(x)
& =&
L_m(x)\tilde L_l(x),
\eean
and the monodromy functions commute:
\bean
\label{LL=LL}
\tilde L_l(x) \tilde L_m(x) = \tilde L_m(x) \tilde L_l(x),
\qquad 1\leq l,m\leq n.
\eean
\end{lem}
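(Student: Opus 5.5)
The periodicity \eqref{tL per} follows at once from \eqref{L per}. Each factor $L_l(x_l+ja)$ in the product \eqref{L def} is $b$-periodic in every variable $x_m$, because shifting $x_m$ by $b$ commutes with the shift of $x_l$ by $ja$. Hence the product $\tilde L_l(x)$ is $b$-periodic as well.

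For \eqref{lm=ml} I treat the cases $m=l$ and $m\neq l$ separately.

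\emph{Case $m=l$.} We have
\[
\tilde L_l(x_l+a)L_l(x)=L_l(x_l+ka)\,L_l(x_l+(k-1)a)\cdots L_l(x_l+a)\,L_l(x).
\]
By \eqref{ka=b} and \eqref{L per}, the leftmost factor $L_l(x_l+ka)$ equals $L_l(x)$. The remaining factors $L_l(x_l+(k-1)a)\cdots L_l(x)$ are exactly $\tilde L_l(x)$, so the product is $L_l(x)\tilde L_l(x)$.

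\emph{Case $m\neq l$.} The idea is to move $L_m$ through the product one factor at a time, using the flatness relation \eqref{L com} with $x$ replaced by $x_l+ja$:
\[
L_l(x_l+ja,\,x_m+a)\,L_m(x_l+ja)=L_m(x_l+(j+1)a)\,L_l(x_l+ja).
\]
Start from $\tilde L_l(x_m+a)L_m(x)$, in which every factor has $x_m$ shifted by $a$. Apply the relation with $j=0$ to the rightmost pair, then $j=1$, and so on up to $j=k-1$. Each step pushes $L_m$ one position to the left and raises its $x_l$-shift by $a$, while removing the $x_m$-shift from the $L_l$ factor it passes. After $k$ steps we obtain
\[
L_m(x_l+ka)\,\tilde L_l(x),
\]
and $L_m(x_l+ka)=L_m(x)$ by \eqref{L per}. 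The only real care needed here is tracking the arguments during this telescoping, and that is the main (mild) obstacle.

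Commutativity \eqref{LL=LL} then follows by iterating \eqref{lm=ml} $k$ times with the roles of the indices swapped, i.e.\ using $\tilde L_m(x_l+a)L_l(x)=L_l(x)\tilde L_m(x)$. Write
\[
\tilde L_l(x)=L_l(x_l+(k-1)a)\cdots L_l(x).
\]
Move $\tilde L_m$ from the left through the factors one at a time, applying the swapped identity at the points $x_l+ja$. This gives
\[
\tilde L_m(x_l+ka)\,\tilde L_l(x)=\tilde L_l(x)\,\tilde L_m(x),
\]
and $\tilde L_m(x_l+ka)=\tilde L_m(x)$ by \eqref{tL per}. Therefore $\tilde L_m(x)\tilde L_l(x)=\tilde L_l(x)\tilde L_m(x)$.
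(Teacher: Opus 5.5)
Your proof is correct and takes the same route as the paper. The paper's proof just says that \eqref{tL per} follows from \eqref{L per} and that \eqref{lm=ml} and \eqref{LL=LL} follow from the flatness conditions \eqref{L com}; your telescoping argument supplies those details, and it also makes explicit a step the paper leaves implicit, namely the use of \eqref{L per} with $b=ka$ to replace $L_m(x_l+ka)$ by $L_m(x)$ and $\tilde L_m(x_l+ka)$ by $\tilde L_m(x)$.
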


\begin{proof}
Formula \eqref{tL per} follows from \eqref{L per}.
Formulas \eqref{lm=ml} and \eqref{LL=LL} follow from the flatness conditions \eqref{L com}.
\end{proof}

\begin{cor}
Let $I(x)$ be a flat section. Then
$\tilde L_l(x) I(x)$, $l=1,\dots,n$, are flat sections.
\end{cor}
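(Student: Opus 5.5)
The corollary follows directly from the intertwining relation \eqref{lm=ml} of Lemma \ref{lem 1.1}. Fix $l$ and put $J(x)=\tilde L_l(x)\,I(x)$. We must show that $J(x_m+a)=L_m(x)\,J(x)$ for every $m=1,\dots,n$, i.e.\ that $J$ satisfies the flatness equations \eqref{fL}.

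The computation is a single chain of equalities. First use the definition of $J$. Then apply the flatness of $I$, namely $I(x_m+a)=L_m(x)I(x)$. Finally apply \eqref{lm=ml}, which reads $\tilde L_l(x_m+a)L_m(x)=L_m(x)\tilde L_l(x)$. Together these give
\[
J(x_m+a)=\tilde L_l(x_m+a)\,I(x_m+a)=\tilde L_l(x_m+a)\,L_m(x)\,I(x)=L_m(x)\,\tilde L_l(x)\,I(x)=L_m(x)\,J(x).
\]
This is exactly the flatness condition for $J$.

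There is no real obstacle, since all the substance lies in \eqref{lm=ml}. If that identity needed to be rechecked, the only point requiring care would be the case $m=l$. There, \eqref{lm=ml} follows from $b$-periodicity \eqref{L per} together with $b=ka$, which give $L_l(x_l+ka)=L_l(x)$. This lets the extra factor be rotated from the left end of the product \eqref{L def} to the right end. For $m\ne l$, one instead commutes $L_m$ through each factor of the product \eqref{L def} using \eqref{L com}.
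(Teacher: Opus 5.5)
Your proof is correct and is exactly the intended argument: the corollary is an immediate consequence of the intertwining relation \eqref{lm=ml} of Lemma \ref{lem 1.1} combined with the flatness of $I$, via the same chain of equalities you wrote. One small correction to your side remark: for $m\ne l$, commuting $L_m$ through the $k$ factors of $\tilde L_l$ using \eqref{L com} leaves $L_m(x_l+ka)$ on the left, so $b$-periodicity \eqref{L per} with $b=ka$ is needed in that case as well, not only for $m=l$.
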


Our goal is to construct flat sections that are eigenvectors of the commuting monodromy functions $(\tilde L_l(x))$.

\section{Eigensections of $k$-resonant $b$-periodic connection}
\label{sec 3}

\subsection{Discrete flat connection of rank 1}

Let $r, n$ be positive integers, $s=(s_1,\dots, s_r)$, $x=(x_1,\dots,x_n)$.
Let $a$ and $b$ be nonzero complex numbers.
A $b$-periodic
discrete flat connection of rank 1 with step $a$ on the space with coordinates $(s;x)$
is a collection of nonzero scalar functions
$\phi_l(s;x)$, $l=1,\dots,r,$ and $\psi_m(s;x),$ $m=1,\dots,n$, such that
the functions are $b$-periodic with respect to variables $s$ and $x$, and
satisfy
\bea
\phi_l(s_i+a;x)\,\phi_i(s;x)
&=&
\phi_i(s_l+a;x) \,\phi_l(s;x), \qquad
\ \
1\leq l,i\leq r,
\\
\psi_m(s;x_j+a) \,\psi_j(s;x)
&=&
\psi_j(s;x_m+a) \,\psi_m(s;x), \qquad 1\leq m,j\leq n,
\\
\psi_m(s_i+a;x) \,\phi_i(s;x)
&=&
\phi_i(s; x_m+a) \,\psi_m(s;x), \qquad 1\leq i\leq r, \quad
1\leq m\leq n.
\eea

Let $k$ be a positive integer.
The discrete connection of rank 1 is $k$-resonant if $b=ka$.
In that case, the monodromy functions are:
\bea
\tilde \phi_l(s;x)
=
\prod_{j=0}^{k-1} \phi_l(s_l+ja; x), \qquad
\tilde \psi_m(s;x)
=
\prod_{j=0}^{k-1} \psi_m(s; x_m+ja), \qquad
\eea
$l=1,\dots, r,$ $m=1,\dots, n.$
\begin{lem}
\label{lem 2.1}
The monodromy functions are $a$-periodic in variables $s$ and $x$:
\bea
\tilde \phi_l(s_i+a;x)
&=&
\tilde \phi_l(s;x),
\qquad
\tilde \phi_l(s;x_j+a)
=
\tilde \phi_l(s;x),
\\
\tilde \psi_m(s_i+a;x)
&=&
\tilde \psi_m(s;x),
\qquad
\tilde \psi_m(s;x_j+a)
=
\tilde \psi_m(s;x).
\eea

\end{lem}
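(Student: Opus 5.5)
We prove each of the four identities in Lemma \ref{lem 2.1} by one of two mechanisms. Shifting the variable that is already being multiplied over gives a telescoping identity closed by $b$-periodicity. Shifting a different variable gives a product of the flatness relations in which the intermediate factors cancel.

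\emph{Shift in the same variable.} For $\tilde\phi_l(s_l+a;x)$ the product runs over $\phi_l(s_l+ja;x)$ for $j=1,\dots,k$ instead of $j=0,\dots,k-1$. The two products differ only in the factors $\phi_l(s_l+ka;x)$ and $\phi_l(s;x)$, and these are equal since $ka=b$ and $\phi_l$ is $b$-periodic. The same argument gives $\tilde\psi_m(s;x_m+a)=\tilde\psi_m(s;x)$.

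\emph{Shift in a different variable.} Take $\tilde\phi_l(s_i+a;x)$ with $i\ne l$. Write the first flatness relation as
\[
\frac{\phi_l(s_i+a;x)}{\phi_l(s;x)}=\frac{\phi_i(s_l+a;x)}{\phi_i(s;x)}.
\]
Substitute $s_l\mapsto s_l+ja$ in this relation; this is legitimate because the relation holds identically in $s$. Taking the product over $j=0,\dots,k-1$ gives
\[
\frac{\tilde\phi_l(s_i+a;x)}{\tilde\phi_l(s;x)}=\prod_{j=0}^{k-1}\frac{\phi_i(s_l+(j+1)a;x)}{\phi_i(s_l+ja;x)}=\frac{\phi_i(s_l+ka;x)}{\phi_i(s;x)}=1 .
\]
The middle product telescopes, and the last equality is again $b$-periodicity with $b=ka$.

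The remaining cases follow the same pattern, each using the appropriate mixed relation:
\begin{itemize}
\item For $\tilde\phi_l(s;x_j+a)$, use the third relation with $i=l$, $m=j$, rewritten as $\phi_l(s;x_j+a)/\phi_l(s)=\psi_j(s_l+a)/\psi_j(s)$.
\item For $\tilde\psi_m(s_i+a;x)$, use the same third relation, read as $\psi_m(s_i+a)/\psi_m=\phi_i(x_m+a)/\phi_i$, with the product taken over shifts of $x_m$.
\item For $\tilde\psi_m(s;x_j+a)$ with $j\ne m$, use the second relation.
\end{itemize}
In every case the product over $j$ telescopes to a ratio of the form $f(y+ka)/f(y)=1$.

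There is no real obstacle. The only care needed is bookkeeping: check that shifting $s_l$ (or $x_m$) inside the relations is allowed, which holds because they are identities, and note that the functions are nonzero, so the divisions are valid.
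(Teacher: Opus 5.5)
Your proof is correct and is essentially the paper's argument written out. The paper simply applies identity \eqref{lm=ml} of Lemma~\ref{lem 1.1} to the rank-one connection on the $(s;x)$-space, treating all $r+n$ directions together; since scalars commute, $L_m(x)$ cancels, and your telescoping products of the three flatness relations, closed by $b$-periodicity with $b=ka$, are exactly what that identity reduces to in the scalar case (including the same-variable cases, which rest on $b$-periodicity rather than flatness).
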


\begin{proof}

The lemma is the result of the application of formula \eqref{lm=ml} to the discrete flat connection
of rank 1.
\end{proof}

The group $\Z^{ n}$ acts on the space with coordinates $x$ by shifting the
coordinates of a point $x^0$ by integer multiples of $a$. Let
$\mc O_{x^0}$ denote the orbit of $x^0$.
The group $\Z^{r+n}$ acts on the space with coordinates $(s;x)$.
Let
$\mc O_{(s^0;x^0)}$ denote the orbit of $(s^0;x^0)$.

By Lemma \ref{lem 2.1}, each of the functions $(\tilde \phi_l(s;x),\tilde \psi_m(s;x))$ is constant on every orbit $\mc O_{(s^0;x^0)}$.

\subsection{Parallel transport}

The transition functions $(\phi_l(s;x), \psi_m(s;x))$ define parallel transport of the fiber $\C$ over points
of an orbit $\mc O_{(s^0; x^0)}$. Namely, the transport of fibers
$\C_{(s^1;x^1)} \leftarrow \C_{(s_l^1+a;x^1)}$ and
$\C_{(s^1;x^1)} \leftarrow \C_{(s^1;x^1_m+a)}$ are multiplications by
$\phi_l(s^1;x^1)$ and $\psi_m(s^1; x^1)$, respectively.
These elementary transports and the flatness conditions determine a transport
$\C_{(s^1;x^1)} \leftarrow \C_{(s^2;x^2)}$ for any two points
$(s^1; x^1)$ and $(s^2; x^2)$ of $\mc O_{(s^0; x^0)}$
as multiplication by a uniquely determined number which we denote by
$\mu_{(s^1; x^1),(s^2; x^2)}$.

For example, for nonnegative integers $j_1,\dots,j_r$, we have:
\bean
\label{pt}
&&
\mu_{(s^0; x^0),(s^0+j_1a,\dots, s^0_r+j_ra; x^0)}
\\
\notag
&&
\phantom{aaa}
=
\prod_{i=1}^r\prod_{d_i=0}^{j_i-1}\phi_i(s^0_1+j_1a,\dots, s^0_{i-1}+ j_{i-1}a,
s^0_i + d_ia, s^0_{i+1}, \dots, s^0_r; x^0).
\eean

For a $V$-valued function \>$f(s;x)$ and a point \,$x^1 \in \mc O_{x^0}$\,,
define the discrete integral over variables \>$s$ \>by the formula:
\be
\int_{(s^0; x^0)} f(s;x^1)\, d_a s\;:=\!
\sum_{j_1, \dots, j_r=0}^{k-1}
f(s_1^0+j_1a, \dots, s_r^0+j_ra; x^1)\ \mu_{(s^0; x^0),(s_1^0 + j_1a,
\dots, s_r^0 + j_ra; x^1)}\,.
\ee
In addition, define the discrete integral over variables \>$s$
\>with the fixed variable \>$s_l$\,,
\be
\int^{\>(l)}_{(s^0; x^0)} f(s;x^1)\, d_a s\;:=\!\!\!\!
\sum_{j_1,\dots,\widehat{j_{l}},\dots, j_r=0}^{k-1}\!\!\!f(s_1^0+j_1a, \dots, s_r^0+j_ra; x^1)
\;\mu_{(s^0; x^0),(s_1^0 + j_1a,\dots, s_r^0 + j_ra; x^1)}\,,
\ee
where in the right-hand side there is no summation over \,$j_l$\,, and \,$j_l=0$ \,instead.

\begin{lem}
\label{lem int d}
Let $g(s;x)$ be a $b$-periodic function in variables \,$s$ and \,$x$. Then
\beq
\label{ili d}
\int_{(s^0;x^0)} \left(\phi_l(s;x^1)g(s_l+a;x^1) - g(s;x^1)\right) d_as\,=\,
\bigl(\tilde\phi_l(s^0;x^0)-1\bigr)
\int^{(l)}_{(s^0; x^0)} g(s;x^1)\, d_a s
\eeq
for \,$x^1\in \mc O_{x^0}$ and \;$l=1,\dots,r$.
\end{lem}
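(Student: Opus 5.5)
The plan is a discrete summation by parts in the variable $s_l$: a telescoping sum along a single cycle of length $k$. Write $\mu_j$ for $\mu_{(s^0;x^0),(s^0+ja;x^1)}$, where $j=(j_1,\dots,j_r)$ and $s^0+ja$ means $(s^0_1+j_1a,\dots,s^0_r+j_ra)$. The key identity comes from the definition of parallel transport and its path independence, which is guaranteed by the flatness conditions. It reads
\[
\phi_l(s^0+ja;x^1)\,\mu_j \;=\; \mu_{j+e_l},
\]
where $\mu_{j+e_l}$ denotes the transport to the point with $s_l$ shifted by one more step $a$. This holds because the elementary transport $\C_{(s;x)}\leftarrow\C_{(s_l+a;x)}$ is multiplication by $\phi_l(s;x)$ and transports compose multiplicatively. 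Using it, the first term of the integrand becomes $g(s^0+(j+e_l)a;x^1)\,\mu_{j+e_l}$.

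Next I fix all indices $j_i$ with $i\neq l$ and sum over $j_l=0,\dots,k-1$. The sum of $g(\cdot+(j_l+1)a)\mu_{j+e_l}-g(\cdot+j_la)\mu_j$ telescopes to
\[
g(\dots,s^0_l+ka,\dots;x^1)\,\mu_{(j_l=k)} \;-\; g(\dots,s^0_l,\dots;x^1)\,\mu_{(j_l=0)}.
\]
Since $g$ is $b$-periodic and $ka=b$, the two $g$-values coincide. It remains to compare $\mu_{(j_l=k)}$ with $\mu_{(j_l=0)}$. Going from the point with $j_l=0$ to the point with $j_l=k$ multiplies the transport by $\prod_{d=0}^{k-1}\phi_l(s^0+ja+d a e_l;x^1)$, with $j_l=0$ in the base point. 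This product is $\tilde\phi_l$ evaluated at a point of the orbit $\mc O_{(s^0;x^0)}$, and by Lemma \ref{lem 2.1} it equals $\tilde\phi_l(s^0;x^0)$.

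Hence each fixed-$(j_i)_{i\ne l}$ slice contributes $(\tilde\phi_l(s^0;x^0)-1)\,g(s^0+ja;x^1)\,\mu_j\big|_{j_l=0}$. Summing over the remaining indices gives exactly $\bigl(\tilde\phi_l(s^0;x^0)-1\bigr)\int^{(l)}_{(s^0;x^0)} g\,d_as$, as claimed.

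The main point needing care is the meaning of $\mu$ at points with $j_l=k$, and more generally the path independence of the transport on the orbit. Note that $(s^0_l+ka)$ is a different point of the orbit from $s^0_l$, even though the functions are $b$-periodic; the transport does not close up, and its defect is precisely $\tilde\phi_l$. I would make this explicit by noting that the transport is a well-defined function on the lattice orbit because the flatness relations make the product along any lattice path depend only on its endpoints. The multiplicativity used in the key identity is then immediate.
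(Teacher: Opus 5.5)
Your proof is correct and follows essentially the paper's argument. The paper likewise converts $\phi_l\,\mu$ into the transport to the shifted point and reindexes $j_l\mapsto j_l+1$; it then collects the boundary slice $j_l=k$ versus $j_l=0$, using $b$-periodicity of $g$ and the fact that the transport to the $j_l=k$ point is $\tilde\phi_l(s^0;x^0)$ times the transport to the $j_l=0$ point, since $\tilde\phi_l$ is constant on the orbit. Your slice-by-slice telescoping is the same summation by parts written out explicitly.
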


The lemma is a discrete version of Stokes' theorem.

\begin{proof}
We prove \eqref{ili d} for $l=1$. The proof for other \>$l$ is similar.
We have
\bean
\label{d1}
&&
\int _{(s^0;x^0)}\phi_1(s;x^1)g(s_1+a;x^1) d_as\,=\!
\sum_{j_1, \dots, j_r=0}^{k-1}\phi_1(s_1^0+j_1a, \dots, s_r^0+j_ra; x^1)
\\
\notag
&&
\times \ \
g(s_1^0+(j_1+1)a, s_2^0+j_2a, \dots, s_r^0+j_ra; x^1)
\ \mu_{(s^0; x^0),(s_1^0 + j_1a,
\dots,
s_r^0 + j_ra; x^1)}\,
\\
\notag
&&
=
\sum_{j_1, \dots, j_r=0}^{k-1}
g(s_1^0+(j_1+1)a, s_2^0+j_2a, \dots, s_r^0+j_ra; x^1)
\\
\notag
&&
\phantom{aaaaaa}
\ \
\times \ \
\mu_{(s^0; x^0),(s_1^0 + (j_1+1)a, s_2^0 + j_2a,
\dots,
s_r^0 + j_ra; x^1)}\,.
\eean
The last sum equals
\begin{align*}
& \kern-2em
\sum_{j_1, \dots, j_r=0}^{k-1}
g(s_1^0+j_1a, s_2^0+j_2a, \dots, s_r^0+j_ra; x^1)
\;\mu_{(s^0; x^0),(s_1^0 + j_1a, s_2^0 + j_2a,
\dots, s_r^0 + j_ra; x^1)}
\\[-4pt]
& {}+\,\bigl(\tilde\phi_l(s^0;x^0)-1\bigr)\!\sum_{j_2, \dots, j_r=0}^{k-1}
g(s_1^0,s_2^0+j_2a, \dots, s_r^0+j_ra; x^1)\;
\mu_{(s^0; x^0),(s_1^0,s_2^0 + j_2a,\dots, s_r^0 + j_ra; x^1)}
\\
& {}=\,\int_{(s^0;x^0)} g(s;x^1)\, d_as\,+\,
\bigl(\tilde\phi_1(s^0;x^0)-1\bigr)\int^{(l)}_{(s^0; x^0)} g(s;x^1)\, d_a s\,,
\end{align*}
because for any \>$j_2, \dots, j_r$\,,
\be
g(s_1^0+ka, s_2^0+j_2a, \dots, s_r^0+j_ra; x^1)\,=\,
g(s_1^0, s_2^0+j_2a, \dots, s_r^0+j_ra; x^1)
\ee
due to \,$b\>$-periodicity of \,$g(s;x)$\,, \,and
\be
\mu_{(s^0; x^0),(s_1^0 + ka, s_2^0 + j_2a,\dots, s_r^0 + j_ra; x^1)}
\,=\,\tilde \phi_1(s^0; x^0)
\,\mu_{(s^0; x^0),(s_1^0, s_2^0 + j_2a,\dots, s_r^0 + j_ra; x^1)}
\ee
since \,$\tilde\phi_1(s;x)$ \,is constant on the orbit \,$\mc O_{(s^0;x^0)}$\,.
The lemma is proved.
\end{proof}

\subsection{Integral representation}
\label{sec 2.4}

Assume that $\on{GL}(V)$-valued functions $(L_l(x))$ define a $k$-resonant
$b$-periodic discrete flat connection with step $a$. Assume
that nonzero scalar functions $(\phi_l(s;x), \psi_m(s;x))$ define
a $k$-resonant $b$-periodic
discrete flat connection of rank 1 with step $a$.

Let $w(s;x)$ and $g_{l,m}(s;x)$, $l=1, \dots, r$, $m=1,\dots,n$, be $V$-valued functions
$b$-periodic with respect to variables $s$ and $x$. Assume that these functions satisfy the relations:
\beq
\label{IR}
\psi_m(s;x) w(s;x_m+a) = L_m(x)w(s;x) +
\sum_{l=1}^r \left(\phi_l(s;x)\,g_{l,m}(s_l+a;x) -g_{l,m}(s;x)\right),
\eeq
$m=1,\dots,n$. Such a collection of functions is called an {\it integral
representation} for the discrete flat connection $(L_l(x))$. The function
$w(s;x)$ is called the weight function.

\smallskip
Define a $V$-valued function \,$I : \mc O_{x^0}\to V$, \;$x^1 \mapsto I(x^1)$\,,
by the formula
\begin{align}
\label{ix1}
I(x^1)\,&{}=\, \int_{(s^0;x^0)} w(s;x^1)\, d_a s
\\[4pt]
\notag
&{}=\!\sum_{j_1, \dots, j_r=0}^{k-1}
w(s_1^0+j_1a, \dots, s_r^0+j_ra; x^1)\;\mu_{(s^0; x^0),(s_1^0 + j_1a,
\dots, s_r^0 + j_ra; x^1)}\,.
\end{align}

\begin{thm}
\label{thm sol0}
For \,$x^1\in \mc O_{x^0}$ \,and \,$m=1,\dots, n$\,,
\beq
\label{ili0}
I(x^1_m+a)\,=\,L_m(x^1)\,I(x^1)\,+\,\sum_{l=1}^r\,
\bigl(\tilde\phi_l(s^0;x^0)-1\bigr)\int^{(l)}_{(s^0; x^0)} g_{l,m}(s;x^1)\, d_a s\,.
\eeq
\end{thm}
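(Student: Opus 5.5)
The plan is to integrate the defining relation \eqref{IR} over the discrete cycle at the point $x^1$ and then treat each of the three resulting terms separately. Concretely, we evaluate \eqref{IR} at $(s;x^1)$ and apply the discrete integral $\int_{(s^0;x^0)}\cdot\,d_as$. This is a linear operation, and $L_m(x^1)$ does not depend on $s$, so the integral commutes with it. The first term on the right then becomes $L_m(x^1)\int_{(s^0;x^0)}w(s;x^1)\,d_as=L_m(x^1)I(x^1)$. For the sum over $l$ we invoke Lemma \ref{lem int d} with $g=g_{l,m}$; this is legitimate because $g_{l,m}$ is $b$-periodic in $s$ and $x$. Each summand then becomes $\bigl(\tilde\phi_l(s^0;x^0)-1\bigr)\int^{(l)}_{(s^0;x^0)}g_{l,m}(s;x^1)\,d_as$, which is exactly the correction term in \eqref{ili0}.

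What remains is to show that the integral of the left-hand side, $\int_{(s^0;x^0)}\psi_m(s;x^1)\,w(s;x^1_m+a)\,d_as$, equals $I(x^1_m+a)$. By definition,
\be
I(x^1_m+a)=\sum_{j_1,\dots,j_r=0}^{k-1}w(s^0+ja;x^1_m+a)\,\mu_{(s^0;x^0),(s^0+ja;\,x^1_m+a)} .
\ee
Here the transport coefficient is taken to the shifted point $x^1_m+a$, which still lies in the orbit. So the key identity to verify is
\be
\mu_{(s^0;x^0),(s^0+ja;\,x^1_m+a)}=\mu_{(s^0;x^0),(s^0+ja;\,x^1)}\;\psi_m(s^0+ja;x^1),
\ee
for every multi-index $j=(j_1,\dots,j_r)$. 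This is the composition rule for parallel transport. The transport $\C_{(s^0;x^0)}\leftarrow\C_{(s^0+ja;x^1_m+a)}$ factors through the intermediate fiber $\C_{(s^0+ja;x^1)}$, and the last elementary step, $\C_{(s^0+ja;x^1)}\leftarrow\C_{(s^0+ja;x^1_m+a)}$, is multiplication by $\psi_m(s^0+ja;x^1)$. Path independence of $\mu$ holds because the rank-one flatness relations make the product of elementary factors along a path depend only on its endpoints. That path independence justifies this factorization, giving $\mu_{A,C}=\mu_{A,B}\,\mu_{B,C}$.

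Substituting the identity term by term, the integrated left-hand side is exactly $I(x^1_m+a)$, and \eqref{ili0} follows.

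The only delicate point is this cocycle property of $\mu$ on the orbit. The paper asserts it when it introduces $\mu$ as "uniquely determined", so we use it as given. The one thing to check is that the paper's conventions are consistent: the transport runs from the far fiber to the near one, and the multiplier is the transition function evaluated at the nearer point. Everything else is linearity plus Lemma \ref{lem int d}.
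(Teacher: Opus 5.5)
Your proposal is correct and matches the paper's proof step for step. You integrate each term of \eqref{IR} at $x^1$ over the discrete cycle, pull $L_m(x^1)$ out by linearity, apply Lemma~\ref{lem int d} to each $g_{l,m}$, and identify the integrated left-hand side with $I(x^1_m+a)$ via $\psi_m(s^0+ja;x^1)\,\mu_{(s^0;x^0),(s^0+ja;x^1)}=\mu_{(s^0;x^0),(s^0+ja;x^1_m+a)}$; the paper uses exactly this identity, implicitly, and your justification through path independence of $\mu$ on the lattice orbit, which follows from the rank-one flatness relations, is sound.
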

\begin{proof}
We have relations \eqref{IR} by assumption. To prove relations \eqref{ili0}\,,
we send each term $T(s;x)$ of relations \eqref{IR} to the element
$\int_{(s^0;x^0)}T(s;x^1)\,d_a s\,\in V$. In particular, the left-hand side is sent to
\begin{align*}
& \int_{(s^0;x^0)} \psi_m(s;x^1) \,w(s;x_m^1+a)\, d_a s
\\[3pt]
&{}=\!\sum_{j_1, \dots, j_r=0}^{k-1}w(s_1^0+j_1a, \dots, s_r^0+j_ra; x^1_m+a)
\\
&\hp{{}=\!\sum_{j_1, \dots, j_r=0}}
\times\,\psi_m(s_1^0+j_1a, \dots, s_r^0+j_ra; x^1)\,
\mu_{(s^0; x^0),(s_1^0 + j_1a,\dots,s_r^0 + j_ra; x^1)}
\\[2pt]
&{}=\!\sum_{j_1, \dots, j_r=0}^{k-1} w(s_1^0+j_1a, \dots, s_r^0+j_ra; x^1_m+a)\,
\mu_{(s^0; x^0),(s_1^0 + j_1a,\dots, s_r^0 + j_ra; x_m^1+a)}\,=\,I(x_m^1+a)\,.
\end{align*}
The first term on the right-hand side is sent to
\be
\int_{(s^0;x^0)} L_m(x^1)w(s;x^1)\,d_as\,=\,
L_m(x^1)\int_{(s^0;x^0)} w(s;x^1)\,d_a s\,=\,L_m(x^1)\,I(x^1)\,,
\ee
and the last sum in \eqref{IR} is sent to
\be
\sum_{l=1}^r\,\bigl(\tilde\phi_l(s^0;x^0)-1\bigr)
\int^{(l)}_{(s^0; x^0)} g_{l,m}(s;x^1)\, d_a s
\ee
by Lemma \ref{lem int d}. Theorem \ref{thm sol0} is proved.
\end{proof}

\begin{thm}
\label{thm i sol0}
For \,$x^1\in \mc O_{x^0}$ \,and \;$m=1,\dots, n$\,,
\be
\tilde L_m(x^1)\,I(x^1)\,=\tilde\psi_m(s^0;x^0)\,I(x^1)\,-\>
\sum_{l=1}^r\,\bigl(\tilde\phi_l(s^0;x^0)-1\bigr)\,\tilde g_{l,m}(s^0;x^1)\,,
\ee
where
\beq
\label{glmt}
\tilde g_{l,m}(s;x)\,=\,\sum_{j=0}^{k-1}\,
L_m\bigl(x_m+(k-1)\>a\bigr)\dots\>L_m\bigl(x_m+(j+1)\>a\bigr)\>
\int^{(l)}_{(s^0; x^0)} g_{l,m}(s;x+j\<\>a)\,d_as\,.
\eeq
\end{thm}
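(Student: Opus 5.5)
We will obtain the statement by iterating Theorem~\ref{thm sol0} $k$ times along the $x_m$-direction and then using the $b$-periodicity of the construction to close the loop. Write, for brevity,
\be
E_m(x^1)\,=\,\sum_{l=1}^r\bigl(\tilde\phi_l(s^0;x^0)-1\bigr)\int^{(l)}_{(s^0;x^0)} g_{l,m}(s;x^1)\,d_as\,.
\ee
Theorem~\ref{thm sol0} then reads $I(x^1_m+a)=L_m(x^1)I(x^1)+E_m(x^1)$ for every $x^1\in\mc O_{x^0}$. The coefficients $\tilde\phi_l(s^0;x^0)$ do not depend on $x^1$, since the discrete integral is always based at $(s^0;x^0)$.

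\emph{Step 1 (iteration).} Apply this identity successively at the points $x^1,\ x^1_m+a,\ \dots,\ x^1_m+(k-1)a$, all of which lie in $\mc O_{x^0}$. Induction on $N$ gives
\be
I(x^1_m+Na)\,=\,L_m(x^1_m+(N-1)a)\cdots L_m(x^1)\,I(x^1)\,+\,\sum_{j=0}^{N-1}L_m(x^1_m+(N-1)a)\cdots L_m(x^1_m+(j+1)a)\,E_m(x^1_m+ja)\,.
\ee
The inductive step is simply to multiply the previous identity by $L_m(x^1_m+Na)$ and add $E_m(x^1_m+Na)$. At $N=k$ the operator product is $\tilde L_m(x^1)$ by \eqref{L def}, and the sum is exactly $\tilde g_{l,m}(s^0;x^1)$ from \eqref{glmt}, summed over $l$ with weights $\tilde\phi_l(s^0;x^0)-1$.

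\emph{Step 2 (closing the loop).} This is the step requiring care: we must compare $I(x^1_m+ka)=I(x^1_m+b)$ with $I(x^1)$. Expanding \eqref{ix1}, the weight $w$ is $b$-periodic in $x$, so $w(s;x^1_m+b)=w(s;x^1)$. The transport factor satisfies
\be
\mu_{(s^0;x^0),(s';x^1_m+b)}=\mu_{(s^0;x^0),(s';x^1)}\,\prod_{j=0}^{k-1}\psi_m(s';x^1_m+ja)=\tilde\psi_m(s';x^1)\,\mu_{(s^0;x^0),(s';x^1)}\,.
\ee
Here we use the fact that $\mu$ is path-independent (by the flatness relations) together with the convention that moving $x_m$ forward by $a$ multiplies by $\psi_m$, as in the computation in the proof of Theorem~\ref{thm sol0}. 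By Lemma~\ref{lem 2.1}, $\tilde\psi_m$ is constant on the orbit, so $\tilde\psi_m(s';x^1)=\tilde\psi_m(s^0;x^0)$. Hence $I(x^1_m+b)=\tilde\psi_m(s^0;x^0)\,I(x^1)$.

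\emph{Step 3 (conclusion).} Substituting Step 2 into the $N=k$ identity of Step 1 and moving the inhomogeneous terms to the other side yields the stated formula. The main point to verify carefully is the transport convention, namely whether the factor in Step 2 is $\tilde\psi_m$ or its inverse. This is fixed by the identity $\psi_m(s;x^1)\mu_{\cdot,(s;x^1)}=\mu_{\cdot,(s;x^1_m+a)}$ used in the proof of Theorem~\ref{thm sol0}.
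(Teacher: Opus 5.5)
Your proposal is correct and follows the paper's proof: the paper also iterates Theorem~\ref{thm sol0} $k$ times along the $x_m$-direction (you make this iteration explicit by induction). It then closes the loop using $b$-periodicity of $w$, the relation $\mu_{(s^0;x^0),(s';x^1_m+ka)}=\tilde\psi_m(s';x^1)\,\mu_{(s^0;x^0),(s';x^1)}$, and constancy of $\tilde\psi_m$ on the orbit, exactly as in your Step~2, and your transport convention is the right one.
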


\begin{proof}
By Theorem \ref{thm sol0},
\be
\tilde L_m(x^1)\,I(x^1)\,=\,I(x_m^1+k\<\>a)\,-\>
\sum_{l=1}^r\,\bigl(\tilde\phi_l(s^0;x^0)-1\bigr)\,\tilde g_{l,m}(s^0;x^1)\,.
\ee
since \,$\tilde \phi_l(s;x)$ \,is constant on \,$\mc O_{(s^0;x^0)}$.
On the other hand,
\begin{align*}
I(x^1_m+ ka)\,&{}=\!
\sum_{j_1, \dots, j_r=0}^{k-1} w(s_1^0+j_1a, \dots, s_r^0+j_ra; x^1_m+ka)
\,\mu_{(s^0; x^0),(s_1^0 + j_1a,\dots,s_r^0 + j_ra; x^1_m+ka)}
\\[4pt]
&{}=\!\sum_{j_1, \dots, j_r=0}^{k-1} w(s_1^0+j_1a, \dots, s_r^0+j_ra; x^1+k\<\>a)
\,\mu_{(s^0; x^0),(s_1^0 + j_1a,\dots,s_r^0 + j_ra; x^1_m+ka)}\,,
\end{align*}
We have
\be
w(s_1^0+j_1a, \dots, s_r^0+j_ra; x^1_m+ka)\,=\,
w(s_1^0+j_1a, \dots, s_r^0+j_ra; x^1)
\ee
due to \,$b$-periodicity  of \,$w(s;x)$\,, and
\begin{align*}
\mu_{(s^0; x^0),(s_1^0 + j_1a, \dots, s_r^0 + j_ra; x^1_m+ka)}\,&{}=\,
\mu_{(s^0; x^0),(s_1^0 + j_1a, \dots, s_r^0 + j_ra; x^1_m)}\,
\tilde \psi_m (s_1^0 + j_1a, \dots, s_r^0 + j_ra; x^1_m)
\\[3pt]
&{}=\,\mu_{(s^0; x^0),(s_1^0 + j_1a, \dots, s_r^0 + j_ra; x^1_m)}
\tilde \psi_m (s^0; x^0)
\end{align*}
since \,$\tilde \psi_m(s;x)$ \,is constant on \,$\mc O_{(s^0;x^0)}$. Thus
\be
I(x^1_m+ ka)\,=\,\tilde \psi_m (s^0; x^0)\,I(x^1)\,.
\ee
The theorem is proved.
\end{proof}

\subsection{Bethe ansatz equations}

The Bethe ansatz equations are the system of equations
\beq
\label{bae}
\tilde \phi_l(s;x) = 1, \qquad l=1,\dots,r.
\eeq
A solution $(s^0;x^0)$ is a point where the values of
the monodromy functions in the $s$-directions equal \,$1$\,.
If $(s^0; x^0)$ is a solution to the Bethe ansatz equations, then all points of
the orbit $\mc O_{(s^0;x^0)}$ are also solutions to the Bethe ansatz equations
by Lemma \ref{lem 2.1}.

\subsection{Flat sections}

Let $(s^0;x^0)$ be a solution to the Bethe ansatz equations \eqref{bae}\,.

\begin{thm}
\label{thm sol}
The $V$-valued function \,$I$ defined by \>\eqref{ix1} is a flat section over
\;$\mc O(x^0)$ of the discrete flat connection \;$(L_m(x))$. Namely,
\beq
\label{ili}
I(x^1_m+a)\,=\,L_m(x^1)\,I(x^1)\,,\qquad m=1,\dots,n\,.
\eeq
\end{thm}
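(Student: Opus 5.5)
This is immediate from Theorem \ref{thm sol0}, applied with the Bethe ansatz hypothesis. For any $x^1\in\mc O_{x^0}$ and any $m=1,\dots,n$, Theorem \ref{thm sol0} gives
\be
I(x^1_m+a)\,=\,L_m(x^1)\,I(x^1)\,+\,\sum_{l=1}^r\,
\bigl(\tilde\phi_l(s^0;x^0)-1\bigr)\int^{(l)}_{(s^0; x^0)} g_{l,m}(s;x^1)\, d_a s\,.
\ee
Since $(s^0;x^0)$ solves the Bethe ansatz equations \eqref{bae}, each coefficient $\tilde\phi_l(s^0;x^0)-1$ vanishes. The sum therefore drops out, and what remains is exactly \eqref{ili}.

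The only point worth checking is that the Bethe ansatz equations are used at the base point $(s^0;x^0)$ itself, not at $(s^0;x^1)$ or at shifted points. In the statement of Theorem \ref{thm sol0}, the factor $\tilde\phi_l$ is already evaluated at $(s^0;x^0)$. This is because in the proof of Lemma \ref{lem int d} the wrap-around factor $\mu_{(s^0;x^0),(s_l^0+ka,\dots;x^1)}$ is reduced to $\tilde\phi_l(s^0;x^0)$, using that $\tilde\phi_l$ is constant on the orbit $\mc O_{(s^0;x^0)}$ (Lemma \ref{lem 2.1}). So the hypothesis at the single point $(s^0;x^0)$ suffices, and no further condition on $x^1$ is needed.

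Two remarks complete the picture. First, $I$ is well defined on the whole orbit $\mc O_{x^0}$. The transports $\mu$ are path-independent by the rank-one flatness relations, so the right-hand side of \eqref{ili} and the left-hand side $I(x^1_m+a)$, defined via \eqref{ix1} at the point $x^1_m+a\in\mc O_{x^0}$, are the objects that Theorem \ref{thm sol0} compares. Second, the $b$-periodicity of the $g_{l,m}$ is exactly what Lemma \ref{lem int d} required, and it holds by the assumptions of Section \ref{sec 2.4}. There is no genuine obstacle here: all the work was done in Lemma \ref{lem int d} (the discrete Stokes formula) and Theorem \ref{thm sol0}.
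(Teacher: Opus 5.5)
Your proposal is correct and is exactly the paper's argument: Theorem \ref{thm sol0} gives \eqref{ili0}, and the Bethe ansatz equations \eqref{bae} at $(s^0;x^0)$ make every coefficient $\tilde\phi_l(s^0;x^0)-1$ vanish, leaving \eqref{ili}. Your additional remarks, on evaluating $\tilde\phi_l$ at the base point and on the path-independence of $\mu$, are accurate but not needed beyond what Theorem \ref{thm sol0} already provides.
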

\begin{proof}
The statement follows from Theorem \ref{thm sol0} since the sum in
the right-hand side of equality \eqref{ili0} vanishes due to Bethe ansatz equations
\eqref{bae}\,.
\end{proof}

\subsection{Eigensections and eigenvalues}

\begin{thm}
\label{thm i sol}
The flat section \,$I : \mc O_{x^0}\to V$ of Theorem \ref{thm sol}, if nonzero,
is an eigensection of the mono\-dromy functions,
\beq
\label{e so}
\tilde L_m(x^1)\,I(x^1)\,=\,\tilde\psi_m(s^0;x^0)\,I(x^1)
\eeq
for \,$x^1\in\mc O_{x^0}$\,, \,$m=1,\dots,n$\,.
\end{thm}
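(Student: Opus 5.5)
The plan is to derive \eqref{e so} directly from Theorem \ref{thm i sol0}, which is already stated for an arbitrary base point $(s^0;x^0)$. That theorem gives, for every $x^1\in\mc O_{x^0}$ and $m=1,\dots,n$,
\be
\tilde L_m(x^1)\,I(x^1)\,=\,\tilde\psi_m(s^0;x^0)\,I(x^1)\,-\>
\sum_{l=1}^r\,\bigl(\tilde\phi_l(s^0;x^0)-1\bigr)\,\tilde g_{l,m}(s^0;x^1)\,.
\ee
When $(s^0;x^0)$ solves the Bethe ansatz equations \eqref{bae}, each factor $\tilde\phi_l(s^0;x^0)-1$ vanishes. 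The correction sum therefore drops out term by term, and what remains is exactly \eqref{e so}. The eigenvalue $\tilde\psi_m(s^0;x^0)$ does not depend on the point $x^1$ of the orbit, because by Lemma \ref{lem 2.1} the monodromy function $\tilde\psi_m$ is constant on $\mc O_{(s^0;x^0)}$.

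As a cross-check that does not go through the auxiliary vectors $\tilde g_{l,m}$, I would also argue directly. By Theorem \ref{thm sol}, $I$ is a flat section over $\mc O_{x^0}$. Iterating \eqref{ili} $k$ times in the $m$-th direction gives
\be
I(x^1_m+ka)=L_m(x^1_m+(k-1)a)\cdots L_m(x^1)\,I(x^1)=\tilde L_m(x^1)\,I(x^1).
\ee
On the other hand, the computation at the end of the proof of Theorem \ref{thm i sol0} uses only the $b$-periodicity of $w$, where $b=ka$, together with the constancy of $\tilde\psi_m$ on the orbit. It shows that $I(x^1_m+ka)=\tilde\psi_m(s^0;x^0)\,I(x^1)$. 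Comparing the two expressions for $I(x^1_m+ka)$ yields \eqref{e so}.

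There is essentially no obstacle here. The one point to check carefully is the bookkeeping of the transport coefficients $\mu$ under the shift $x^1_m\mapsto x^1_m+ka$: they must pick up precisely the factor $\tilde\psi_m$ evaluated at points of the orbit. This follows from the flatness of the rank-one connection, since the transport is path independent, together with Lemma \ref{lem 2.1}. The hypothesis that $I$ is nonzero is needed only so that the word ``eigensection'' is meaningful; the identity \eqref{e so} holds regardless.
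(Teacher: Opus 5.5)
Your proposal is correct and matches the paper's proof, which likewise obtains \eqref{e so} from Theorem~\ref{thm i sol0} by noting that the Bethe ansatz equations \eqref{bae} make every factor $\tilde\phi_l(s^0;x^0)-1$ vanish. Your direct cross-check is the same argument as the proof of Theorem~\ref{thm i sol0}, only with the Bethe equations imposed first through Theorem~\ref{thm sol}.
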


Recall that \,$ \tilde \psi_m(s^0;x^0)$ \,is the value at $(s^0;x^0)$ of
the scalar monodromy function \,$\tilde \psi_m(s;x)$ \,of the rank \,$1$
\,discrete local system, and the function \,$\tilde \psi_m(s;x)$ \,is constant
on the orbit \,$\mc O_{(s^0;x^0)}$\,.
\begin{proof}
The statement follows from Theorem \ref{thm i sol0} and Bethe ansatz equations
\eqref{bae}\,.
\end{proof}

\begin{cor}
\label{cor eigen}
Let $(s^0;x^0)$ be a solution to the Bethe ansatz equations.
Then the vector
\begin{align}
\notag
\\[-14pt]
\label{ix0}
I(x^0)\,&{}=\,\int_{(s^0;x^0)} w(s;x^0) d_a s
\\[4pt]
\notag
&{}=\!\sum_{j_1, \dots, j_r=0}^{k-1}
w(s_1^0+j_1a, \dots, s_r^0+j_ra; x^0)\;
\mu_{(s^0; x^0),(s_1^0 + j_1a,\dots,s_r^0 + j_ra; x^0)}\,,
\end{align}
if nonzero, is an eigenvector of the monodromy operators
$(\tilde L_m(x^0))$ with respective eigenvalues $(\tilde \psi_m(s^0;x^0))$.
The numbers $\mu_{(s^0;x^0),(s_1^0 + j_1a, \dots,s_r^0 + j_ra; x^0)}$
are given in \eqref{pt}.
\end{cor}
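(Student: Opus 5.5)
The plan is to specialize Theorem \ref{thm i sol} to the base point of the orbit. Since $x^0\in\mc O_{x^0}$ (the zero shift), Theorem \ref{thm i sol} applies with $x^1=x^0$. It gives
\be
\tilde L_m(x^0)\,I(x^0)\,=\,\tilde\psi_m(s^0;x^0)\,I(x^0),\qquad m=1,\dots,n.
\ee
Here $I(x^0)$ is defined by \eqref{ix1} with $x^1=x^0$, which is exactly formula \eqref{ix0}. So whenever $I(x^0)\neq0$, it is a common eigenvector of the commuting operators $\tilde L_m(x^0)$ with the stated eigenvalues.

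For completeness I would also trace back where the Bethe ansatz equations enter, via Theorem \ref{thm i sol0}. The correction term there is
\be
\sum_{l=1}^r\bigl(\tilde\phi_l(s^0;x^0)-1\bigr)\,\tilde g_{l,m}(s^0;x^0),
\ee
and it vanishes because $\tilde\phi_l(s^0;x^0)=1$ by \eqref{bae}. The other ingredient is the identity $I(x^0_m+ka)=\tilde\psi_m(s^0;x^0)\,I(x^0)$. This follows from the $b$-periodicity of $w$ together with the cocycle property of $\mu$: moving from $x^0$ to $x^0_m+ka$ multiplies each transport factor by $\tilde\psi_m$, which is constant on the orbit by Lemma \ref{lem 2.1}. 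Both facts were already established in the proof of Theorem \ref{thm i sol0}, so nothing new is needed.

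It remains to identify the coefficients $\mu_{(s^0;x^0),(s^0+ja;x^0)}$ with the explicit product \eqref{pt}. I would do this by choosing the path that shifts $s_1$ a total of $j_1$ times, then $s_2$ a total of $j_2$ times, and so on, composing the elementary transports $\phi_i$ along the way. Flatness of the rank one connection makes the result independent of the chosen path, so this gives \eqref{pt}.

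There is no real obstacle, since the statement is a direct specialization. The only point that needs care is the consistency of the definition of $\mu$ on the orbit, i.e.\ path independence. That is guaranteed by the rank one flatness relations, which were assumed when $\mu$ was introduced.
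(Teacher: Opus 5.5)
Your proposal is correct and is essentially the paper's argument: the corollary is Theorem~\ref{thm i sol} evaluated at the base point $x^1=x^0$ of the orbit, where the Bethe ansatz equations remove the correction term of Theorem~\ref{thm i sol0}. The coefficients are read off from \eqref{pt} by transporting along the path that shifts $s_1,\dots,s_r$ in order, which is well defined by rank-one flatness.
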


\subsection{Independence on $(s^0; x^0)$}

We began with an orbit $\mc O_{(s^0;x^0)}$ consisting of
solutions to the Bethe ansatz
equations.
Using an integral representation with a weight function $w(s;x)$, we defined
a function
$ I_{(s^0; x^0)}: \mc O_{x^0} \to V$,
\bea
I_{(s^0;x^0)}(x^1):= I(x^1) =\int_{(s^0;x^0)} w(s;x^1) d_as\,,
\eea
via formula \eqref{ix1}. We demonstrated that the function
$ I_{(s^0;x^0)}$, if nonzero,  is a flat eigensection
of the monodromy operators of the discrete connection $(L_m(x))$.

\medskip

In a similar way, for any $(\bar s; \bar x)\in \mc O_{(s^0;x^0)}$,
we can define a function
$ I_{(\bar s; \bar x)}: \mc O_{x^0} \to V$,
\bea
I_{(\bar s; \bar x)}(x^1)=\int_{(\bar s; \bar x)} w(s;x^1) d_as\,.
\eea
By Theorems \ref{thm sol} and \ref{thm i sol}, the function
$I_{(\bar s;\bar x)}$, if nonzero, is also a flat eigensection of the monodromy operators with the same eigenvalues as
$ I_{(s^0;x^0)}$.

\begin{thm}
\label{thm ind}
The functions \,$I_{(s^0;x^0)}$ and \,$I_{(\bar s;\bar x)}$
are proportional with the nonzero  coefficient of proportionality $\mu_{(s^0;x^0),(\bar s;
\bar x)}$. More precisely,
\bean
\label{ind s}
I_{(s^0;x^0)} (x^1) = \mu_{(s^0;x^0),(\bar s; \bar x)} I_{(\bar s;\bar x)}(x^1),
\qquad x^1\in \mc O_{x^0}.
\eean

\end{thm}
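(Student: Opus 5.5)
The plan is to reduce to a single elementary shift and then iterate. Any point $(\bar s;\bar x)\in\mc O_{(s^0;x^0)}$ is reached from $(s^0;x^0)$ by a finite sequence of steps $s_i\mapsto s_i\pm a$ and $x_m\mapsto x_m\pm a$. The transport coefficients are multiplicative, $\mu_{P,Q}\,\mu_{Q,R}=\mu_{P,R}$, by flatness and uniqueness of transport. Moreover every point of the orbit again solves the Bethe ansatz equations, by Lemma \ref{lem 2.1}. Hence it suffices to prove \eqref{ind s} when $(\bar s;\bar x)$ differs from $(s^0;x^0)$ by one elementary step, and then compose.

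\emph{An $x$-shift.} Let $\bar s=s^0$ and $\bar x=x^0_m+a$. Since $\mc O_{\bar x}=\mc O_{x^0}$, both functions live on the same orbit. Each term of $I_{(\bar s;\bar x)}(x^1)$ uses $\mu_{(\bar s;\bar x),(s^0+ja;x^1)}$. Multiplicativity gives
\[
\mu_{(s^0;x^0),(\bar s;\bar x)}\,\mu_{(\bar s;\bar x),(s^0+ja;x^1)}=\mu_{(s^0;x^0),(s^0+ja;x^1)}
\]
termwise. The summation lattice $\{s^0+ja\}$ is the same for both functions, so the identity follows at once.

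\emph{An $s$-shift.} Let $\bar s=s^0_l+a$ and $\bar x=x^0$. Again multiplicativity converts the coefficients:
\[
\mu_{(s^0;x^0),(\bar s;x^0)}\,I_{(\bar s;x^0)}(x^1)=\sum_{j} w(s^0+ja+a e_l;x^1)\,\mu_{(s^0;x^0),(s^0+ja+ae_l;x^1)}.
\]
Here $j_l$ runs over $0,\dots,k-1$, so $j_l+1$ runs over $1,\dots,k$. This is the sum defining $I_{(s^0;x^0)}(x^1)$, except that the terms with $j_l+1=0$ are replaced by those with $j_l+1=k$. Compare these two sets of terms. We have $w(s_l+ka)=w(s)$ by $b$-periodicity, since $b=ka$. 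Also
\[
\mu_{(s^0;x^0),(s_l+ka;x^1)}=\mu_{(s^0;x^0),(s;x^1)}\,\tilde\phi_l(s;x^1)=\mu_{(s^0;x^0),(s;x^1)},
\]
because $\tilde\phi_l$ is constant on the orbit and equals $1$ by the Bethe ansatz equations. So the replaced terms coincide with the original ones. This is exactly the boundary term of Lemma \ref{lem int d}, and it vanishes.

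Negative steps follow by inverting the relation, and the constant is nonzero because all $\phi_i,\psi_m$ are nonzero.

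The only delicate point is bookkeeping. One must confirm that multiplicativity of $\mu$ holds along arbitrary paths, including paths that leave the box $\{0,\dots,k-1\}^r$; this is well-definedness of transport on the orbit, which follows from flatness. One must also check that the Bethe ansatz condition is used precisely to absorb the wrap-around factor $\tilde\phi_l$. The rest is routine reindexing.
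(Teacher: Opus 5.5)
Your proof is correct and follows the paper's argument. You reduce to a single shift by $a$ in one coordinate, dispose of the $x$-shift directly, and for the $s$-shift match the terms of the two sums, handling the wrap-around terms ($j_l=0$ versus $j_l+1=k$) by $b$-periodicity of $w$ together with $\tilde\phi_l(s^0;x^0)=1$. Your write-up also makes explicit that the Bethe ansatz equations are what make the transport coefficient invariant under the $ka$ shift; the paper's proof folds this into the phrase ``due to $b$-periodicity''.
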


\begin{proof}
It is enough to prove the theorem when $(s^0;x^0)$ and $(\bar s; \bar x)$ differ by $a$ at one of coordinates. If $\bar s = s^0$ and $\bar x = (x^0_l+a)$, then clearly
\bea
I_{(s^0;x^0)} (x^1) = \psi_l(s^0; x^0) I_{(s^0;x^0_l+a)}(x^1),
\qquad x^1\in\mc O_{x^0}.
\eea
If $\bar s = (s^0_l+a)$ and $\bar x = x^0$, we need to prove that
\bean
\label{any m}
I_{(s^0;x^0)} (x^1) = \phi_m(s^0; x^0) I_{(s^0_m+a;x^0)}(x^1).
\eean
We prove this statement for $m=1$.
The proof for other $m$ is similar.
We have:
\bea
I_{(s^0;x^0)}(x^1)
&=&
\sum_{i_1, \dots, i_r=0}^{k-1}
w(s_1^0+i_1a, \dots, s_r^0+i_ra; x^1)\ \mu_{(s^0; x^0),(s_1^0 + i_1a,
\dots,
s_r^0 + i_ra; x^1)}\,,
\\
I_{(s^0_1+a; x^0)}(x^1)
&=&
\sum_{j_1, \dots, j_r=0}^{k-1}
w(s_1^0+(j_1+1)a, \dots, s_r^0+j_ra; x^1)
\\
&\times&
\mu_{(s^0_1+a; x^0),(s_1^0 + (j_1+1)a,
\dots,
s_r^0 + j_ra; x^1)}\,,
\eea
If $i_1>0$, then a term
\bea
T_1 = w(s_1^0+i_1a, \dots, s_r^0+i_ra; x^1)\,
\mu_{(s^0; x^0),(s_1^0 + i_1a,\dots, s_r^0 + i_ra; x^1)}
\eea
in the first sum has a related term
\bea
T_2=
w(s_1^0+i_1a, \dots, s_r^0+i_ra; x^1)\,
\mu_{(s^0_1+a; x^0),( s_1^0 + i_1a,\dots, s_r^0 + i_ra; x^1)}\,.
\eea
Clearly, $T_1=\phi_1(s^0;x^0) \,T_2$\,.

If $i_1=0$, then a term
\bea
T_1=w(s_1^0, s_2^0+i_2a, \dots, s_r^0+i_ra; x^1)
\mu_{(s^0; x^0),(s^0_1,s_2^0 + i_2a,
\dots,
s_r^0 + i_ra; x^1)}\,,
\eea
in the first sum has a related term
\bea
T_2=w(s_1^0+ka, s_2^0+i_2a, \dots, s_r^0+i_ra; x^1)
\mu_{(s^0_1+a; x^0),(s_1^0 +ka, s^0_2+i_2 a,
\dots,
s_r^0 + i_ra; x^1)}\,,
\eea
in the second sum. Due to $b$-periodicity,
\bea
T_2=w(s_1^0, s_2^0+i_2a, \dots, s_r^0+i_ra; x^1)\,
\mu_{(s^0_1+a; x^0),(s_1^0 , s^0_2+i_2 a,
\dots,
s_r^0 + i_ra; x^1)}\,.
\eea
Hence $T_1 = \phi_1(s^0;x^0) \,T_2$.

The correspondence $T_1\to T_2$ identifies the summands in the first and second sums and proves
the proportionality
\eqref{any m} for $m=1$.
Theorem \ref{thm ind} is proved.
\end{proof}

\subsection{Cyclotomic Hamiltonians}
\label{sec cycl}
Suppose that the connections \,$\bigl(L_m(x)\bigr)$\,,  and
\\
\,$\bigl(\phi_l(s;x)\,,\psi_m(s;x)\bigr)$ \,and the functions \,$w(s;x)$\,, \,$g_{l,m}(s;x)$
depend on an additional parameter \,$\eps$ \,and are regular at \,$\eps=0$\,.
More precisely, assume the following asymptotics as \,$\eps\to 0$\,:
\begin{align}
\notag
\\[-16pt]
\label{eps to 0}
L_m(x)\,&{}=\,L_m^\0(x)\>\bigl(1+\eps\>X_m(x)+o(\eps)\bigr)\,,
\\[2pt]
\notag
\phi_l(s;x)\,&{}=\,\phi_l^\0(s;x)\>\bigl(1+\eps\>\xi_l(s;x)+o(\eps)\bigr)\,,
\\[2pt]
\notag
\psi_m(s;x)\,&{}=\,\psi_m^\0(s;x)\>\bigl(1+\eps\>\zeta_m(s;x)+o(\eps)\bigr)\,,
\\[-30pt]
\notag
\end{align}
\be
w(s;x)\,=\,w^\0(s;x)+o(1)\,,\qquad g_{l,m}(s;x)\,=\,g_{l,m}^\0(s;x)+o(1)\,.
\ee
These asymptotics imply that
\be
\mu_{(s^0;x^0),(s^1;x^1)}\,=\,\mu^\0_{(s^0;x^0),(s^1;x^1)}+o(1)
\vvn-.8>
\ee
and
\begin{align}
\notag
\\[-22pt]
\label{ix10}
I(x^1)\,&{}=\,I^\0(x^1)+o(1)\,,
\\[4pt]
\notag
I^\0(x^1)\,&{}=\!\sum_{j_1, \dots, j_r=0}^{k-1}
w^\0(s_1^0+j_1a, \dots, s_r^0+j_ra; x^1)\;
\mu^\0_{(s^0; x^0),(s_1^0 + j_1a,\dots, s_r^0 + j_ra; x^1)}\,.
\end{align}

Assume also that the monodromy functions of the connections \,$\bigl(L^\0_m(x)\bigr)$ \,and
\\
\,$\bigl(\phi_l^\0(s;x)\,,\psi_m^\0(s;x)\bigr)$ \,are identities,
\begin{gather*}
\\[-24pt]
L^\0_m\bigl(x_m+(k-1)\>a\bigr)\dots\<\>L_m^\0(x_m+a)L_m^\0(x)\,=\,1\,,
\\[3pt]
\prod_{j=0}^{k-1}\,\phi_l^\0(s_l+j\<\>a;x)\,=\,1\,,\qquad
\prod_{j=0}^{k-1}\,\psi_m^\0(s;x_m+j\<\>a)\,=\,1\,.
\end{gather*}
Then the asymptotics of the monodromy functions as \,$\eps\to0$ \,are
\begin{gather}
\label{Xt def}
\tilde L_m(x)\,=\,1+\eps\>\tilde X_m(x)+o(\eps)\,,
\\[3pt]
\notag
\tilde\phi_l(s;x)\,=\,1+\eps\>\tilde\xi_l(s;x)+o(\eps)\,,\qquad
\tilde\psi_m(s;x)\,=\,1+\eps\>\tilde\zeta_m(s;x)+o(\eps)\,,
\end{gather}
where
\begin{align}
\notag
\\[-30pt]
\label{tilde X}
\tilde X_m(x)\,&{}=\,\sum_{j=0}^{k-1}\,
\bigl(L_m^{(j)}(x)\bigr)^{-1}X_m(x_m+j\<\>a)\,L_m^{(j)}(x)\,,
\\[4pt]
\notag
L_m^{(j)}(x)\,&{}=\,L_m^\0\bigl(x_m+(j-1)\>a\bigr)\dots\<\>L_m^\0(x_m+a)\>L_m^\0(x_m)\,,
\\[4pt]
\notag
\tilde\xi_l(s;x)\,&{}=\,\sum_{j=0}^{k-1}\,\xi_l(s_l+j\<\>a;x)\,,\qquad
\tilde\zeta_m(s;x)\,=\,\sum_{j=0}^{k-1}\,\zeta_m(s;x_m+j\<\>a)\,.
\end{align}
Notice that the functions \,$\tilde\xi_l(s;x)$ \,and \,$\tilde\zeta_m(s;x)$
\,are \,$a$-periodic with respect to the variables \,$s$ \,and \,$x$\,.

\smallskip
We call matrices \,$\tilde X_1(x),\dots\tilde, X_n(x)$ \,the {\it cyclotomic Hamiltonians\/}.
They pairwise commute by equality \eqref{LL=LL} and asymptotics \eqref{Xt def},
\vvn.4>
\be
\tilde X_l(x)\,\tilde X_m(x)\,=\,\tilde X_m(x)\,\tilde X_l(x)\,,\qquad
1\le l,m\le n\,.
\ee

\medskip
Taking the limit \,$\eps\to 0$ \,in Theorems \ref{thm sol0} and
\ref{thm i sol0} gives respectively Theorems \ref{thm sol00} and
\ref{thm i sol00} below.

\begin{thm}
\label{thm sol00}
For \,$x^1\in \mc O_{x^0}$ \,and \;$m=1,\dots, n$\,,
\vvn.4>
\be
I^\0(x^1_m+a)\,=\,L_m^\0(x^1)\,I^\0(x^1)\,.
\ee
\end{thm}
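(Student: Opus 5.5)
We obtain Theorem \ref{thm sol00} as the $\eps\to0$ limit of Theorem \ref{thm sol0}, applied at an arbitrary base point $(s^0;x^0)$. No Bethe ansatz condition is needed, because the correction term disappears automatically in the limit. Fix $x^1\in\mc O_{x^0}$ and $m$. Formula \eqref{ili0} gives
\be
I(x^1_m+a)\,=\,L_m(x^1)\,I(x^1)\,+\,\sum_{l=1}^r\bigl(\tilde\phi_l(s^0;x^0)-1\bigr)\int^{(l)}_{(s^0;x^0)} g_{l,m}(s;x^1)\,d_as\,.
\ee
We examine each term as $\eps\to0$.

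\emph{The two $I$-terms.} The transport coefficients $\mu_{(s^0;x^0),(s^1;x^1)}$ are finite products of the $\phi_i,\psi_j$ along a lattice path, as in \eqref{pt}. These factors are regular at $\eps=0$ with limits $\phi^\0_i,\psi^\0_j$, so $\mu\to\mu^\0$. The weight function satisfies $w\to w^\0$. Hence \eqref{ix10} gives $I(x^1)\to I^\0(x^1)$, and likewise $I(x^1_m+a)\to I^\0(x^1_m+a)$; the latter point also lies in $\mc O_{x^0}$, so the same finite-sum formula applies. The asymptotics \eqref{eps to 0} give $L_m(x^1)\to L_m^\0(x^1)$. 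So the left side and the first term on the right converge to $I^\0(x^1_m+a)$ and $L^\0_m(x^1)I^\0(x^1)$.

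\emph{The correction term.} Each restricted integral $\int^{(l)}_{(s^0;x^0)} g_{l,m}(s;x^1)\,d_as$ is a finite sum of $g_{l,m}$ times $\mu$. It therefore converges to the analogous expression with $g^\0_{l,m}$ and $\mu^\0$, and in particular stays bounded. Meanwhile $\tilde\phi_l(s^0;x^0)=\prod_j\phi_l(s^0_l+ja;x^0)$ tends to $\prod_j\phi^\0_l(s^0_l+ja;x^0)=1$ by the assumption that the rank-one monodromy is trivial at $\eps=0$. More precisely, $\tilde\phi_l-1=\eps\,\tilde\xi_l+o(\eps)$. So each summand of the correction is $O(\eps)$ and vanishes in the limit. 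Letting $\eps\to0$ yields $I^\0(x^1_m+a)=L^\0_m(x^1)I^\0(x^1)$.

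The only point needing care is to make sure no hidden denominators occur in the limits. Specifically, $\mu$ must be built from $\phi$'s and $\psi$'s themselves and not from their inverses, unless the path goes backwards. Paths from $(s^0;x^0)$ to the points appearing in the sums can always be chosen with nonnegative shifts, and when inverses are needed, their limits are nonzero because $\phi^\0_l,\psi^\0_m$ are nonzero. The latter follows from the product identities, for example $\prod_j\phi^\0_l=1$. With that checked, the argument is a direct term-by-term limit.
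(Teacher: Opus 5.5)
Your proposal is correct and follows the paper's own route: the paper obtains Theorem~\ref{thm sol00} by letting $\eps\to0$ in Theorem~\ref{thm sol0}. There the correction term vanishes because $\tilde\phi_l(s^0;x^0)-1=\eps\,\tilde\xi_l(s^0;x^0)+o(\eps)$ while the restricted discrete integrals of $g_{l,m}$ stay bounded, so no Bethe ansatz condition is needed. Your extra check that the transport coefficients $\mu$ converge, including when inverse factors appear for backward shifts in $x$, spells out what the paper states without comment.
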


\smallskip
\begin{thm}
\label{thm i sol00}
For \,$x^1\in \mc O_{x^0}$ \,and \;$m=1,\dots, n$\,,
\be
\tilde X_m(x^1)\,I^\0(x^1)\,=\,\tilde\zeta_m(s^0;x^0)\,I^\0(x^1)\,-\>
\sum_{l=1}^r\,\tilde\xi_l(s^0;x^0)\,\tilde g_{l,m}^{\>\0}(s^0;x^1)\,,
\vv.1>
\ee
where
\;$\tilde g_{l,m}^{\>\0}(s;x)=\>\lim_{\>\eps\to0}\>\tilde g_{l,m}(s;x)$ \>and 
the functions \,$\tilde g_{l,m}(s;x)$ are defined by \eqref{glmt}.
\end{thm}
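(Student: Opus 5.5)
The plan is to take the first-order term in $\eps$ of the identity in Theorem \ref{thm i sol0}. That theorem gives, for each $\eps$,
\be
\tilde L_m(x^1)\,I(x^1)\,=\,\tilde\psi_m(s^0;x^0)\,I(x^1)\,-\>
\sum_{l=1}^r\,\bigl(\tilde\phi_l(s^0;x^0)-1\bigr)\,\tilde g_{l,m}(s^0;x^1)\,.
\ee
We substitute the asymptotics \eqref{Xt def}, namely $\tilde L_m=1+\eps\tilde X_m+o(\eps)$, $\tilde\psi_m=1+\eps\tilde\zeta_m+o(\eps)$ and $\tilde\phi_l-1=\eps\tilde\xi_l+o(\eps)$. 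We also use $I=I^\0+o(1)$ from \eqref{ix10} and $\tilde g_{l,m}=\tilde g^{\>\0}_{l,m}+o(1)$.

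The zeroth-order terms $I(x^1)$ on the two sides cancel exactly, since both carry the coefficient $1$. Subtracting $I(x^1)$ from both sides and dividing by $\eps$ gives
\be
\tilde X_m(x^1)\,I(x^1)+o(1)\,=\,\tilde\zeta_m(s^0;x^0)\,I(x^1)-\sum_{l}\tilde\xi_l(s^0;x^0)\,\tilde g_{l,m}(s^0;x^1)+o(1)\,.
\ee
Letting $\eps\to0$ then yields the statement.

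Before this step we must confirm that the expansions \eqref{Xt def} hold with the stated formulas. For $\tilde L_m$, expand the product $\prod_j L_m^\0(x_m+ja)\bigl(1+\eps X_m(x_m+ja)\bigr)$ to first order and use that the full product of the $L_m^\0$ equals $1$. Each $\eps$-term then takes the form
\be
L_m^\0(x_m+(k-1)a)\cdots L_m^\0(x_m+ja)\,X_m(x_m+ja)\,L_m^{(j)}(x)\,.
\ee
The left factor is $(L_m^{(j)}(x))^{-1}$, by the identity $L_m^\0(x_m+(k-1)a)\cdots L_m^\0(x_m)=1$. This recovers formula \eqref{tilde X}. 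The scalar cases $\tilde\xi_l$ and $\tilde\zeta_m$ are the same computation, but commutative.

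The main point needing care is regularity. We must check that $\tilde g_{l,m}(s^0;x^1)$ and $I(x^1)$ have finite limits as $\eps\to0$, so that the $o(\eps)$ remainders stay $o(\eps)$ after multiplication. This holds because $\tilde g_{l,m}$ is a finite sum of products of $L_m$, of values of $g_{l,m}$, and of the transport numbers $\mu$. By the hypotheses all of these are regular at $\eps=0$: the $\mu$ are products of the $\phi_i$ and $\psi_j$, and these are regular and nonzero at $\eps=0$. Hence $\tilde g_{l,m}$ tends to $\tilde g^{\>\0}_{l,m}$, and the limit passage is justified.
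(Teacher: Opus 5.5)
Your proposal is correct and is essentially the paper's own argument: the paper obtains Theorem \ref{thm i sol00} by letting $\eps\to0$ in the first-order (in $\eps$) part of the identity of Theorem \ref{thm i sol0}. Your derivation of the expansion \eqref{tilde X} and your check that $I(x^1)$ and $\tilde g_{l,m}(s^0;x^1)$ have finite limits supply details that the paper leaves implicit.
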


\smallskip
The Bethe ansatz equations for cyclotomic Hamiltonians are the system of equations
\vvn.4>
\beq
\label{bae0}
\tilde \xi_l(s;x) = 0, \qquad l=1,\dots,r.
\vv.4>
\eeq
If $(s^0; x^0)$ is a solution to system \eqref{bae0}, then all points of
the orbit $\mc O_{(s^0;x^0)}$ are also solutions to the Bethe ansatz equations
since the functions \,$\tilde\xi_l(s;x)$ are constant on the orbit \,$\Oc_{(s^0;x^0)}$\,.

\begin{cor}
\label{cor eigen1}
Let $(s^0;x^0)$ be a solution to the Bethe ansatz equations \eqref{bae0}\,.
Then
\vvn.4>
\beq
\label{eXso}
\tilde X_m(x^1)\,I^\0(x^1)\,=\,\tilde\zeta_m(s^0;x^0)\,I^\0(x^1)
\vv.1>
\eeq
for \,$x^1\in\mc O_{x^0}$\,, \,$m=1,\dots,n$\,.
\end{cor}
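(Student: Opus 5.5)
The corollary is an immediate specialization of Theorem \ref{thm i sol00}. Fix a solution $(s^0;x^0)$ of the Bethe ansatz equations \eqref{bae0}, that is, $\tilde\xi_l(s^0;x^0)=0$ for $l=1,\dots,r$. For $x^1\in\mc O_{x^0}$ and $m=1,\dots,n$, Theorem \ref{thm i sol00} gives
\be
\tilde X_m(x^1)\,I^\0(x^1)\,=\,\tilde\zeta_m(s^0;x^0)\,I^\0(x^1)\,-\>\sum_{l=1}^r\,\tilde\xi_l(s^0;x^0)\,\tilde g_{l,m}^{\>\0}(s^0;x^1)\,.
\ee
Each coefficient $\tilde\xi_l(s^0;x^0)$ in the sum vanishes. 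Hence the whole sum is zero, and \eqref{eXso} follows.

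The only point to check is that each $\tilde g^{\>\0}_{l,m}(s^0;x^1)$ is finite, so that multiplying it by zero really gives zero. This holds because $\tilde g_{l,m}$ is defined by \eqref{glmt} as a finite combination of products of the operators $L_m$ with the discrete integrals $\int^{(l)}_{(s^0;x^0)} g_{l,m}(s;x^1+j\,a)\,d_as$. Each of these is a finite sum of values of $g_{l,m}$ weighted by the transport numbers $\mu$. All these ingredients are regular at $\eps=0$ by the assumptions \eqref{eps to 0}, so the limit $\tilde g^{\>\0}_{l,m}$ exists.

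The real content therefore lies in Theorem \ref{thm i sol00}, which we take as given; it comes from expanding Theorem \ref{thm i sol0} to first order in $\eps$. In that expansion, $\tilde L_m = 1+\eps\tilde X_m+o(\eps)$, $\tilde\psi_m=1+\eps\tilde\zeta_m+o(\eps)$ and $\tilde\phi_l-1=\eps\tilde\xi_l+o(\eps)$. The zeroth-order terms cancel, and dividing by $\eps$ and letting $\eps\to0$ gives the theorem. No step here is an obstacle; the corollary is a one-line consequence.
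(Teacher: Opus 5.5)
Your proposal is correct and matches the paper's proof: you invoke Theorem~\ref{thm i sol00} and use $\tilde\xi_l(s^0;x^0)=0$ to make the sum over $l$ vanish. Your added check that $\tilde g^{\>\0}_{l,m}(s^0;x^1)$ is finite is left implicit in the paper (it is defined there as the limit $\eps\to0$), and your argument for it is a correct, harmless supplement.
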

\begin{proof}
The statement follows from Theorem \ref{thm i sol00}.
\end{proof}

\begin{cor}
\label{cor eigen0}
Let $(s^0;x^0)$ be a solution to the Bethe ansatz equations \eqref{bae0}\,.
Then the vector
\vvn.2>
\beq
\label{ix00}
I^\0(x^0)\,=\!\sum_{j_1, \dots, j_r=0}^{k-1}
w^\0(s_1^0+j_1a, \dots, s_r^0+j_ra; x^0)\;
\mu^\0_{(s^0; x^0),(s_1^0 + j_1a,\dots,s_r^0 + j_ra; x^0)}\,,
\eeq
if nonzero, is an eigenvector of the cyclotomic Hamiltonians $(\tilde X_m(x^0))$
with respective eigenvalues 
$(\tilde\zeta_m(s^0;x^0))$\,.
\end{cor}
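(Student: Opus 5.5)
The claim is Corollary~\ref{cor eigen0}: for a solution $(s^0;x^0)$ of \eqref{bae0}, the vector $I^\0(x^0)$ of \eqref{ix00}, if nonzero, satisfies $\tilde X_m(x^0)\,I^\0(x^0)=\tilde\zeta_m(s^0;x^0)\,I^\0(x^0)$ for $m=1,\dots,n$. The plan is to specialize Corollary~\ref{cor eigen1} to the point $x^1=x^0$. That point lies in $\mc O_{x^0}$, since it is the zero shift. Equation \eqref{eXso} at $x^1=x^0$ is then exactly the desired eigenvalue equation. The only remaining requirements are the ones in the statement: $I^\0(x^0)$ must be nonzero, and the sum must match \eqref{ix10} evaluated at $x^1=x^0$. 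The latter holds by definition, since \eqref{ix00} is \eqref{ix10} at $x^1=x^0$.

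Corollary~\ref{cor eigen1} in turn rests on Theorem~\ref{thm i sol00}, so I would also make sure that limit step is sound rather than taking it on faith. I would expand both sides of Theorem~\ref{thm i sol0} to first order in $\eps$. The monodromy $\tilde L_m(x^1)$ is $1+\eps\tilde X_m(x^1)+o(\eps)$ by \eqref{Xt def}. Likewise $\tilde\psi_m(s^0;x^0)=1+\eps\tilde\zeta_m(s^0;x^0)+o(\eps)$ and $\tilde\phi_l(s^0;x^0)-1=\eps\tilde\xi_l(s^0;x^0)+o(\eps)$. Finally $I(x^1)=I^\0(x^1)+o(1)$ and $\tilde g_{l,m}=\tilde g^{\>\0}_{l,m}+o(1)$.

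Substituting, the order-$\eps^0$ terms cancel, leaving $I^\0=I^\0$. Dividing by $\eps$ and letting $\eps\to0$ gives Theorem~\ref{thm i sol00}. One subtlety is that the $o(\eps)$ errors multiply $I(x^1)$, which is only known to be $I^\0+o(1)$. Their product is still $o(\eps)$, because $I$ is bounded as $\eps\to0$, so this is harmless.

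Under \eqref{bae0} the sum over $l$ in Theorem~\ref{thm i sol00} vanishes. This is precisely Corollary~\ref{cor eigen1}, and evaluating it at $x^1=x^0$ finishes the argument.

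The one point I expect to need care is the regularity hypothesis. It must guarantee that $\mu$, $w$, and $g_{l,m}$ have finite limits, so that $I^\0$ and $\tilde g^{\>\0}_{l,m}$ exist. This follows from the assumed asymptotics, because $\mu$ is a finite product of the $\phi_i$ and $\psi_j$ as in \eqref{pt}. The same asymptotics give \eqref{tilde X} from the identity-monodromy assumption on $L_m^\0$, by expanding the product $\prod_j L^\0_m(x_m+ja)(1+\eps X_m(x_m+ja))$ to first order. Given these, the corollary is immediate.
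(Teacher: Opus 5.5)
Your proposal is correct and follows the paper's route. The paper obtains the corollary by specializing Corollary~\ref{cor eigen1} to $x^1=x^0$, and Corollary~\ref{cor eigen1} comes from Theorem~\ref{thm i sol00} (the $\eps\to0$ limit of Theorem~\ref{thm i sol0}) together with the Bethe ansatz equations \eqref{bae0}; your first-order expansion, with the zeroth-order terms $I(x^1)$ cancelling exactly and $I$ bounded as $\eps\to0$, simply spells out the limit step the paper states without detail.
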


\subsection{Periodic solutions}
\label{sec period}
Suppose there is a function \,$F(s;x)$ \,that is \,$b$-periodic with respect
to variables \,$s$ \,and \,$x$\,, and such that
\begin{align}
\notag
\\[-16pt]
\label{master}
\phi_l(s;x)\,&{}=\;\frac{F(s_l+a;x)}{F(s;x)}\;,\qquad l=1,\dots,r,
\\[4pt]
\notag
\psi_m(s;x)\,&{}=\;\frac{F(s;x_m+a)}{F(s;x)}\;,\qquad m=1,\dots,n\,.
\end{align}
The function \,$F(s;x)$ \,is the {\it master function\/} of the rank-one
connection
\\
 \,$\bigl(\phi_l(s;x)\,,\psi_m(s;x)\bigr)$\,.
Equalities \eqref{master} imply that the monodromy functions
\,$\tilde\phi_l(s;x)\,,\tilde\psi_m(s;x)$ \,are identities,
\vvn.4>
\be
\tilde\phi_l(s;x)\>=\>1\,,\quad l=1,\dots,r,\qquad
\tilde\psi_m(s;x)\>=\>1\,,\quad m=1,\dots,n\,.
\vv.4>
\ee
Furthermore, \;$\mu_{(s^0; x^0),(s^1; x^1)}=F(s^1;x^1)/F(s^0;x^0)$ \,and
\vvn.4>
\be
I(x^1)\,=\,J(s^0;x^1)/F(s^0;x^0)\,,
\vv-.5>
\ee
where
\vvn-.5>
\be
J(s;x)\,=\!\sum_{j_1, \dots, j_r=0}^{k-1}
w(s_1+j_1a, \dots, s_r+j_ra; x)\,F(s_1 + j_1a,\dots, s_r + j_ra; x)\,.
\vv.2>
\ee

\begin{thm}
\label{thm solJ}

We have 
\bean
\label{ilib}
J(s;x_m+b)\,
&=&
\,J(s;x)\,,\qquad m=1,\dots,n\,,
\\
\label{iliJ}
J(s;x_m+a)\,
&=&
\,L_m(x)\,J(s;x)\,,\qquad m=1,\dots,n\,,
\\
\label{ILIJ}
J(s_l+a;x)\,
&=& \,J(s;x)\,,\qquad l=1,\dots,r\,.
\eean

\end{thm}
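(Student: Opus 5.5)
Since $w$ and $F$ are $b$-periodic in $s$ and $x$, we can prove the three identities directly from the definition of $J$, together with relation \eqref{IR} and the discrete Stokes argument of Lemma \ref{lem int d}. We treat them in the order \eqref{ilib}, \eqref{ILIJ}, \eqref{iliJ}.

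\emph{Step 1: identity \eqref{ilib}.} Every summand $w(s_1+j_1a,\dots;x)\,F(s_1+j_1a,\dots;x)$ of $J$ is $b$-periodic in each $x_m$. Hence $J(s;x_m+b)=J(s;x)$ holds at once.

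\emph{Step 2: identity \eqref{ILIJ}.} Replacing $s_l$ by $s_l+a$ in $J$ is the same as changing the summation index $j_l$ to $j_l+1$. The summation range $\{0,\dots,k-1\}$ thus becomes $\{1,\dots,k\}$. The term with $j_l=k$ coincides with the term with $j_l=0$, by $b$-periodicity of $wF$ in $s_l$ and the relation $b=ka$. So the sum is unchanged.

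\emph{Step 3: identity \eqref{iliJ}.} This is the main step. We multiply \eqref{IR} by $F(s;x)$ and use \eqref{master}. This gives
\be
F(s;x_m+a)\,w(s;x_m+a)=L_m(x)\,w(s;x)F(s;x)+\sum_{l=1}^r\bigl(F(s_l+a;x)\,g_{l,m}(s_l+a;x)-F(s;x)\,g_{l,m}(s;x)\bigr).
\ee
Now we evaluate at the shifted points $(s_1+j_1a,\dots,s_r+j_ra;x)$ and sum over $j_1,\dots,j_r\in\{0,\dots,k-1\}$.

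The left-hand side becomes $J(s;x_m+a)$, and the first term on the right becomes $L_m(x)J(s;x)$. Each remaining term is a difference $h(s_l+a)-h(s)$ with $h=Fg_{l,m}$. The function $h$ is $b$-periodic in $s$, since $F$ and $g_{l,m}$ are. So the sum over $j_l$ telescopes to $h(\dots,s_l+ka,\dots)-h(\dots,s_l,\dots)=0$. This is the case $\tilde\phi_l=1$ of Lemma \ref{lem int d}.

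The only points needing care are that the summation over the full cube commutes with the telescoping in the single index $j_l$, and that $L_m(x)$ does not depend on $s$ and so factors out of the sum. Alternatively, \eqref{iliJ} follows from Theorem \ref{thm sol}: here $I(x^1)=J(s^0;x^1)/F(s^0;x^0)$, and the Bethe ansatz equations hold identically because $\tilde\phi_l\equiv1$. However, that route gives the identity only on orbits, so we prefer the direct computation, which holds for all $(s;x)$.
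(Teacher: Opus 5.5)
Your proof is correct and takes essentially the paper's route. The paper only says the argument is similar to that of Theorem~\ref{thm sol0}: transfer \eqref{IR} through the discrete integral with weights $\mu=F(s^1;x^1)/F(s^0;x^0)$, then use the discrete Stokes Lemma~\ref{lem int d} with $\tilde\phi_l\equiv1$. That is exactly your Step~3, and your Steps~1 and~2 supply the periodicity and reindexing details the paper leaves implicit.
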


The proof is similar to that of Theorem \ref{thm sol0}.
\smallskip

By Theorem \ref{thm solJ}, the $V$-valued function $J(s;x)$ is a $b$-periodic flat section of the discrete connection
 $(L_m(x))$ for any fixed value of $s$.

\begin{cor}
\label{thm i solJ}

For any fixed value of $s$, the  flat section \,$J(s;x)$ of Theorem \ref{thm solJ} is an eigensection
of the monodromy functions with the eigenvalue \,$1$\,, that is,
\vvn.4>
\beq
\label{e soJ}
\tilde L_m(x)\,J(s;x)\,=\,J(s;x)\,,\qquad m=1,\dots,n\,.
\vv.2>
\eeq
\end{cor}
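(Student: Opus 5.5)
The corollary is the statement that $\tilde L_m(x)J(s;x)=J(s;x)$ for every fixed $s$. I will derive it directly from Theorem \ref{thm solJ}, by iterating the flatness relation \eqref{iliJ} $k$ times in the variable $x_m$ and then closing the loop with the $b$-periodicity \eqref{ilib}, using $b=ka$.

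Fix $s$ and $m$. Apply \eqref{iliJ} at the successively shifted points $x,\ (x_m+a),\dots,(x_m+(k-1)a)$. Each application is legitimate because \eqref{iliJ} holds identically in $x$. By induction on $j$ this gives
\be
J(s;x_m+ja)\,=\,L_m\bigl(x_m+(j-1)a\bigr)\cdots L_m(x_m+a)\,L_m(x)\,J(s;x)\,,\qquad j=1,\dots,k\,.
\ee
For $j=k$ the operator product on the right is exactly $\tilde L_m(x)$ as defined in \eqref{L def}. Hence
\be
J(s;x_m+ka)\,=\,\tilde L_m(x)\,J(s;x)\,.
\ee

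Since the connection is $k$-resonant, $ka=b$ by \eqref{ka=b}. Then \eqref{ilib} gives $J(s;x_m+ka)=J(s;x_m+b)=J(s;x)$, and combining this with the previous display yields \eqref{e soJ}.

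The argument needs no Bethe ansatz equations. The reason is that in the master-function setting $\tilde\phi_l\equiv1$, so the correction terms of Theorem \ref{thm i sol0} vanish automatically, and $\tilde\psi_m\equiv1$ gives the eigenvalue $1$. This is consistent with Theorem \ref{thm i sol}, since $I(x^1)=J(s^0;x^1)/F(s^0;x^0)$.

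The only step needing real care is \eqref{ilib}, which the corollary takes from Theorem \ref{thm solJ}. If I had to justify it myself, I would note that $J(s;x)$ is a finite sum of products $w\cdot F$, and both $w$ and $F$ are $b$-periodic in $x$. The shift $x_m\mapsto x_m+b$ therefore leaves each summand unchanged, with no reindexing needed.
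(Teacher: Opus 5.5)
Your proposal is correct and follows the paper's own proof: you iterate the flatness relation \eqref{iliJ} $k$ times to get $\tilde L_m(x)\,J(s;x)=J(s;x_m+ka)$, then use $ka=b$ and the $b$-periodicity \eqref{ilib}. Your justification of \eqref{ilib} from the $b$-periodicity of $w$ and $F$ is also sound.
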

\begin{proof}
By Theorem \ref{thm solJ},
\vvn.3>
\be
\tilde L_m(x)\,J(s;x)\,=\,J(s;x_m+k\<\>a)\,=\,J(s;x)\,,
\vv.3>
\ee
since \,$J(s;x)$ \,is \,$b$-periodic with respect to variables \,$x$\,.
\end{proof}

\begin{rem}

Given a $k$-resonant $b$-periodic discrete flat connection $(L_m(x))$ with
step $a$, one can construct flat eigensections of the monodromy functions of  this connection whenever
an integral representation of the type described in Section \ref{sec 3} is
available. Such integral representations are known to exist for the systems of
trigonometric qKZ difference equations and for the systems of quantum
difference equations for Nakajima varieties. If the multiplicative step of
a system is a root of unity, then the construction of Section \ref{sec 3}
applies, and the corresponding flat eigensections can be obtained.

To keep the notation minimal, in this paper we focus on applications of the general construction
to the system of trigonometric qKZ difference equations associated with a tensor product of Verma modules over $U_q(\mathfrak{sl}_2)$ with multiplicative root-of-unity step.

\end{rem}

\section{Discrete qKZ connection}
\label{sec 4}

\subsection{Tensor product of Verma modules}

For $q\in\C^\times$, $q\ne 1$, consider the quantum group $U_q(\frak{sl}_2)$ generated
by the elements \,$e,f,q^{\pm h}$ subject to relations
\be
q^hq^{-h}=q^{-h}q^h=1,\quad
q^{\pm h}e=q^{\pm2}e\>q^{\pm h},\quad
q^{\pm h}\<f=q^{\mp2}f\<\>q^{\pm h},\quad
[\<\>e,f\<\>]=\frac{q^h-q^{-h}}{q-q^{-1}}\,.
\ee
Let $M_\La$ denote the Verma module over $U_q(\frak{sl}_2)$
with highest weight $\La\in\C$ and highest weight vector $v_\La$:
\be
e\<\>v_\La=0,\qquad q^{\pm h}v_\La=q^{\pm\La}v_\La.
\ee
A basis of $M_\La$ is formed by the vectors $f^{\<\>r}v_\La$, $r\in\Z_{\geq 0}$.

We have the weight decomposition
\bea
{\tsize\bigotimes}^n_{j=1} M_{{\La_j}} =
{\tsize\bigoplus}_{r=0}^\infty
\left({\tsize\bigotimes}^n_{j=1} M_{{\La_j}}\right)\left[\sum_{j=1}^n\La_j-2r\right].
\eea
The basis of a weight subspace
$V=\left(\ox^n_{j=1} M_{{\La_j}}\right)\left[\sum_{j=1}^n\La_j-2r\right]$
is formed by the vectors
\bean
\label{basis}
f^{\<\>\vec r}:= f^{\<\>r_1}v_{\La_1} \ox\dots\ox f^{\<\>r_n}v_{\La_n}\,,
\eean
where $\vec r=(r_1,\dots,r_n)$, $r_1+\dots+r_n = r$.
The set of such indices $\vec r$ is denoted by $\mc I_r$.

\subsection{Trigonometric $R$-matrix}
\label{R-m}
The trigonometric $R$-matrix $R_{\La_1, \La_2}(x)\in \End M_{\La_1}\ox M_{\La_2}$
is defined by the relations
\begin{align*}
R_{\La_1, \La_2}(x)\>(\<\>f\ox 1+q^h\ox f\<\>)\,&{}=\,
(\<\>f \ox q^h+1\ox f\<\>)\>R_{\La_1, \La_2}(x)\,,
\\[4pt]
R_{\La_1, \La_2}(x)\>(\<\>f\ox q^h+x\<\<\cdot\!1\ox f\<\>)\,&{}=\,
(\<\>f \ox 1+x\<\<\cdot\<q^h\ox f\<\>)\>R_{\La_1, \La_2}(x)\,,
\\[4pt]
R_{\La_1, \La_2}(x)\>v_{\La_1}\ox v_{\La_2}&{}=\,v_{\La_1}\ox v_{\La_2}\,,
\end{align*}
see, for example, \cite{TV2}.\footnote{\strut\,%
The generators of $U_q(\frak{sl}_2)$ here and in \cite{TV2} are related
as follows: $e =q^{-H}E\>,\;f \<\>=Fq^H,\;q^{\pm h}\<\<=q^{\pm2H}$.
The relation of highest weights is
\,$\La_{\mathrm{here}}\<=2\La_{\mathrm{[TV2]}}$\>.}
The $R$-matrix preserves the weight decomposition
of $M_{{\La_1}}\ox M_{{\La_2}}$.

The restrictions of \,$R_{\La_1, \La_2}(x)$ \,and
\,$\bigl(R_{\La_1,\La_2}(x)\bigr)^{-1}$ to the weight subspace
$\bigl(M_{\La_1}\!\ox M_{\La_2}\bigr)\<\>[\<\>\La_1\<+\La_2\<-2\<\>r\<\>]$
have respectively the form
\be
P(x,q^2\<\<,q^{\La_1}\<\<,q^{\La_2})\,
\prod_{s=0}^{r-1}\,(q^{\La_1\<+\La_2}\<-q^{2s}\<x\<\>)^{-1}
\quad\text{and}\quad\,
Q(x,q^2\<\<,q^{\La_1}\<\<,q^{\La_2})\,
\prod_{s=0}^{r-1}\,(q^{\La_1\<+\La_2}x-q^{2s}\<\>)^{-1}\>,
\ee
where \,$P(x,q^2\<\<,q^{\La_1}\<\<,q^{\La_2})$ \,and
\,$Q(x,q^2\<\<,q^{\La_1}\<\<,q^{\La_2})$ \,are polynomials
in \,$x,q^2\<\<,q^{\La_1}\<\<,q^{\La_2}$ with integer coefficients.
See Motivating Example \ref{n=2 p=-1}.

\smallskip

\subsection{Discrete multiplicative qKZ connection}
\label{sec 3.4}

For $p, \ka\in\C^\times$,
we define the
$\End \ox^n_{j=1} M_{{\La_j}}$-valued qKZ functions:
\bea
K_m(z):=K_m(z_1,\dots,z_n, p, q, q^{\La_1},\dots, q^{\La_n}, \ka),
\eea
$m=1,\dots,n$,
by the formula:
\bean
\label{qkz op}
K_m(z)
&=&
R_{{\La_{m}}, {\La_{m-1}}}^{(m,m-1)}\left(\frac{pz_m}{z_{m-1}}\right)
\dots
R_{{\La_{m}}, {\La_{1}}}^{(m,1)}\left(\frac{pz_m}{z_{1}}\right)
\ka^{\La_m - h^{(m)}}
\times
\\
\notag
&\times &
R_{{\La_{m}}, {\La_{n}}}^{(m,n)}\left(\frac{z_m}{z_n}\right) \dots
R_{{\La_{m}}, {\La_{m+1}}}^{(m,m+1)}\left(\frac{z_m}{z_{m+1}}\right).
\eean
Every qKZ function preserves the weight decomposition of
$\bigotimes^n_{j=1} M_{{\La_j}}$\,.

\smallskip
The qKZ functions $(K_m(z))$ define a flat multiplicative discrete connection
on the trivial bundle $\ox^n_{j=1} M_{{\La_j}}\times \C^n\to \C^n$, that is,
\bean
\label{f qKZ}
K_m(pz_l) K_l(z) = K_l(pz_m) K_m(z), \qquad 1\leq l,m\leq n.
\eean
The number $p$ is called the multiplicative step of the discrete qKZ
connection.\footnote{\strut\,%
Here $K_m(pz_l) $ denotes $ K_m(z_1,\dots, z_{l-1}, pz_l,z_{l+1},\dots, z_n)$
and
$K_l(pz_m) $ denotes $ K_l(z_1$, $\dots$, $z_{m-1},$ $pz_m$, $z_{m+1}$, \dots,
$ z_n)$ }

\medskip

An $\ox^n_{j=1} M_{{\La_j}}$-valued function $I(z)$ is a flat section of this connection if
\bean
\label{flat qkz}
I( pz_m) = K_m(z) I(z), \qquad m=1,\dots,n.
\eean
This system of equations is called the trigonometric qKZ equations.

\subsection{Monodromy of resonance}

Let $k>1$ be a positive integer.
The discrete qKZ connection $(K_m(z))$
is $k$-resonant if
the multiplicative step is a primitive root of unity of order $k$, that is,
\bea
p = e^{2\pi i j/k}, \quad 1\leq j<k, \quad (j,k)=1.
\eea
Then the monodromy functions are defined as:\bean
\label{K def}
\tilde K_m(z)
&=&
K_m(p^{k-1}z_m) \dots K_m(pz_m ) K_m(z), \qquad
m=1,\dots, n.
\eean

\begin{lem}

For $1\leq l,m\leq n$, we have:
\bean
\label{Klm=ml}
\tilde K_l(pz_m) K_m(z)
& =&
K_m(z)\tilde K_l(z),
\eean
and the monodromy functions commute,
\bean
\label{KLL=LL}
\tilde K_l(z) \tilde K_m(z) = \tilde K_m(z) \tilde K_l(z),
\qquad 1\leq l,m\leq n.
\eean

\end{lem}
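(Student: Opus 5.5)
The plan is to reduce both identities to the flatness relation \eqref{f qKZ} and to the fact that $p^k=1$, in the same way as Lemma \ref{lem 1.1} was obtained from \eqref{L com} and \eqref{L per}. In fact, the statement is a special case of Lemma \ref{lem 1.1}. Put $z_m=e^{x_m}$ and $p=e^{a}$, and set $L_m(x)=K_m(e^{x_1},\dots,e^{x_n})$. Then \eqref{f qKZ} becomes \eqref{L com}. Since $K_m$ depends on $z$ only through the $z_j$, the functions $L_m$ are $b$-periodic with $b=2\pi i$. Choosing $a=2\pi i j/k$, we have $ka = 2\pi i j$, so $L_m$ is periodic with respect to $ka$ as well. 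Hence the connection is $k$-resonant in the sense of \eqref{ka=b} (with $b$ replaced by $ka$), and $\tilde L_m$ corresponds to $\tilde K_m$. For a self-contained argument, I would instead verify the two identities directly.

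\emph{Step 1: the intertwining relation \eqref{Klm=ml}.} For $l\neq m$, I would move $K_m(z)$ from the right of the product $\tilde K_l(pz_m)=K_l(p^{k-1}z_l,pz_m)\cdots K_l(pz_m)$ to the left, one factor at a time. Flatness \eqref{f qKZ}, evaluated at the point with $z_l$ replaced by $p^{i}z_l$, gives
\[
K_l(p^{i}z_l,pz_m)\,K_m(p^{i}z_l)=K_m(p^{i+1}z_l)\,K_l(p^{i}z_l).
\]
Applying this for $i=0,1,\dots,k-1$ telescopes to
\[
\tilde K_l(pz_m)K_m(z)=K_m(p^kz_l)\tilde K_l(z).
\]
Since $p^k=1$, we get $K_m(p^kz_l)=K_m(z)$, which is the claim. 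For $l=m$, the left side is $K_m(p^kz_m)\cdots K_m(pz_m)K_m(z)$. Grouping the factors as $K_m(p^kz_m)\cdot\tilde K_m(z)$ and using $p^k=1$ gives $K_m(z)\tilde K_m(z)$, so \eqref{Klm=ml} holds in this case too.

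\emph{Step 2: commutativity \eqref{KLL=LL}.} I would iterate \eqref{Klm=ml} along the $k$ factors of $\tilde K_m(z)=K_m(p^{k-1}z_m)\cdots K_m(z)$. Applying \eqref{Klm=ml} at the shifted points $p^{i}z_m$ for $i=0,\dots,k-1$ yields
\[
\tilde K_l(p^kz_m)\,\tilde K_m(z)=\tilde K_m(z)\,\tilde K_l(z).
\]
Using $p^k=1$ once more, the left side is $\tilde K_l(z)\tilde K_m(z)$, which gives \eqref{KLL=LL}.

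The only delicate point is bookkeeping: at each step, one has to check which variable is shifted, so that flatness is applied at the correct shifted point. There is no real obstacle beyond that. Implicitly, one also assumes the $R$-matrices are regular at the points considered, so that all products are defined on the relevant weight subspace.
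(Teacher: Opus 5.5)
Your proposal is correct and follows the same route as the paper, whose proof is the single remark that both identities follow from the flatness relations; the additive analogue is Lemma~\ref{lem 1.1}. Your telescoping of flatness at the points $p^i z_l$, closed up by $p^k=1$, followed by iterating the intertwining relation along the $k$ factors of $\tilde K_m(z)$, is exactly that argument written out in full.
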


\begin{proof}
The proof follows from flatness relations \eqref{f qkz}.
\end{proof}

\begin{cor}
If
$Iz)$ is a flat section, then $\tilde K_l(z) I(z)$, $l=1,\dots,n$, are flat sections.
\end{cor}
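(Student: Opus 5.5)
The plan is to derive both identities from the flatness relations \eqref{f qKZ} by iterating them along the $m$-th (resp.\ $l$-th) direction, exactly as for the additive statement \eqref{lm=ml}, \eqref{LL=LL} of Lemma \ref{lem 1.1}. Write $T_m$ for the shift $z_m\mapsto pz_m$. Then \eqref{f qKZ} reads $(T_lK_m)\,K_l=(T_mK_l)\,K_m$, and $\tilde K_m=(T_m^{k-1}K_m)\cdots(T_mK_m)K_m$. The only place where resonance enters is $p^k=1$, i.e.\ $T_m^k=\mathrm{id}$.

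\textbf{Step 1 (formula \eqref{Klm=ml}).} The case $l=m$ is immediate from $T_m^k=\mathrm{id}$:
\[
(T_m\tilde K_m)\,K_m=(T_m^kK_m)(T_m^{k-1}K_m)\cdots(T_mK_m)K_m=K_m\,\tilde K_m .
\]
For $l\ne m$, I move $K_m$ from the right of $T_m\tilde K_l=\prod_{j=k-1}^{0}T_mT_l^jK_l$ to the left, one factor at a time. Apply $T_l^j$ to flatness in the form $(T_mK_l)K_m=(T_lK_m)K_l$. Since $T_l,T_m$ commute, this gives
\[
(T_mT_l^jK_l)(T_l^jK_m)=(T_l^{j+1}K_m)(T_l^jK_l).
\]
Starting from $j=0$ and moving upward, $K_m=T_l^0K_m$ passes each factor and becomes $T_l^{j+1}K_m$. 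After $k$ steps it becomes $T_l^kK_m=K_m$ on the far left, which leaves $K_m\tilde K_l$. This proves \eqref{Klm=ml}.

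\textbf{Step 2 (commutativity \eqref{KLL=LL}).} Iterate \eqref{Klm=ml} $k$ times in the $m$-direction. Applying $T_m^j$ to \eqref{Klm=ml} gives $(T_m^{j+1}\tilde K_l)(T_m^jK_m)=(T_m^jK_m)(T_m^j\tilde K_l)$. Telescoping over $j=0,\dots,k-1$ yields
\[
(T_m^k\tilde K_l)\,\tilde K_m=\tilde K_m\,\tilde K_l .
\]
Since $T_m^k\tilde K_l=\tilde K_l$ because $p^k=1$, this is \eqref{KLL=LL}.

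\textbf{Corollary.} Take $I$ flat and set $J=\tilde K_lI$. Then
\[
J(pz_m)=\tilde K_l(pz_m)K_m(z)I(z)=K_m(z)\tilde K_l(z)I(z)=K_m(z)J(z)
\]
by \eqref{Klm=ml}, so $J$ is again flat.

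The main point needing care is the bookkeeping of the order of factors in the telescoping of Step 1, together with invoking $T^k=\mathrm{id}$ exactly once at the end. No other obstacle is expected, since the argument is the multiplicative transcription of Lemma \ref{lem 1.1}, with $b=ka$ replaced by $p^k=1$.
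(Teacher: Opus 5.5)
Your proof is correct and follows the paper's route: the corollary is an immediate consequence of the intertwining relation \eqref{Klm=ml}, which the paper obtains from the flatness relations without writing out details, and your telescoping argument (using $p^k=1$ once at the end) supplies exactly those details. Step~2 on commutativity is correct as well, but it is not needed for this corollary.
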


The group $\Z^{n}$ acts on the base of the qKZ connection:
\bean
\label{acts}
\left(\left(z_1^0,\dots, z_n^0\right),(r_1,\dots,r_n)\right) \mapsto
\left(p^{r_1}z_1^0,\dots, p^{r_n}z_n^0\right).
\eean
For a point $z^0$ of the base, let $\mc O_{z^0}$ denote
the orbit of $z^0$.
The discrete connection can be restricted to $\mc O_{z^0}$ and then lifted from $\mc O_{z^0}$ to $\Z^{n}$.
Let $\Ga(\mc O_{z^0})$ be the vector space of flat sections
of the lifting from $\mc O_{z^0}$ to $\Z^{n}$.
The map $\Ga(\mc O_{z^0})\to V$, $I \mapsto I(0)$, is an isomorphism.
Here $0$ is the zero element of $\Z^{n}$.
The monodromy functions $(\tilde K_l(z))$ restricted to $\Ga(\mc O_{z^0})$ provide
$\Ga(\mc O_{z^0})$ with a collection of commuting linear operators.

\smallskip

Our goal is to construct joint eigenvectors of the monodromy functions.

\section{Multiplicative integral representation}
\label{sec 5}

\subsection{Multiplicative connection of rank 1}

Fix a weight subspace
\bea
V=\left(\ox^n_{j=1} M_{{\La_j}}\right)\left[\sum_{j=1}^n\La_j-2r\right].
\eea
Let $p\in\C^\times$. Let
$t=(t_1,\dots, t_r)$, $z=(z_1,\dots,z_n)$ be variables.
A multiplicative discrete flat connection with step $p$ on the space with coordinates
$(t;z)$ is a collection of nonzero scalar transition functions
$\Phi_l(t;z)$, $l=1,\dots,r,$ and $\Psi_m(t;z),$ $m=1,\dots,n$, such that
\bean
\label{flats}
\Phi_l(p\<\>t_i;z) \,\Phi_i(t;z)
&=&
\Phi_i(p\<\>t_l;z) \,\Phi_l(t;z), \qquad
\ \
1\leq l,i\leq r,
\\
\notag
\Psi_m(t; pz_j) \,\Psi_j(t;z)
&=&
\Psi_j(t; pz_m) \,\Psi_m(t;z), \qquad 1\leq m,j\leq n,
\\
\notag
\Psi_m(p\<\>t_i; z) \,\Phi_i(t;z)
&=&
\Phi_i(t; pz_m) \,\Psi_m(t;z), \qquad 1\leq i\leq r, \quad
1\leq m\leq n.
\eean

Let $k$ be a positive integer.
The discrete connection of rank 1 is $k$-resonant if
the multiplicative step is a primitive root of unity of order $k$.
In that case, the monodromy functions are defined as:
\bea
\tilde \Phi_l(t;z)
=
\prod_{c=0}^{k-1} \Phi_l(p^ct_l;z),
\qquad
\tilde \Psi_m(t;z)
=
\prod_{c=0}^{k-1} \Psi_m(t; p^cz_m),
\eea
$l=1,\dots, r$, $m=1,\dots, n$.

\begin{lem}
\label{lem 5.1}
The monodromy functions are $p$-periodic in variables $t$ and $z$,
\bea
\tilde \Phi_l(p\<\>t_i;z)
&=&
\tilde \Phi_l(t;z),
\qquad \ \ \,
\tilde \Phi_l(t;pz_j)
=
\tilde \Phi_l(t;z),
\\
\tilde \Psi_m(p\<\>t_i;z)
&=&
\tilde \Psi_m(t;z),
\qquad
\tilde \Psi_m(t;pz_j)
=
\tilde \Psi_m(t;z).
\eea

\end{lem}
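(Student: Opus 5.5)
The plan is to reproduce, in multiplicative notation, the argument behind Lemma \ref{lem 2.1}, which there comes from formula \eqref{lm=ml} applied to a rank-one connection. There are two kinds of statements to check. Invariance of $\tilde\Phi_l$ under $t_l\mapsto pt_l$ and of $\tilde\Psi_m$ under $z_m\mapsto pz_m$ is a cyclic reindexing. The other shifts follow from the flatness relations \eqref{flats}, telescoped along a cycle of length $k$.

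First, the cyclic case. Since $p^k=1$, we have
\[
\tilde\Phi_l(pt_l;z)=\prod_{c=0}^{k-1}\Phi_l(p^{c+1}t_l;z)=\prod_{c=1}^{k}\Phi_l(p^ct_l;z),
\]
and $\Phi_l(p^kt_l;z)=\Phi_l(t_l;z)$. So the product equals $\tilde\Phi_l(t;z)$. The same reindexing handles $\tilde\Psi_m(t;pz_m)=\tilde\Psi_m(t;z)$.

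Next, a shift in a different variable. Take $\tilde\Phi_l(pt_i;z)$ with $i\neq l$. Apply the first relation of \eqref{flats} with $t$ replaced by $p^ct_l$. This is legitimate because \eqref{flats} holds identically in $t$. We get
\[
\Phi_l(p^ct_l,pt_i;z)=\Phi_i(p^{c+1}t_l;z)\,\Phi_l(p^ct_l;z)\,\Phi_i(p^ct_l;z)^{-1}.
\]
Multiply over $c=0,\dots,k-1$. The factors $\Phi_i(p^{c+1}t_l;z)$ and $\Phi_i(p^ct_l;z)^{-1}$ telescope, and what remains is $\Phi_i(p^kt_l;z)/\Phi_i(t;z)=1$, using $p^k=1$ once more. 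Hence $\tilde\Phi_l(pt_i;z)=\tilde\Phi_l(t;z)$.

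The remaining identities go the same way. For $\tilde\Phi_l(t;pz_j)$, use the third relation in the form
\[
\Phi_l(t;pz_j)=\Psi_j(pt_l;z)\,\Phi_l(t;z)\,\Psi_j(t;z)^{-1},
\]
evaluated at $p^ct_l$; the $\Psi_j$ factors telescope. For $\tilde\Psi_m(pt_i;z)$, use the third relation solved for $\Psi_m(pt_i;z)$, evaluated at $p^cz_m$; now the $\Phi_i$ factors telescope. For $\tilde\Psi_m(t;pz_j)$ with $j\ne m$, use the second relation evaluated at $p^cz_m$.

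The only point requiring care is bookkeeping. In each telescoping step the auxiliary factor must be evaluated at consecutive points $p^c$ and $p^{c+1}$ in the cycled variable, and the final cancellation uses $p^k=1$ rather than any periodicity of the functions themselves. Unlike the additive setting of Section \ref{sec 3}, no $b$-periodicity assumption is needed: it is replaced automatically by $p^k=1$. I expect no genuine obstacle.
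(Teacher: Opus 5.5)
Your proof is correct and is essentially the paper's argument. The paper proves this lemma by analogy with Lemma~\ref{lem 2.1}, that is, by applying the rank-one case of \eqref{lm=ml}, which unpacks exactly into your cyclic reindexing for the same-variable shift and your telescoping of the flatness relations \eqref{flats} over a length-$k$ cycle, with $p^k=1$ taking the place of $b$-periodicity, as you note.
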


The proof is similar to the proof of Lemma \ref{lem 2.1}.

\smallskip

The system of Bethe ansatz equations is the system:
\bean
\label{m BAE}
\tilde \Phi_l(t;z) = 1,\qquad l=1, \dots, r.
\eean

\subsection{Connection of master function}

Let
\begin{align}
\notag
\\[-16pt]
\label{Phi}
\Phi_l(t;z)\,&{}=\,
\ka\;\prod_{i=1}^n\,\frac{1-q^{\La_i}t_l/z_i}{q^{\Lambda_i}-p\<\>t_l/z_i}\,\cdot
\prod_{j\ne l}\,\frac{(1-p\<\>t_l/t_j)(q^2\<\<-t_l/t_j)}{(1-t_l/t_j)(1-p\<\>q^2t_l/t_j)}\;,
\\[3pt]
\label{Psi}
\Psi_m(t;z)\,&{}=\,\prod_{j=1}^r\,\frac{q^{\Lambda_m}-t_j/z_m}{1-q^{\La_m}t_j/(pz_m)}\;,
\end{align}
$l=1,\dots,r$\,, \,$m=1,\dots,n$\,.

\begin{lem}

For $p\in\C^\times$, these functions
define
a multiplicative discrete flat connection of rank 1.

\end{lem}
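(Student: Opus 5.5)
The cleanest route is to exhibit a master function and reduce flatness to a formal identity, with a direct check as a fallback. All three identities in \eqref{flats} say the same thing: the rank-one cocycle has trivial "curvature" on every elementary square. So I will look for a (multivalued, or formal) function $F(t;z)$ with
\[
\Phi_l(t;z)=F(p\,t_l;z)/F(t;z),\qquad \Psi_m(t;z)=F(t;p\,z_m)/F(t;z).
\]
Once such an $F$ exists, each identity is immediate, because both sides equal $F(p\,t_i,p\,t_l;z)/F(t;z)$, or the analogous expression with $z$-shifts. This holds regardless of the branch choices, as long as the same branch is used consistently.

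For generic $|p|<1$ the natural candidate is built from $q$-Pochhammer symbols $(x;p)_\infty=\prod_{j\ge0}(1-p^jx)$. Each factor of $\Phi_l$ and $\Psi_m$ is a ratio of linear factors in $t_l/z_i$ or $t_l/t_j$, and the $p$-shift relation $(x;p)_\infty/(px;p)_\infty=1-x$ produces exactly such factors. The constants $\ka$ and the powers of $q^{\La_i}$ are absorbed by quasi-constant factors such as $t_l^{\log\ka/\log p}$. I would therefore take
\[
F=\prod_l t_l^{\,\alpha}\prod_{l,i}\frac{(\cdots t_l/z_i;p)_\infty}{(\cdots;p)_\infty}\cdots\prod_{l\ne j}\frac{(\cdots t_l/t_j;p)_\infty}{(\cdots;p)_\infty},
\]
adjusting the arguments so that shifting $t_l\mapsto pt_l$ reproduces \eqref{Phi} and shifting $z_m\mapsto pz_m$ reproduces \eqref{Psi}. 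The main bookkeeping lies in the $t_l/t_j$ part. A shift of $t_l$ affects both the $(l,j)$ and $(j,l)$ factors, and they must combine into $\frac{(1-pt_l/t_j)(q^2-t_l/t_j)}{(1-t_l/t_j)(1-pq^2t_l/t_j)}$. The symmetric choice $\prod_{l\neq j}\frac{(t_l/t_j;p)_\infty}{(q^2t_l/t_j;p)_\infty}$, up to the factor $q^{\pm2}$ handled by a power of $t_l$, should do this. I expect this verification to be the main, if routine, obstacle.

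Since the lemma is asserted for all $p\in\C^\times$, including roots of unity where $(x;p)_\infty$ is undefined, I would not rely on the master function directly. The three identities \eqref{flats} are equalities of rational functions in $t,z,p,q,q^{\La_i},\ka$. Verifying them for $0<|p|<1$ on an open set therefore proves them identically. Equivalently, I would check the identities directly, factor by factor. For the $t$–$z$ mixed identity, the factor of $\Psi_m(pt_i;z)/\Psi_m(t;z)$ involving $t_i$ is $\frac{(q^{\La_m}-pt_i/z_m)(1-q^{\La_m}t_i/(pz_m))}{(1-q^{\La_m}t_i/z_m)(q^{\La_m}-t_i/z_m)}$. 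This must equal the factor of $\Phi_i(t;pz_m)/\Phi_i(t;z)$ involving $z_m$, which is $\frac{(1-q^{\La_m}t_i/(pz_m))(q^{\La_m}-pt_i/z_m)}{(q^{\La_m}-t_i/z_m)(1-q^{\La_m}t_i/z_m)}$, and the two agree visibly. The $\Psi$–$\Psi$ identity is trivial because $\Psi_m$ depends only on $z_m$. For the $\Phi$–$\Phi$ identity, the $z$-factors and $\ka$ cancel on both sides, leaving a two-variable check in $u=t_l/t_j$. This reduces to the symmetry under $l\leftrightarrow i$ of
\[
\frac{(1-p^2u)(q^2-pu)(1-u^{-1})(1-pq^2u^{-1})\cdots}{\cdots},
\]
which I would confirm by direct expansion.
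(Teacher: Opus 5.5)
Your route is the paper's own: the paper says the identities in \eqref{flats} can be checked directly, or alternatively follow from the existence of a master function $F$ with $\Phi_l=F(pt_l;z)/F(t;z)$ and $\Psi_m=F(t;pz_m)/F(t;z)$. Your direct checks of the $\Psi$--$\Psi$ and mixed $\Psi$--$\Phi$ identities are right. Reducing the rational identities to $0<|p|<1$ is also legitimate.

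Two corrections are needed, though.

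\textbf{The explicit $t$--$t$ part of $F$.} Your candidate does not work. With $\prod_{l\ne j}(t_l/t_j;p)_\infty/(q^2t_l/t_j;p)_\infty$, the shift $t_l\mapsto pt_l$ moves both the $(l,j)$ and $(j,l)$ factors. For $u=t_l/t_j$ it yields
\[
\frac{(1-pu)(1-q^2u)}{(1-u)(q^2-pu)}\qquad\text{instead of}\qquad \frac{(1-pu)(q^2-u)}{(1-u)(1-pq^2u)} .
\]
The ratio of these two expressions is not constant in $u$, so no power of $t_l$ can repair it. Replacing $q^2$ by $q^{-2}$, i.e.\ taking $\prod_{l\ne j}(t_l/t_j;p)_\infty/(q^{-2}t_l/t_j;p)_\infty$, gives exactly the required factor. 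No extra power of $t_l$ is needed for this part.

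\textbf{The $\Phi$--$\Phi$ check.} The expression you display is garbled; no factor $1-p^2u$ arises. A clean way to finish is as follows. Put $f(u)=\frac{(1-pu)(q^2-u)}{(1-u)(1-pq^2u)}$; it satisfies $f(u)\,f\bigl(1/(pu)\bigr)=1$. Now take $u=t_l/t_i$ and compare $\Phi_l(pt_i;z)\Phi_i(t;z)$ with $\Phi_i(pt_l;z)\Phi_l(t;z)$. All factors coincide except the pair factors, which are $f(u/p)\,f(1/u)$ on the left and $f\bigl(1/(pu)\bigr)f(u)$ on the right. Both products equal $1$ by that identity.
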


\smallskip
The flatness conditions \eqref{flats} can be checked directly. They also follow from the existence of a (master) 
function $F(t;z)$ such that
$$\Phi_l(t;z) = \frac{F(p\<\>t_l;z)}{F(t;z)},
\qquad
\Psi_m(t;z) = \frac{F(t;pz_m)}{F(t;z)},$$
for \,$l=1,\dots,r$\>, and \,$m=1,\dots,n$\,.
Formulas for the master function can be found in \cite{TV1,TV2}. The connection defined by
\eqref{Phi} and \eqref{Psi} is called the multiplicative connection of master function.

\begin{lem}
\label{lem 7.3}
Let $p =e^{2\pi i j/k}$, \,$1\leq j<k$, \,$(j,k)=1$. Then
the monodromy
functions of the multiplicative connection of master function
take the form:
\begin{align}
\notag
\\[-15pt]
\label{mon phi no}
\tilde\Phi_l(t;z)\,&{}=\,
\ka^k\,\prod_{i=1}^n\,\frac{1-q^{k\La_i}\<\>t_l^k/z_i^k}{q^{k\Lambda_i}-t_l^k/z_i^k}
\,\cdot\prod_{j\ne l}\,\frac{q^{2k}\<\<-t_l^k/t_j^k}{1-q^{2k}\<\>t_l^k/t_j^k}\;,
\\[4pt]
\label{mon psi no}
\tilde\Psi_m(t;z)\,&{}=\,
\prod_{j=1}^r\,\frac{q^{k\Lambda_m}-t_j^k/z_m^k}{1-q^{k\La_m}\<\>t_j^k/z_m^k}\;,
\end{align}
$l=1,\dots,r$, \;$m=1,\dots,n$\,.
\end{lem}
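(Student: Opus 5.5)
The plan is a direct computation of the products defining $\tilde\Phi_l$ and $\tilde\Psi_m$. It uses one elementary identity: for a primitive $k$-th root of unity $p$ and any $u$,
\be
\prod_{c=0}^{k-1}(1-p^c u)\,=\,1-u^k,\qquad
\prod_{c=0}^{k-1}(A-p^c u)\,=\,A^k-u^k .
\ee
This follows from $\prod_c (X-p^c u)=X^k-u^k$, because the $p^c$ run over all $k$-th roots of unity.

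First, $\tilde\Psi_m$. By \eqref{Psi}, replacing $z_m$ by $p^c z_m$ changes each factor to $(q^{\La_m}-p^{-c}t_j/z_m)/(1-p^{-c-1}q^{\La_m}t_j/z_m)$. As $c$ runs over $0,\dots,k-1$, both $p^{-c}$ and $p^{-c-1}$ run over all $k$-th roots of unity. The numerator product is then $q^{k\La_m}-t_j^k/z_m^k$ and the denominator product is $1-q^{k\La_m}t_j^k/z_m^k$, which gives \eqref{mon psi no}.

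Next, $\tilde\Phi_l$. Replacing $t_l$ by $p^c t_l$ in \eqref{Phi}, the constant $\ka$ contributes $\ka^k$.

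The $z_i$-factor is $(1-q^{\La_i}p^c t_l/z_i)/(q^{\La_i}-p^{c+1}t_l/z_i)$. Its product over $c$ equals $(1-q^{k\La_i}t_l^k/z_i^k)/(q^{k\La_i}-t_l^k/z_i^k)$, again because $p^{c+1}$ also runs over all roots.

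For $j\ne l$, the factor
\be
\frac{(1-p^{c+1}t_l/t_j)(q^2-p^ct_l/t_j)}{(1-p^ct_l/t_j)(1-p^{c+1}q^2t_l/t_j)}
\ee
has its factor $(1-p^{c+1}t_l/t_j)/(1-p^ct_l/t_j)$ telescope to $1$ over a full period, since $p^k=1$. The remaining part gives $(q^{2k}-t_l^k/t_j^k)/(1-q^{2k}t_l^k/t_j^k)$. Here only $t_l$ is shifted and $t_j$ is held fixed, exactly as in the definition of $\tilde\Phi_l$. This yields \eqref{mon phi no}.

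There is no real obstacle. The only points needing care are that the shift index $c+1$ still sweeps all residues mod $k$, which uses primitivity of $p$ and $p^k=1$, and that the telescoping cancellation is legitimate, i.e.\ it is an identity of rational functions in generic variables.
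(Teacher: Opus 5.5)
Your proposal is correct and follows the paper's proof. The paper's proof is the single remark that the lemma follows from $\prod_{c=0}^{k-1}(A-p^cB)=A^k-B^k$, and you carry out that computation factor by factor; your telescoping of $(1-p^{c+1}t_l/t_j)/(1-p^{c}t_l/t_j)$ is the same identity applied to numerator and denominator, both of which equal $1-t_l^k/t_j^k$.
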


\begin{proof}
The lemma follows from the formula
\,$\prod_{c=0}^{k-1} (A - p^c B) = A^k\< - B^k$.
\end{proof}

\subsection{Weight function}

The $V$-valued weight function
\beq
\label{WF}
W(t;z)\,=\>\sum_{\vec r\in \mc I_r} W_{\vec r}(t;z)\,f^{\<\>\vec r}
\eeq
is defined by the formula
\bean
\label{4.8}
&&
W_{\vec r}(t;z)\,=\,
\prod_{i=1}^n\,
\frac{q^{r_i(r_i-1)}}{[r_i]_{q^2}!} \>\times{}
\\
\notag
&&
\times \ \
\on{Sym}_{t_1,\dots,t_r}\
\left(\prod_{i=1}^n
\left(\prod_{j=\vec r^{(i-1)}+1}^{\vec r^{(i)}}
\left((t_j/z_i)\prod_{i'<i} \left(1-q^{\La_{i'}} t_j/z_{i'}\right)
\prod_{i''>i}\left(q^{\La_{i''}}-t_j/z_{i''}\right)\right) \right)\right.
\\
\notag
&&
\phantom{aaaaaaaaaaaaa}
\times \ \left.
\prod_{j'<j''}
\frac{1-q^2t_{j'}/t_{j''}}
{1-t_{j'}/t_{j''}} \right).
\eean
Here we use the following notation. For a function $f(t;z)$, define
\be
\on{Sym}_{t_1,\dots,t_r}f(t;z) =
\sum_{\si\in S_r} f(t_{\si(1)}, \dots,t_{\si(r)}; z)\,.
\ee
For a positive integer $a$, denote
\be
[a]_{q^2}= \frac{1-q^{2a}}{1-q^2}\quad\text{and}\quad
[a]_{q^2}! = \prod_{j=1}^a[j]_{q^2}.
\ee
For a vector $\vec r = (r_1,\dots,r_n)$\,, \;$r_1+\dots+r_n = r$, set
\,$\vec r^{(0)}\!=0\,$ and \,$\vec r^{(i)}\!=r_1+\dots + r_i$\,, \;$i=1,\dots n$\,.

\medskip

For every $\vec r\in \mc I_r$\,, the function
\bean
\label{poly}
z_1\cdots z_n\,W_r(t;z)
\eean
is a polynomial in $t, z, q$ and $q^{\La_1},\dots, q^{\La_n}$.
The factors $[r_i]_{q^2}!$ and $1-t_{j'}/t_{j''}$ in the denominator of formula \eqref{4.8}
are canceled out by the symmetrization. See \cite{TV2}.

\subsection{Theorem on integral representation}

\begin{thm} [\cite{TV1, TV2}]
\label{thm 7.4}
Given a nonzero complex number $p$,
there exist rational $V$-valued functions
$G_{l,m}(t;z)$, $l=1, \dots, r$, $m=1,\dots,n$, such that
\beq
\label{mIR}
\phantom{aaa}
\Psi_m(t;z)\<\>W(t;pz_m)\,=\,K_m(z)\<\>W(t;z)\>+
\sum_{l=1}^r \left(\Phi_l(t;z)\>G_{l,m}(p\<\>t_l;z) -G_{l,m}(t;z)\right),
\eeq
$m=1,\dots,n$.
Furthermore, the coordinate functions of every $G_{l,m}(t;z)$ in the basis $(f^{\vec r})$
are ratios
$\frac{P(t;z)}{Q(t;z)}$\,,
where $P(t;z)$ is a Laurent polynomial in $t, z, p, q, q^{\La_1},\dots,q^{\La_n}, \ka $
with integer coefficients,
and $Q(t;z)$ is the product with all factors of the form
\begin{align*}
q^{\Lambda_i}-p^ct_i/z_j\,, &\quad i = 1,\dots, r, \ j=1,\dots,n,
\\
1-p^cq^{\La_j}t_i/z_j\,, &\quad i = 1,\dots, r, \ j=1,\dots,n,
\\
1-p^c\<\>t_i/t_j \,, & \quad i,j = 1,\dots, r, \ i\ne j,
\\
1-p^cq^{2}\<\>t_i/t_j\,, &\quad i,j = 1,\dots, r, \ i\ne j,
\\
q^{\La_i+\La_j-2b} - p^c z_i/z_j, &\quad
i,j=1,\dots,n, \ i\ne j, \ \ b=0,\dots,r-1,
\end{align*}
where $c\in\Z$.\footnote{${}$ These factors are
the factors of the denominators of the functions
$(\Phi_l, \Psi_m)$ in which variables $t$ and $z$ are shifted by powers of $p$.}

\end{thm}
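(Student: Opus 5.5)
The relation \eqref{mIR} is a discrete Stokes identity: the $K_m$-shift of the weight function, twisted by the rank-one connection, is a linear combination of the transition matrix applied to $W$ and of discrete $t$-derivatives. The plan is to construct the $G_{l,m}$ explicitly, following \cite{TV1,TV2}. First I would reduce to $m=1$. The factorized form \eqref{qkz op} of $K_m$ through $R$-matrices, together with the intertwining property of the weight function under $R$-matrices, gives the other cases:
\be
R^{(i,i+1)}_{\La_i,\La_{i+1}}(z_i/z_{i+1})\,W(t;z)=W(t;z^{(i,i+1)})
\ee
with $z,\La$ transposed. This identity holds exactly, with no correction terms. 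It follows from the defining relations of $R$ in Section~\ref{R-m} applied to the creation operators generating $W$. Equivalently, one can check it on the symmetrand: it amounts to a three-term identity for the factors $(1-q^{\La}t/z)$ and $(q^{\La}-t/z)$. This reduces \eqref{mIR} to the single relation $\Psi_1(t;z)W(t;pz_1)=\ka^{\La_1-h^{(1)}}R^{(1,n)}\cdots R^{(1,2)}W(t;z)+\ldots$. Conjugating by the permutation then turns it into a relation for $K_n$, whose $R$-factors carry the shifted arguments $pz_n/z_j$.

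Next I would prove that base relation by induction on $r$, or more simply by writing $W$ as a symmetrization, $W=\on{Sym}_t\,U(t;z)$. Here $U$ is the unsymmetrized product in \eqref{4.8}, and the $q^2$-exchange factors are absorbed into a twisted symmetrization. For a single ordered product $U$, the ratio $\Psi_n(t;z)U(t;pz_n)/U(t;z)$ is an explicit rational function of each $t_j$. Its difference from the action of $\ka^{\La_n-h^{(n)}}$, once the $R$-matrices have been moved through, is a sum over $l$ of terms of the form $\Phi_l(t;z)\,g_l(pt_l;z)-g_l(t;z)$. The reason is that the rank-one factors in $\Phi_l$ and $\Psi_n$ were designed to match exactly the factors $(t_l/z_n)$, $(1-q^{\La}t_l/z)$ and $(q^{\La}-t_l/z)$ appearing in $U$. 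Each $g_l$ can be taken as $U$ with the factor involving $z_n$ in variable $t_l$ removed, multiplied by a simple rational factor. Symmetrizing in $t$ commutes with the $\Phi_l$-twisted difference operators, after relabeling $l$, because of the flatness relations \eqref{flats}. Hence $G_{l,m}$ is obtained as a symmetrization of such $g_l$.

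Finally, I would check the denominator claim by inspecting this construction. The only denominators introduced come from $\Phi_l,\Psi_m$ at shifted arguments, from $1-t_i/t_j$ (cancelled in $W$ but possibly not in $G$), and from the $R$-matrix denominators $q^{\La_i+\La_j-2b}-p^cz_i/z_j$ recorded in Section~\ref{R-m}. The numerators are integral Laurent polynomials, by \eqref{poly}.

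I expect the main obstacle to be the combinatorial telescoping in the middle step. The $q^2$-exchange factors $\frac{1-q^2t_{j'}/t_{j''}}{1-t_{j'}/t_{j''}}$ must interact correctly with the factor $\frac{(1-pt_l/t_j)(q^2-t_l/t_j)}{(1-t_l/t_j)(1-pq^2t_l/t_j)}$ in $\Phi_l$. I would first verify this for $r=1$, where it is the Motivating Example computation, and then for $r=2$, in order to guess the general $g_l$ before attempting the inductive proof.
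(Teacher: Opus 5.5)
The paper gives no proof of this statement; it quotes it from \cite{TV1,TV2}. Your outline is the right architecture: use the exact $R$-matrix property of $W$ to reduce \eqref{mIR} to a single cyclic relation, then verify that relation on the symmetrization. But you do not supply the step that carries all the content. You leave the cyclic relation as the ``main obstacle'', to be settled by guessing $g_l$ from $r=1,2$ and an unspecified induction.

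The description of $g_l$ you do give (``$U$ with the factor involving $z_n$ removed'') is not what happens. For $r=1$, $m=1$ the correction is $G_{1,1}=-\Psi_1(t;z)\,W_{(1,0,\dots,0)}(t;pz_1,z_2,\dots,z_n)\,f^{(1,0,\dots,0)}$, i.e.\ minus the whole term. Also, once the $R$-matrices are absorbed, you must compare a term with $z_m$ in the first slot (at $pz_m$) with a term with $z_m$ in the last slot. This change of slot is what makes the factors of $\Psi_m$ match. Your ratio $\Psi_nU(t;pz_n)/U(t;z)$ suggests the same ordering on both sides, which does not work.

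Here is the missing argument; it needs no telescoping or induction. Take $m=1$. By the $R$-matrix property, $K_1W(t;z)=\ka^{\La_1-h^{(1)}}P\,W_{\La_2,\dots,\La_n,\La_1}(t;z_2,\dots,z_n,z_1)$, with $P$ the cyclic permutation of tensor factors. Inside each block use
$\sum_{\si\in\mathfrak S_d}\prod_{i<j}\frac{1-q^2t_{\si(i)}/t_{\si(j)}}{1-t_{\si(i)}/t_{\si(j)}}=[d]_{q^2}!$.
This rewrites $W_{\vec r}$ as a sum over ordered set partitions $(B_1,\dots,B_n)$ of $\{1,\dots,r\}$ with $|B_i|=r_i$. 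No $q$-factorials remain. The only $t$--$t$ factors are $\frac{1-q^2t_\alpha/t_\beta}{1-t_\alpha/t_\beta}$ with $\alpha\in B_i$, $\beta\in B_{i'}$, $i<i'$.

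In the term of $\Psi_1(t;z)\,W_{\vec r}(t;pz_1,z_2,\dots,z_n)$ indexed by $(B_1,\dots,B_n)$, shift the variables of $B_1$ one at a time, $h\mapsto\Phi_\alpha(t;z)\,h(pt_\alpha)$. Each step adds the exact term $\Phi_\alpha G(pt_\alpha)-G$ with $G=h$. Once all of $B_1$ is shifted, the term equals exactly $\ka^{r_1}$ times the term of $W_{\La_2,\dots,\La_n,\La_1}(t;z_2,\dots,z_n,z_1)$ indexed by $(B_2,\dots,B_n,B_1)$. This is because, for $\alpha\in B_1$, $\beta\notin B_1$, $i\ne1$ and $x=t_\alpha/t_\beta$,
\begin{align*}
(q^{\La_i}-p\,t_\alpha/z_i)\cdot\frac{1-q^{\La_i}t_\alpha/z_i}{q^{\La_i}-p\,t_\alpha/z_i}&=1-q^{\La_i}t_\alpha/z_i\,,
\\
\Bigl(1-\frac{q^{\La_1}t_\beta}{p\,z_1}\Bigr)\cdot\frac{q^{\La_1}-t_\beta/z_1}{1-q^{\La_1}t_\beta/(p\,z_1)}&=q^{\La_1}-t_\beta/z_1\,,
\\
\frac{1-p\,q^2x}{1-p\,x}\cdot\frac{(1-p\,x)(q^2-x)}{(1-x)(1-p\,q^2x)}&=\frac{1-q^2t_\beta/t_\alpha}{1-t_\beta/t_\alpha}\,.
\end{align*}
In each line the first factor comes from the term (after the shift), and the second from $\Phi_\alpha$, $\Psi_1$, $\Phi_\alpha$ respectively. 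The remaining factors behave as follows:
\begin{itemize}
\item $t_\alpha/(pz_1)$ becomes $t_\alpha/z_1$;
\item the $z_1$-factors of $\Psi_1$ and $\Phi_\alpha$ at $t_\alpha$ cancel;
\item the two $\Phi$-factors attached to a pair inside $B_1$ multiply to $1$;
\item each shift contributes one $\ka$, which produces $\ka^{\La_1-h^{(1)}}$ acting by $\ka^{r_1}$ on $f^{\vec r}$, as in the Motivating Example.
\end{itemize}
Hence $G_{l,1}$ is minus the sum, over all terms, of the function $h$ just before $t_l$ is shifted. Its denominators are visibly of the listed types and its numerators are integral.

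For general $m$, run the same argument for the weight function with sites ordered $(pz_m,z_1,\dots,\widehat{z_m},\dots,z_n)$. Then apply the $t$-independent operator $R^{(m,m-1)}(pz_m/z_{m-1})\cdots R^{(m,1)}(pz_m/z_1)$, composed with the appropriate permutation of factors. By the $R$-matrix property, this turns the two sides into $\Psi_mW(t;pz_m)$ and $K_mW(t;z)$ plus exact terms. It only adds denominators $q^{\La_m+\La_j-2b}-p\,z_m/z_j$.

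Smaller points:
\begin{itemize}
\item Your appeal to \eqref{poly} for the integrality clause concerns $W$, not $G$. Pieces of the full symmetrization \eqref{4.8} carry $1/[r_i]_{q^2}!$, i.e.\ denominators $1-q^{2d}$ that are not in the list. This is why one sums over orderings within blocks before shifting.
\item The exact $R$-matrix property for $r\ge2$ should be quoted from \cite{TV2}: neither of your two sketched derivations goes beyond the $r=1$ computation.
\item Stability of exact terms under permuting $t$ comes from the symmetry of $\Phi_l$ in $t_j$, $j\ne l$, not from \eqref{flats}. The argument above does not need it.
\end{itemize}
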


A system of such relations is called a {\it multiplicative integral representation} for
the qKZ connection $\bigl(K_m(z)\bigr)$\,.

\subsection{Logarithmic coordinates}
\label{sec 3.6}

Let
\be
p=e^{2\pi i j/k}, \qquad 1\leq j< k, \quad (j,k)=1.
\ee
Fix numbers $\ga$ and $\delta, $ such that
\beq
\label{log coo}
q= e^{2\pi i \ga}, \
\ka = e^{2\pi i \delta}.
\eeq
Introduce variables $s=(s_1,\dots,s_r)$ and $x=(x_1,\dots,x_n)$ by the formulas:
\bean
\label{log tz}
t_l = e^{2\pi i s_l}, \qquad
z_m = e^{2\pi i x_m}.
\eean
Define the
$\on{GL}(V)$-valued functions
\bea
L_m(x) := L_m
(x_1,\dots, x_n, \ga, \La_1,\dots,\La_n, \ka),
\eea
$m=1,\dots,n$,
by the formulas:
\beq
\label{qkz log}
L_m(x)\,=\,
K_m(e^{2\pi i x_1},\dots, e^{2\pi i x_n}, e^{2\pi i j/k}, e^{2\pi i \ga},
e^{2\pi i \ga \La_1}, \dots, e^{2\pi i \ga \La_n}, e^{2\pi i \delta}).
\eeq
Denote
\be
a=\frac jk, \qquad b= j.
\ee
Then $b=ka$.
The flatness conditions \eqref{f qKZ} become:
\bean
\label{f qkz}
L_m
(x_l + a)L_l(x)=
L_l
(x_m + a)L_m
(x),
\eean
$ 1\leq l,m\leq n$.
We also have the periodicity properties:
\bean
\label{p qkz}
L_m
(x_l +b)=L_m
(x), \qquad 1\leq l,m\leq n.
\eean
{\bf Summary.} The operators $(L_m
(x))$ form a $k$-resonant
$b$-periodic discrete flat connection with step $a$ in the sense of Section \ref{sec 1}.
We call it the additive qKZ connection.

\medskip

Let
\beq
\label{l1}
w(s;x)\,=\sum_{\vec r}w_{\vec r}\>e_{\vec r}\,,\quad g_{l,m}(s;x)\,,
\quad \phi_l(s;x)\,,\quad \psi_m(s;x)\,,\quad\tilde \phi_l(s;x)\,,\quad
\tilde\psi_m(s;x)
\eeq
be the functions obtained from the functions
\be
W(t;z)=\sum_{\vec r}^{\vp1} W_{\vec r}\,e_{\vec r},\quad G_{l,m}(t;z)\,,\quad
\Phi_l(t;z)\,\quad \Psi_m(t;z)\,,\quad \tilde \Phi_l(t;z)\,\quad \tilde\Psi_m(t;z)
\ee
of Theorem \ref{thm 7.4} by replacing in them
all
\bea
t_u,\quad z_v, \quad p,\quad q, \quad q^{\La_v},\quad \ka
\eea
with
\bea
e^{2\pi i s_u}, \quad e^{2\pi i x_v},
\quad e^{2\pi i j/k}, \quad e^{2\pi i \ga},\quad
e^{2\pi i \ga \La_v}, \quad e^{2\pi i \delta},
\eea
respectively.

\begin{lem}
\label{lem int rep}

The functions
$w(s;x)$, $g_{l,m}(s;x)$, $\phi_l(s;x)$, $\psi_m(s;x)$ are $b$-periodic with respect to variables
$s$ and $x$ and define an integral representation of
the $k$-resonant $b$-periodic discrete flat additive qKZ connection
$(L_m(x))$ with step $a$,
\bean
\label{IR 22}
\phantom{aaaaaa}
\psi_m(s;x) w(s;x_m+a) = L_m(x)w(s;x) +
\sum_{l=1}^r \left(\phi_l(s;x)g_{l,m}(s_l+a;x) -g_{l,m}(s;x)\right),
\eean
$m=1,\dots,n$.
\qed
\end{lem}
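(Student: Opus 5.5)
The plan is to obtain the lemma by substituting into the multiplicative integral representation of Theorem \ref{thm 7.4}, specialized at $p=e^{2\pi i j/k}$. First I check $b$-periodicity. Every function in the list \eqref{l1} is built from $W,G_{l,m},\Phi_l,\Psi_m$, which are rational in $t_u,z_v$ with coefficients depending on $p,q,q^{\La_v},\ka$. After the substitution $t_u=e^{2\pi i s_u}$, $z_v=e^{2\pi i x_v}$, they depend on $s,x$ only through $e^{2\pi i s_u}$ and $e^{2\pi i x_v}$. Since $b=j$ is an integer, the shift $s_u\mapsto s_u+b$ or $x_v\mapsto x_v+b$ leaves these exponentials unchanged. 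This gives $b$-periodicity in every variable.

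Next I check that the substituted functions are well defined. The weight function $W$ is a Laurent polynomial up to the factor $(z_1\cdots z_n)^{-1}$ by \eqref{poly}, so $w$ is defined everywhere. The denominators of $G_{l,m}$, $\Phi_l$, $\Psi_m$ are the products listed in Theorem \ref{thm 7.4}. The identity \eqref{IR 22} is therefore understood as an identity of meromorphic functions of $(s;x)$, holding away from the zero set of those denominators. Equivalently, it holds on the generic points where all terms are defined.

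Then I translate the shifts. Under the substitution, $p\,t_l=e^{2\pi i j/k}e^{2\pi i s_l}=e^{2\pi i(s_l+a)}$ with $a=j/k$, and similarly $p\,z_m\leftrightarrow x_m+a$. So $W(t;pz_m)$ becomes $w(s;x_m+a)$, and $G_{l,m}(p\,t_l;z)$ becomes $g_{l,m}(s_l+a;x)$. By definition \eqref{qkz log}, $K_m(z)$ at these parameter values is $L_m(x)$. Relation \eqref{mIR}, valid for every nonzero $p$ and in particular for $p=e^{2\pi i j/k}$, therefore becomes \eqref{IR 22} term by term.

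Finally I check the remaining hypotheses of Section \ref{sec 2.4}. The functions $(\phi_l,\psi_m)$ form a $b$-periodic rank-one connection with step $a$: the flatness relations \eqref{flats}, which hold for all $p\in\C^\times$, translate in the same way. The operators $(L_m)$ form a $k$-resonant connection by \eqref{f qkz} and \eqref{p qkz}.

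The only real subtlety is that Theorem \ref{thm 7.4} must hold at the specific root of unity $p$, not merely for generic $p$. The relation \eqref{mIR} holds identically in $p$ as an identity of rational functions, by the integrality statement on $P$ and $Q$. So I will specialize $p$ after clearing denominators and note that the listed denominator factors are not identically zero in $(t;z)$ at $p=e^{2\pi i j/k}$. This is the step I expect to need a sentence of justification; the rest is bookkeeping.
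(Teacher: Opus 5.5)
Your proposal is correct and matches the paper's approach. The paper marks the lemma \qed\ as immediate from Theorem~\ref{thm 7.4} via the substitution $t_u=e^{2\pi i s_u}$, $z_v=e^{2\pi i x_v}$, $p=e^{2\pi i j/k}$: under it, multiplication of $t_l$ or $z_m$ by $p$ becomes the shift by $a=j/k$, and the integrality of $b=j$ gives $b$-periodicity. Your specialization worry in the last paragraph is unnecessary, because Theorem~\ref{thm 7.4} is already stated for every nonzero $p$ (and the integrality clause on $P,Q$ is not what would justify a uniform-in-$p$ identity anyway).
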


\begin{cor}
\label{cor 5.6}
The statements of Theorems \>\ref{thm sol0}\,--\,\ref{thm i sol}, Corollaries \>\ref{cor eigen},
and Theorem \ref{thm ind}
\,hold for the \>$k$-resonant \,$b$-periodic discrete additive qKZ connection \>$(L_m(x))$.
\end{cor}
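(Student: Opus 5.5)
The plan is to check that Lemma~\ref{lem int rep} places the additive qKZ connection $(L_m(x))$, together with the functions $w$, $g_{l,m}$, $\phi_l$, $\psi_m$, exactly in the setting of Section~\ref{sec 2.4}. Once that is done, Theorems~\ref{thm sol0}, \ref{thm i sol0}, \ref{thm sol}, \ref{thm i sol}, Corollary~\ref{cor eigen} and Theorem~\ref{thm ind} apply verbatim, since their proofs use only the abstract hypotheses. We therefore list those hypotheses and verify each one for the additive qKZ data.

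First, $(L_m(x))$ is a $k$-resonant $b$-periodic discrete flat connection with step $a=j/k$ and $b=j=ka$. Flatness \eqref{f qkz} is the substitution $z_m=e^{2\pi i x_m}$, $p=e^{2\pi i j/k}$ in \eqref{f qKZ}: the shift $x_l\mapsto x_l+a$ corresponds to $z_l\mapsto pz_l$. Periodicity \eqref{p qkz} holds because $e^{2\pi i(x+j)}=e^{2\pi i x}$ for the integer $j$.

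Second, $(\phi_l,\psi_m)$ is a $k$-resonant $b$-periodic rank-one connection with step $a$. Its flatness relations are \eqref{flats} after the same substitution, and $b$-periodicity again holds because $b$ is an integer. We also need the functions to be nonzero and $L_m(x)$ invertible at the points used. This is a genericity condition on $(s^0;x^0)$: the relevant denominators and numerators of $\Phi_l,\Psi_m$ and of the $R$-matrices must not vanish on the orbit. We impose it tacitly, as the paper does.

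Third, $w$ and $g_{l,m}$ are $b$-periodic and satisfy \eqref{IR}. Periodicity is immediate because they are rational in $e^{2\pi i s_u}$ and $e^{2\pi i x_v}$. Relation \eqref{IR 22} is obtained from \eqref{mIR} of Theorem~\ref{thm 7.4}, taken at the specific value $p=e^{2\pi i j/k}$, by the change of variables \eqref{log tz}.

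The one point that needs care is that the rational functions $G_{l,m}$ must be regular on the finite set of points entering the discrete integrals, that is, on the orbit $\mc O_{(s^0;x^0)}$. The denominator description in Theorem~\ref{thm 7.4} settles this. With $p$ of order $k$, the shifted factors $q^{\La_i}-p^ct_i/z_j$ and the like take only finitely many distinct forms, so they are nonzero for generic $(s^0;x^0)$. We take $(s^0;x^0)$ away from this divisor, and likewise away from the poles of $w$ and $\mu$. With these checks made, each cited result transfers by direct substitution, which proves the corollary.
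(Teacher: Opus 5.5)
Your proposal is correct and follows the paper's route. The corollary is immediate from the Summary in Section~\ref{sec 3.6}, that $(L_m(x))$ is a $k$-resonant $b$-periodic flat connection with step $a=j/k$ and $b=j$, together with Lemma~\ref{lem int rep}, after which the abstract results of Sections~\ref{sec 2}--\ref{sec 3} apply verbatim. Your explicit checks of flatness and periodicity under the substitution \eqref{log tz}, and your regularity caveats on the orbit, only spell out what the paper leaves tacit. For the Bethe-ansatz statements, that regularity is an assumption on the particular solution $(s^0;x^0)$, not a consequence of genericity, which you effectively acknowledge.
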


As an example, we formulate Corollary \ref{cor eigen} for the additive qKZ connection
$(L_m(x))$ in terms of the original multiplicative qKZ functions $(K_m(z))$.

\begin{cor}
\label{cor m eigen}

Let $p=e^{2\pi i j/k}$, $1\leq j<k$, $(j,k)=1$.
Let $(t^0; z^0)$ be a solution to the system of Bethe ansatz equations,
\beq
\label{m bae}
\kappa^k\>\prod_{i=1}^n\,\frac{1-q^{k\La_i}\<\>t_l^k/z_i^k}{q^{k\Lambda_i}-t_l^k/z_i^k}
\,\cdot
\prod_{j\ne l}\,\frac{q^{2k}\<\<-t_l^k/t_j^k}{1-q^{2k}\<\>t_l^k/t_j^k}\;=\,1\,,
\eeq
$l=1,\dots,r$. Define a vector in $V$,
\bean
\label{mx0}
I(t^0;z^0)
&=&
\sum_{j_1, \dots, j_r=0}^{k-1}
W(p^{j_1}t_1^0, \dots, p^{j_r}t_r^0; z^0)
\\
\notag
&&
\phantom{aa}
\times \ \
\prod_{i=1}^r\prod_{d_i=0}^{j_i-1}
\Phi_i (p^{j_1}t_1^0, \dots,
p^{j_{i-1}}
t^0_{i-1},
p^{d_i}t_i^0, t_{i+1}^0, \dots, t_r^0;z^0).
\eean
If nonzero,  $I(t^0;z^0)$ is an eigenvector of the monodromy operators $\tilde K_m(z^0)$,
$m=1,\dots,n$, with respective eigenvalues
\beq
\label{Mon psi}
\prod_{j=1}^r\,\frac{q^{k\Lambda_m}-(t_j^{(0)}\!/z^{(0)}_m)^k}
{1-q^{k\La_m}\<\>(t_j^{(0)}\!/z^{(0)}_m)^k}\;.
\eeq

Parallel transport of the fiber $V$ over $z^0$ to the fibers over the points of the orbit $\mc O_{z^0}$
extends the vector $I(t^0;z^0)$ to a multi-valued section over $\mc O_{z^0}$ which is an
eigenvector of the
monodromy functions $(\tilde K_m(z))$.

\end{cor}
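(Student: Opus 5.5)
The plan is to deduce the statement from Corollary \ref{cor eigen} applied to the additive qKZ connection $(L_m(x))$, which is legitimate by Lemma \ref{lem int rep} and Corollary \ref{cor 5.6}. Choose logarithmic coordinates $(s^0;x^0)$ with $t^0_l=e^{2\pi i s^0_l}$ and $z^0_m=e^{2\pi i x^0_m}$, as in Section \ref{sec 3.6}.

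First, the Bethe ansatz equations match. By construction $\tilde\phi_l(s;x)$ is $\tilde\Phi_l(t;z)$ after the substitution, so \eqref{bae} at $(s^0;x^0)$ is $\tilde\Phi_l(t^0;z^0)=1$. By Lemma \ref{lem 7.3} this is exactly system \eqref{m bae}. For this identification I will check that the product defining $\tilde\phi_l$, namely $\prod_{j=0}^{k-1}\phi_l(s_l+ja;x)$, becomes $\prod_c\Phi_l(p^ct_l;z)$ under $e^{2\pi i ja}=p^j$. This holds because $a=j/k$ and $p=e^{2\pi i j/k}$; the letter $j$ is overloaded here, but the point is only that $e^{2\pi i a}=p$. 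In the same way, $\tilde\psi_m(s^0;x^0)=\tilde\Psi_m(t^0;z^0)$, which by \eqref{mon psi no} equals \eqref{Mon psi}.

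Second, the vectors match. Under $s_i\mapsto s_i+j_ia$ we have $t_i\mapsto p^{j_i}t_i$, so $w(s^0+ja;x^0)=W(p^{j_1}t^0_1,\dots;z^0)$. The transport coefficients $\mu$ in \eqref{pt} become the products of $\Phi_i$ appearing in \eqref{mx0}. Hence $I(x^0)$ from \eqref{ix0} equals $I(t^0;z^0)$. On the operator side, the monodromy $\tilde L_m(x^0)$ is the product $L_m(x^0_m+(k-1)a)\cdots L_m(x^0)$, which equals $K_m(p^{k-1}z^0_m)\cdots K_m(z^0)=\tilde K_m(z^0)$. Corollary \ref{cor eigen} then gives the eigenvector claim.

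For the last sentence, I would use Theorems \ref{thm sol} and \ref{thm i sol}: the function $I(x^1)$ on $\mc O_{x^0}$ is a flat section with the same eigenvalues. Its values are the parallel transports of $I(x^0)$. The section is multi-valued on $\mc O_{z^0}$ because different $x^1$ lying over the same $z$ differ by a multiple of $b$ in some coordinate. Transporting by $b$ multiplies the section by $\tilde L_m$, which here acts by the scalar $\tilde\psi_m$. Equivalently, we could pass to the lift to $\Z^n$.

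The main obstacle is regularity. The identification requires all the functions involved ($\Phi_i$ at the shifted points, $W$, and the $G_{l,m}$ entering Theorem \ref{thm sol0}) to be finite and nonzero along the orbit $\mc O_{(s^0;x^0)}$. So I would assume $(t^0;z^0)$ is generic enough that none of the denominators listed in Theorem \ref{thm 7.4}, including those shifted by $p^c$, vanish. The statement implicitly carries this assumption. Once it is in place, everything else is the substitution dictionary, which is routine.
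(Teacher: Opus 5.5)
Your proposal is correct and follows the paper's own route. The paper obtains this corollary by restating Corollary \ref{cor eigen} for the additive qKZ connection $(L_m(x))$, which is valid by Lemma \ref{lem int rep} and Corollary \ref{cor 5.6}, in the multiplicative variables; Lemma \ref{lem 7.3} turns $\tilde\phi_l$ and $\tilde\psi_m$ into \eqref{m bae} and \eqref{Mon psi}, and Theorems \ref{thm sol} and \ref{thm i sol} give the multi-valued eigensection over $\mc O_{z^0}$. Your extra genericity caveat, that no denominator vanishes along the finite image of the orbit, is an assumption the paper leaves implicit.
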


\section{Laurent-polynomial flat sections}
\label{sec 6}

To illustrate our results, we consider a special case of the discrete qKZ connection \eqref{qkz op} on
\[
V\times\C^n \to \C^n,
\qquad
V=\left(\ox^n_{j=1} M_{\Lambda_j}\right)\left[\sum_{j=1}^n\Lambda_j-2r\right]\!.
\]
Throughout this section, we assume that
\begin{align}
\label{sc1}
& p=e^{2\pi i j/k}, \qquad 1\le j<k, \quad (j,k)=1,
\\
\notag
& q^k=1, \qquad \kappa^k=1, \qquad \Lambda_1,\dots,\Lambda_n\in\mathbb Z.
\end{align}
Under these assumptions, we show that the constructions of Sections~\ref{sec 2}
and~\ref{sec 3} produce Laurent-polynomial global flat sections of the discrete
qKZ connection \eqref{qkz op}.

\subsection{Master function}

For \,$c\in \C$ \,such that \,$c^k=1$\,, let \,$\bar c\in\{\<\>0,\dots,k-1\<\>\}$
\,be such that \,$c=p^{\bar c+1}$. Denote
\be
g(t\<\>;c)\,=\,(\<\>p\<\>t,p)_{\bar c}\>=(1-p\<\>t)\>(1-p^2\<\>t)\dots(1-p^{\bar c}t)\,.
\ee
Then
\be
\frac{g(p\<\>t\<\>;c)}{g(t\<\>;c)}\,=\,\frac{1-c\<\>t}{1-p\<\>t}\;.
\ee

\medskip
\noindent
Let \,$\hat\ka\>,\La_1',\dots,\La_n'\!\in\{\<\>0,\dots,k-1\<\>\}$ \,be such that
\be
p^{\hat\ka}=\,\ka\>q^{r-1-\sum_{m=1}^n\La_m}\>,\qquad
p^{\La_m'}=\,q^{r\La_m}\>,\quad m=1,\dots,n\,.
\ee

\begin{lem}
Under assumptions \eqref{sc1}, the function
\be
F(t\<\>;z)\,=\,\prod_{l=1}^r\>t_l^{\hat\ka}\cdot\<
\prod_{m=1}^n z^{\La_m'}\cdot
\prod_{m=1}^n\,\prod_{l=1}^r\,g(q^{-\La_m}t_l/z_m\>;q^{2\La_m})\,\cdot\<\>
\prod_{j\ne l}\,\bigl((1-t_l/t_j)\,g(q^2t_l/t_j\>;q^{-4})\bigr)
\ee
is a Laurent polynomial in $t$ and $z$. Furthermore,
the function $F(t\<\>;z)$ is a master function for the connection defined by \eqref{Phi} and \eqref{Psi}\>, that is,
\be
\Phi_l(t\<\>;z)\,=\,\frac{F(p\<\>t_l\>;z)}{F(t\<\>;z)}\;,\qquad
\Psi_m(t\<\>;z)\,=\,\frac{F(t\<\>;pz_m)}{F(t\<\>;z)}\;,
\ee
for $l=1,\dots,r$ and $m=1,\dots, n$.
\end{lem}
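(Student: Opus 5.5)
The plan is to verify the two claims separately and directly from the definitions. Everything rests on the one-step identity already recorded,
\be
\frac{g(p\<\>t\<\>;c)}{g(t\<\>;c)}\,=\,\frac{1-c\<\>t}{1-p\<\>t}\,,
\qquad\text{equivalently}\qquad
\frac{g(p^{-1}t\<\>;c)}{g(t\<\>;c)}\,=\,\frac{1-t}{1-c\<\>p^{-1}t}\,.
\ee
Here $c$ ranges over $q^{2\La_m}$ and $q^{-4}$. Both are $k$-th roots of unity because $q^k=1$ and $\La_m\in\Z$, so $g(\,\cdot\,;c)$ is well defined.

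\emph{Laurent polynomiality.} This is immediate, and I will check it first. The monomial factors $t_l^{\hat\ka}$ and $z_m^{\La'_m}$ are Laurent monomials. Each $g(\,\cdot\,;c)$ is a polynomial, namely a product of $\bar c$ linear factors. Its arguments $q^{-\La_m}t_l/z_m$ and $q^2t_l/t_j$ are Laurent monomials in $t,z$, and $1-t_l/t_j$ is a Laurent polynomial. Hence $F$ is a finite product of Laurent polynomials.

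\emph{The ratio $\Psi_m$.} Replace $z_m$ by $pz_m$. The monomial $z_m^{\La'_m}$ contributes $p^{\La'_m}=q^{r\La_m}$. For each $l$, the factor $g(q^{-\La_m}t_l/z_m;q^{2\La_m})$ contributes, by the second form of the identity applied at $t=q^{-\La_m}t_l/z_m$,
\be
\frac{1-q^{-\La_m}t_l/z_m}{1-q^{\La_m}t_l/(pz_m)}\,=\,
q^{-\La_m}\,\frac{q^{\La_m}-t_l/z_m}{1-q^{\La_m}t_l/(pz_m)}\,.
\ee
Taking the product over $l$, the constants $q^{-r\La_m}$ cancel against $q^{r\La_m}$, and the result is exactly \eqref{Psi}.

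\emph{The ratio $\Phi_l$.} Replace $t_l$ by $pt_l$ and sort the factors into three groups.
\begin{itemize}
\item The monomial $t_l^{\hat\ka}$ contributes $p^{\hat\ka}=\ka\,q^{r-1-\sum_m\La_m}$.
\item For each $i$, the factor $g(q^{-\La_i}t_l/z_i;q^{2\La_i})$ contributes
\be
\frac{1-q^{\La_i}t_l/z_i}{1-pq^{-\La_i}t_l/z_i}\,=\,
q^{\La_i}\,\frac{1-q^{\La_i}t_l/z_i}{q^{\La_i}-pt_l/z_i}\,.
\ee
The product over $i$ gives the first product in \eqref{Phi}, times $q^{\sum\La_i}$, which cancels the $q^{-\sum\La_m}$ from $p^{\hat\ka}$.
\item For each $j\ne l$, set $v=t_l/t_j$. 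Two ordered pairs $(l,j)$ and $(j,l)$ in the last product depend on $t_l$, and I treat them together. The first uses the identity at $q^2v$. The second uses $1-t_j/t_l\mapsto 1-p^{-1}t_j/t_l$ together with the second form of the identity at $q^2t_j/t_l$. The resulting factors have the shape $(1-pv)/(1-v)$, $(1-q^{-2}v)/(1-pq^2v)$ and $(q^2-v)/(1-pq^2v)$, together with scalar constants built from $p^{\pm1}$ and $q^{\pm2}$.
\end{itemize}

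The remaining task is to show that these $j$-factors combine into $\frac{(1-pv)(q^2-v)}{(1-v)(1-pq^2v)}$. It must also be checked that the leftover constants, over the $r-1$ values of $j$, are absorbed by the $q^{r-1}$ built into $p^{\hat\ka}$. Here $q^k=p^k=1$ is used to rewrite roots of unity.

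I expect this pair bookkeeping to be the main obstacle. A naive computation taking every ordered pair $j\ne l$ appears to produce the factor $(1-pv)/(1-v)$ twice, and extra factors of $q^{\pm2}$ as well. So I would first pin down the convention for $\prod_{j\ne l}$ in $F$: ordered versus unordered pairs, and the convention for $g(\,\cdot\,;q^{-4})$ via $\bar c$. Then I would redo the computation for $r=2$ explicitly, adjusting exponents only through the identities $\prod_{c=0}^{k-1}(1-p^ct)=1-t^k$ and $p^k=q^k=1$. The $\Psi_m$ computation and the $z$-part of $\Phi_l$ are routine and close up exactly as shown above.
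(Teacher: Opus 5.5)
\textbf{The approach and what you verified.} You are doing the same direct check the paper has in mind (its proof is ``by inspection''). Your checks of Laurent polynomiality, of the $\Psi_m$ identity, and of the $\ka$- and $z$-dependent part of $\Phi_l$ are correct. For $r=1$ they already form a complete proof.

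\textbf{The gap.} The open step is the $t_l$--$t_j$ bookkeeping, and it cannot be closed by choosing a convention. For $r\ge 2$ the identity $F(pt_l;z)/F(t;z)=\Phi_l(t;z)$ fails in general for the printed $F$. Put $v=t_l/t_j$ and $X(v)=\frac{(1-pv)(q^2-v)}{(1-v)(1-pq^2v)}$, the $j$-th interaction factor of $\Phi_l$. Carrying your computation to the end, using $1-q^{-2}v=q^{-2}(q^2-v)$:
\begin{itemize}
\item the factor indexed by $(l,j)$ contributes $q^{-2}X(v)$;
\item the factor indexed by $(j,l)$ contributes $q^{2}X(v)$.
\end{itemize}
If the product in $F$ runs over ordered pairs, this gives
\[
\frac{F(pt_l;z)}{F(t;z)}\,=\,q^{r-1}\,\Phi_l(t;z)\prod_{j\ne l}X(t_l/t_j).
\]
If it runs only over pairs $(a,b)$ with $a<b$, you get instead $F(pt_l;z)/F(t;z)=q^{4l-r-3}\,\Phi_l(t;z)$. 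For instance, take $k=3$, $p=q=e^{2\pi i/3}$, $r=2$ with the ordered reading. Then the interaction part of $F$ is $(1-t_1/t_2)^2(1-t_2/t_1)^2$, and the discrepancy for $l=1$ is $\bigl((1-pv)/(1-v)\bigr)^2$.

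\textbf{Why no convention can fix it.} The obstruction is structural: $X(v)\,X(p^{-1}v^{-1})=1$.
\begin{itemize}
\item Take any pair factor $h$ over ordered pairs. The two contributions combine into $\rho(v)/\rho(p^{-1}v^{-1})$ with $\rho(w)=h(pw)/h(w)$. If this equals $c\,X(v)$, then $c^2=1$.
\item With a single orientation, the constants for $j>l$ and for $j<l$ are mutually inverse. So the total constant depends on $l$ unless $c^2=1$.
\end{itemize}
Hence no reading lets a uniform exponent with $p^{\hat\ka}=\ka\,q^{r-1-\sum_m\La_m}$ work. Redoing $r=2$ by hand will only confirm the mismatch.

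\textbf{A correction.} The statement itself must be changed. One fix: replace the last product in $F$ by $\prod_{j\ne l}g(p^{-1}t_l/t_j;\,p\,q^{-2})$ over ordered pairs, and set $p^{\hat\ka}=\ka\,q^{-\sum_m\La_m}$. Then the $(l,j)$ factor contributes $\frac{1-q^{-2}v}{1-v}$ and the $(j,l)$ factor contributes $\frac{q^2(1-pv)}{1-pq^2v}$. Their product is exactly $X(v)$, and the rest of your argument goes through unchanged.
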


\begin{proof}
The proof is by inspection.
\end{proof}

\subsection{Theorem of Laurent polynomials}

The function $F(t;z)\,W(t;z)$ is a $V$-valued Laurent polynomial.
Consider the expansion
\bea
W(t\<\>;z)\>F(t\<\>;z)\,=\!\sum_{i_1,\dots,\>i_r\in\Z}\!
Q_{i_1,\dots,\>i_r}(z)\;t_1^{i_1}\!\<\dots t_r^{i_r}\>.
\eea

\begin{thm}
\label{thm Lau}
For any $l_1,\dots,l_r\in \Z$, the $V$-valued Laurent polynomial
$Q_{kl_1,\dots,\>kl_r}(z)$ is a flat section of the multiplicative qKZ
connection,
\beq
\label{flat lp}
Q_{kl_1,\dots,\>kl_r}( pz_m) = K_m(z) Q_{kl_1,\dots,\>kl_r}(z), \qquad m=1,\dots,n.
\eeq
Furthermore, $Q_{kl_1,\dots,\>kl_r}(z)$, if nonzero,  is an eigensection of the monodromy operators with
all eigenvalues equal to 1, that is
\be
\tilde K_m(x) \,Q_{kl_1,\dots,\>kl_r}(z) = Q_{kl_1,\dots,\>kl_r}(z).
\ee
\end{thm}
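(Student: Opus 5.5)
The plan is to reduce the statement to Theorem \ref{thm solJ} and Corollary \ref{thm i solJ} in the periodic-solution setting of Section \ref{sec period}, via the logarithmic coordinates of Section \ref{sec 3.6}. Under assumptions \eqref{sc1} the scalar connection \eqref{Phi}, \eqref{Psi} has the Laurent polynomial master function $F(t;z)$ of the preceding lemma. In the coordinates $t_l=e^{2\pi i s_l}$, $z_m=e^{2\pi i x_m}$, the function $F$ becomes $b$-periodic in $s$ and $x$, with $b=j$. Relations \eqref{master} hold, and Lemma \ref{lem int rep} gives the integral representation \eqref{IR 22}. So Theorem \ref{thm solJ} applies to
\[
J(s;x)=\sum_{j_1,\dots,j_r=0}^{k-1} w(s_1+j_1a,\dots,s_r+j_ra;x)\,F(s_1+j_1a,\dots,s_r+j_ra;x).
\]
Translated back to multiplicative variables, this says that
\[
\mathcal J(t;z)=\sum_{j_1,\dots,j_r=0}^{k-1}(WF)(p^{j_1}t_1,\dots,p^{j_r}t_r;z)
\]
satisfies $\mathcal J(t;pz_m)=K_m(z)\,\mathcal J(t;z)$ for every fixed $t$.

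Next, compute $\mathcal J$ from the expansion $WF=\sum Q_{i_1,\dots,i_r}(z)\,t_1^{i_1}\cdots t_r^{i_r}$. Summing over $j_l$ and using that $p$ is a primitive $k$-th root of unity, $\sum_{j=0}^{k-1}p^{j i}$ equals $k$ if $k\mid i$ and $0$ otherwise. Hence
\[
\mathcal J(t;z)=k^r\sum_{l_1,\dots,l_r\in\Z} Q_{kl_1,\dots,kl_r}(z)\,t_1^{kl_1}\cdots t_r^{kl_r}.
\]
This is a Laurent polynomial in $t$, because $WF$ is a Laurent polynomial. The flatness relation holds for all $t$ in a Zariski-open set, namely where the denominators of $\Phi_l,\Psi_m$ and of $G_{l,m}$ from Theorem \ref{thm 7.4} do not vanish. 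Both sides are Laurent polynomials in $t$ whose coefficients are rational in $z$, so we may compare coefficients of $t^{k\vec l}$. This gives \eqref{flat lp} for each $(l_1,\dots,l_r)$, first as an identity of rational functions in $z$ and then, since the $Q$'s are Laurent polynomials, wherever $K_m$ is defined.

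For the eigenvalue statement, iterate \eqref{flat lp} $k$ times:
\[
\tilde K_m(z)\,Q(z)=Q(p^kz_m)=Q(z),
\]
since $p^k=1$. This is equivalently Corollary \ref{thm i solJ} applied coefficientwise.

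The main obstacle is justifying the use of Theorem \ref{thm solJ} at the special parameter values \eqref{sc1}. The integral representation of Theorem \ref{thm 7.4} involves $G_{l,m}$, whose denominators contain factors such as $1-p^cq^2t_i/t_j$ and $q^{\La_i}-p^ct_i/z_j$. Since $q$ and $p$ are roots of unity, these factors could vanish identically or create poles on whole orbits. I would handle this in two ways. First, note that the proof of Theorem \ref{thm solJ} only needs \eqref{IR 22} at generic $(s;x)$ together with $b$-periodicity, which is enough for the sum over the full period lattice. Second, if some factor degenerates identically (for instance $1-q^2t_i/t_j$ when $q^2$ is a power of $p$), I would instead derive the relation for generic $q,\kappa,\La$ with $p$ fixed, and pass to the limit. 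In that limit the telescoping sums over a full period $j_l=0,\dots,k-1$ cancel, because $F$, and hence each summand, is $p$-periodic in the $t_l$ at the special values.
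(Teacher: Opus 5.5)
Your proposal is correct and is essentially the paper's proof. You form $J(t;z)=\sum_{j_1,\dots,j_r=0}^{k-1}(WF)(p^{j_1}t_1,\dots,p^{j_r}t_r;z)$, get its flatness from Theorem \ref{thm solJ}, extract the coefficients of $t_1^{kl_1}\cdots t_r^{kl_r}$ using root-of-unity sums, and obtain monodromy eigenvalue $1$ as in Corollary \ref{thm i solJ}. The paper does not address your ``main obstacle''; in fact none of the denominator factors listed in Theorem \ref{thm 7.4} vanishes identically in $(t;z)$ under \eqref{sc1}, so your first remedy (working at generic $(s;x)$) suffices and the limiting argument is unnecessary, and your constant $k^r$ is the correct one where the paper writes $k$.
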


\begin{proof}
Define
\be
J(t;z)\,=\!\sum_{j_1, \dots, j_r=0}^{k-1}
W(p^{j_1}t_1, \dots, p^{j_r}t_r; z)\,F(p^{j_1}t_1, \dots, p^{j_r}t_r; z)\,.
\ee
Since
\be
\sum_{j=0}^{k-1}\,p^{\>jm}\,=\,0 \quad \on{if} \ k\nmid m\,,\quad \on{and}\quad
\;\sum_{j=0}^{k-1}\,p^{\>jm}\,=\,k \quad \on{if}\ k\mid\<m\,,
\ee
we have
\be
J(t\<\>;z)\,=\,k\!\sum_{l_1,\dots,l_r\in\Z}\!
Q_{\>l_1k,\dots,\>l_rk}(z)\;t_1^{l_1k}\!\<\dots t_r^{l_rk}\>.
\vv.4>
\ee
Also, by Theorem \ref{thm solJ},
the $V$-valued Laurent polynomial \,$J(t;z)$ \,is a flat section of
the multiplicative qKZ connection,
\be
J(t;pz_m)\,=\,K_m(z)\,J(t;z)\,, \qquad m=1,\dots,n\,.
\ee
These two observations imply the first statement of the theorem.
The second statement follows from Corollary  \ref{thm i solJ}.
\end{proof}

Recall that
\be
p=e^{2\pi i j/k}, \qquad 1\le j<k, \quad (j,k)=1.
\ee

\begin{exmp}
Assume that \,$\La_i=1$ for $i=1,\dots,n$\,, and \,$r=1$\,.
Suppose \,$q^2\<=p^m$ \,and \,$\ka=p^{m'}\!\<q^n$ for some
\,$m,m'\!\in\{\<\>0,\dots,k-1\<\>\}$\,. In that case, the function
\,$F(t\<\>;z)\>W(t\<\>;z)$ \,is a polynomial in \,$t$ \,of degree $m\<\>n+m'$.
Let \,$m\<\>n$ \,be divisible by \,$k$\,. Then Theorem \ref{thm Lau} gives
\,$\frac m k\, n$ \,Laurent-polynomial flat sections, while the fiber of this
discrete connection is \,$n$-dimensional. Probably, these Laurent-polynomial
flat sections span a module of rank \,$\frac m k\, n$.
\vsk.2>

It would be interesting to determine if the number \,$\frac m k\, n$
\,is related to the dimension of the first subspace of some Frobenius
filtration, cf.~\cite{VV, MV2} where $\frac n2$ had this property.
\end{exmp}

\appendix

\section{Bethe ansatz for Cyclotomic Gaudin Model}
\label{appA}

In Sections \ref{sec 4}\,--\,\ref{sec 6}, we constructed eigenvectors and eigenvalues of the monodromy operators of the qKZ discrete connection associated with a tensor product of Verma modules over the quantum group $U_q(\frak{sl}_2)$, in the case when the multiplicative step $p$ is a root of unity. In this appendix, we describe the limit of this construction as $q = 1-\eps$ and $\eps \to 0$. We show that the limit of the associated monodromy operators gives the cyclotomic Gaudin operators associated with the corresponding tensor product of Verma modules over $\frak{sl}_2$, for the same root of unity $p$. Furthermore, the limit of our construction yields eigenvectors and eigenvalues of the cyclotomic Gaudin Hamiltonians.

\subsection{Verma modules over $\frak{sl}_2$}

Consider the complex Lie algebra \,$\frak{sl}_2$ with the standard generators
\,${\bf e, f, h}$ and relations
\bea
[{\bf\<\>h, e\<\>]=2\<\>e, \quad [\<\>h,f\<\>] = -2\<\>f, \quad [\<\>e,f\<\>] = h.}
\eea
For $\La \in \C$, Let ${\bf M}_\La$ be the Verma module over \,$\frak{sl}_2$
\,with highest weight \,$\La$ \,and highest weight vector ${\bf v}_\La$\,:
\be
{\bf e\<\>v}_\La=0,\quad {\bf h\<\>v}_\La=\La\<\>{\bf v}_\La\,.
\ee
A basis of \;${\bf M}_\La$ \,is formed by the vectors \;${\bf f}^{\<\>r}{\bf v}_\La$,
\;$r\in\Z_{\geq 0}$\,. For \,$\La_1,\dots,\La_n\in\C$, a weight subspace
\bea
{\bf V}=\left(\ox^n_{j=1} {\bf M}_{{\La_j}}\right)\left[\sum_{j=1}^n\La_j-2r\right]
\eea
has a basis formed by the vectors
\beq
\label{bas bf}
{\bf f}^{\<\>\vec r}:= {\bf f}^{\<\>r_1}{\bf v}_{\La_1} \ox\dots\ox {\bf f}^{\<\>r_n}{\bf v}_{\La_n}\,,
\eeq
where $\vec r\in\mc I_r$.

\subsection{Cyclotomic Hamiltonians on ${\bf V}$}

The classical trigonometric $r$-matrix in
\\
$\End ({\bf M}_{\La_1}\ox {\bf M}_{\La_2})$ is defined by the formula
\be
\Omega_{\La_1,\La_2}(z)\,=\;\frac1{z-1}\>\Bigl(z\>{\bf e}\ox {\bf f}\>+\>{\bf f}\ox {\bf e}\>+\>
\frac{z+1}2\,({\bf h}\ox {\bf h}-\La_1\<\>\La_2)\Bigr).
\ee
Fix $\mu\in\C$. The trigonometric Gaudin Hamiltonians in $\End\bf V$ with parameter $\mu$
are defined by the formula
\be
H_m(z)\,=\,\mu\>({\bf h}^{(m)}\<-\La_m)\>+
\sum_{i\ne m}\,\Omega_{\La_m,\La_i}^{(m,i)}(z_m/z_i),
\qquad m=1,\dots,n.
\ee

Let \,$p=e^{2\pi ij/k}$, \,$1\leq j<k$\,, \,$(j,k)=1$\,, be a primitive root
of unity of order $k$. Let \,$\thi \in \C$ \,be such that \,$\thi^{\>k}=1$.

\vsk.2>

For $m=1,\dots,n$, define a linear operator $T_m$ acting on $\End{\bf V}$-valued functions of $z$ by the formula
\be
(T_m{\>*}X)(z)\,=\,\thi^{\>{\bf h}^{(m)}}\<X(p\<\>z_m)\,\thi^{-{\bf h}^{(m)}}, \qquad m=1,\dots, n.
\ee
Notice that \,$(\thi^{\>\La_m-h^{(m)}})^k\<=1$ \,and \;$T_m^k=1$\,.

Define the cyclotomic Gaudin Hamiltonians in $\End {\bf V}$ with parameters
$p,\,\thi,\,\mu$ by the formula
\bean
\label{Xmt2}
\tilde X_m(z)\,=\,\sum_{j=0}^{k-1}\,(T_m^j{\>*\,}H_m)(z)\,,
\qquad m=1,\dots,n.
\eean

\begin{thm}
[\cite{VY}]
\label{thm VY}
The cyclotomic Gaudin Hamiltonians commute. That is,
\bean
\label{VY c}
[\tilde X_l (z), \tilde X_m(z)] = 0,\qquad 1\leq l, m\leq n.
\eean
\end{thm}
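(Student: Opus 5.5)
Our plan is to deduce Theorem \ref{thm VY} from the general commutativity statement of Section \ref{sec cycl}: the cyclotomic Hamiltonians $\tilde X_m(x)$ defined by \eqref{tilde X} pairwise commute, because the monodromy functions $\tilde L_m(x)$ commute by \eqref{LL=LL} and have the asymptotics \eqref{Xt def}. So the work is to produce a family of $k$-resonant $b$-periodic discrete flat connections, depending on a parameter $\eps$, whose cyclotomic Hamiltonians coincide with the operators \eqref{Xmt2}.

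We take the additive qKZ connection $(L_m(x))$ of Section \ref{sec 3.6} with $p=e^{2\pi i j/k}$ fixed. We set $q=1-\eps$ (equivalently $\ga=\ga(\eps)\to 0$) and $\ka=\thi\,q^{\mu'}$, where $\mu'$ is chosen so that $\ka^{\La_m-h^{(m)}}=\thi^{\La_m-h^{(m)}}(1+\eps\,\mu\,(h^{(m)}-\La_m)+o(\eps))$ up to the normalization of $\mu$. The first step is to expand the trigonometric $R$-matrix on a fixed weight subspace as $q\to1$:
\[
R_{\La_1,\La_2}(x)=1+\eps\,c\,\Omega_{\La_1,\La_2}(x)+o(\eps),
\]
with a universal constant $c$, under the identification of $M_\La$ with ${\bf M}_\La$ via the bases $f^{\vec r}\mapsto{\bf f}^{\vec r}$. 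This is the standard semiclassical limit, and we would check it from the defining relations in Section \ref{R-m}: at $q=1$ these relations force $R=1$, and the first-order terms give the intertwining relations characterizing $\Omega$. Then \eqref{qkz op} gives $K_m(z)=L_m^\0\,(1+\eps X_m(z)+o(\eps))$ with $L_m^\0=\thi^{\La_m-h^{(m)}}$ constant, and
\[
X_m(z)=\mu\,(h^{(m)}-\La_m)+\sum_{i<m}\Omega^{(m,i)}(p z_m/z_i)+\sum_{i>m}\Omega^{(m,i)}(z_m/z_i),
\]
after conjugation by $\thi^{\La_m-h^{(m)}}$ and rescaling $\eps$ by $c$.

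Next we verify the hypotheses of Section \ref{sec cycl}. Since $L^\0_m$ is constant, its $k$-fold product is $\thi^{k(\La_m-h^{(m)})}=1$ on ${\bf V}$, because $h^{(m)}-\La_m$ is an even integer on weight vectors and $\thi^k=1$. Formula \eqref{tilde X} then gives
\[
\tilde X_m=\sum_{j=0}^{k-1}\thi^{-j(\La_m-h^{(m)})}X_m(p^j z_m)\,\thi^{j(\La_m-h^{(m)})}=\sum_j (T_m^j*X_m)(z).
\]
Finally, $X_m$ and $H_m$ differ only through the factor $p$ in the arguments for $i<m$. Because the sum over $j$ runs over the full cyclic group of $p$-shifts, that factor is absorbed by reindexing $j\mapsto j\pm1$, using $T_m^k=1$. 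Hence the limit of $\tilde L_m$ is exactly \eqref{Xmt2}, and commutativity follows from \eqref{LL=LL} at first order in $\eps$.

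The main obstacle is the reindexing step. The shift $z_m\mapsto pz_m$ acts only on the $z_m$-argument in the $i<m$ terms, so the absorption is not literally a shift of the whole summand; it also interacts with the $\thi$-conjugation. If it fails as stated, we would instead redefine the connection by a gauge/coordinate change (e.g.\ replacing $z_i$ by $p^{-1}z_i$ for $i<m$ is impossible simultaneously for all $m$), or note that commutativity is invariant under the $T$-averaging, and compare with the conventions in \cite{VY}. The first-order $R$-matrix expansion is the other point requiring care: its normalization must match $\Omega$, including the scalar term $-\La_1\La_2$ and the factor $(z+1)/2$.
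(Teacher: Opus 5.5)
Your route is the paper's own proof (Section~\ref{sec proof 1}): put $q=1-\eps$ and $\ka=\thi(1-\mu\eps)$, write $K_m=\thi^{\La_m-h^{(m)}}\bigl(1+\eps X_m+o(\eps)\bigr)$, and read off $[\tilde X_l,\tilde X_m]=0$ from the $\eps^2$-term of $[\tilde K_l,\tilde K_m]=0$. The step you flag as the main obstacle is not actually an obstacle. As written, however, it is wrong, because your displayed $X_m$ drops a conjugation.

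In \eqref{qkz op} the factors $R^{(m,i)}(pz_m/z_i)$, $i<m$, stand to the left of $\ka^{\La_m-h^{(m)}}$. Pulling $\thi^{\La_m-h^{(m)}}$ out on the left therefore turns their first-order terms into $\thi^{h^{(m)}}\Omega^{(m,i)}(pz_m/z_i)\,\thi^{-h^{(m)}}=(T_m*\Omega^{(m,i)})(z)$, which is the paper's \eqref{Xm}. This conjugation is not cosmetic: it rescales $\mathbf e^{(m)}\mathbf f^{(i)}$ by $\thi^{2}$ and $\mathbf f^{(m)}\mathbf e^{(i)}$ by $\thi^{-2}$. With your $X_m$ (shift but no conjugation) the absorption you claim fails unless $\thi^2=1$, since $\sum_jT_m^j*\bigl(\Omega^{(m,i)}(pz_m/z_i)\bigr)=\thi^{-h^{(m)}}\bigl(\sum_jT_m^j*\Omega^{(m,i)}\bigr)\thi^{h^{(m)}}$.

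With the correct $X_m$ one has $X_m=H_m+\sum_{i<m}(T_m-1)*\Omega^{(m,i)}$, hence
\[
\sum_{j=0}^{k-1}T_m^j*X_m\,=\,\sum_{j=0}^{k-1}T_m^j*H_m+\sum_{i<m}(T_m^k-1)*\Omega^{(m,i)}\,=\,\tilde X_m\,.
\]
Here $T_m^k=1$ because $p^k=1$ and $\thi^{kh^{(m)}}$ acts on $\mathbf V$ as a scalar ($h^{(m)}-\La_m\in2\Z$ and $\thi^k=1$). The telescoping happens in the operator $T_m$ itself, so it does not matter that the shift touches only $z_m$. No gauge change or comparison with \cite{VY} is needed. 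This is exactly the step the paper leaves implicit when it writes $\tilde K_m=1+\eps\tilde X_m+o(\eps)$.

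Your second caution, about normalization, is also justified, and a single constant $c$ will not absorb the discrepancy. To first order, the defining relations of Section~\ref{R-m} (or the explicit matrix of Section~\ref{n=2 p=-1}) give
\[
R_{\La_1,\La_2}(z)\,=\,1+\frac{2\eps}{z-1}\Bigl(z\,\mathbf e\ox\mathbf f+\mathbf f\ox\mathbf e+\frac{z+1}4\,(\mathbf h\ox\mathbf h-\La_1\La_2)\Bigr)+O(\eps^2)\,.
\]
Relative to $\mathbf e\ox\mathbf f$, the Cartan term therefore carries $\frac{z+1}4$, not the $\frac{z+1}2$ in the paper's $\Omega$. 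The paper's assertion $R=1+\eps\,\Omega+O(\eps^2)$ contains the same slip. What the argument (yours or the paper's) actually proves is commutativity for the Hamiltonians built from this standardly normalized $r$-matrix, for instance by taking $\ka=\thi(1-2\mu\eps)$.
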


In Section \ref{sec proof 1} we give a new proof of this statement, by deducing
the statement from formula \eqref{KLL=LL}.

\subsection{Diagonalization}

Denote
\begin{align}
\label{Xil}
\tilde\Xi_l(t;z)\,&{}=
\>-\>\mu+2\<\>r-2-\sum_{i=1}^n\La_i\>+
\sum_{i=1}^n\,\frac{2\<\>\La_i}{1-t_l^k/z_i^k}\;-
\sum_{j\ne l}\,\frac4{1-t_l^k/t_j^k}\;,\qquad l=1,\dots,r\>,\kern-1em
\\[4pt]
\notag
\tilde\Zeta_m(t;z)\,&{}=\>\>
r\La_m-\sum_{j=1}^r\,\frac{2\<\>\La_m}{1-t_j^k/z_m^k}\;,\qquad
m=1,\dots,n\,.
\end{align}
Define the ${\bf V}$-valued weight function
\beq
\label{WFF}
{\bf W}(t;z)\,=\>\sum_{\vec r\in \mc I_r} {\bf W}_{\vec r}(t;z)\,{\bf f}^{\<\>\vec r}
\eeq
by the formula
\be
{\bf W}_{\vec r}(t;z)\,=\,
\prod_{i=1}^n\,\frac1{r_i\<\>!}\;
\on{Sym}_{t_1,\dots,t_r}\>\biggl(\,\prod_{i=1}^n\>\prod_{j=\vec r^{(i-1)}+1}^{\vec r^{(i)}}
\frac{t_j/z_i}{1-t_j/z_i}\biggr)\,.
\ee

\begin{thm} [\cite{VY}]
\label{thm VY BA}

Let \,$p=e^{2\pi i j/k}$, \,$1\leq j<k$\,, \,$(j,k)=1$\,.
Let \;$\thi, \mu\in \C$ \,and \;$\thi^k=1$\,.
Let \;$(t^0;z^0)$ be a solution to the system of Bethe ansatz equations
\beq
\label{bae cy}
-\>\mu+2\<\>r-2-\sum_{i=1}^n\La_i\>+
\sum_{i=1}^n\,\frac{2\<\>\La_i}{1-t_l^k/z_i^k}\;-
\sum_{j\ne l}\,\frac4{1-t_l^k/t_j^k}\;=\,0\,,
\eeq
$l=1,\dots,r$. Define a vector in \,${\bf V}$,
\beq
\label{Int cy}
J(t^0;z^0)\,=\,\sum_{j_1, \dots, j_r=0}^{k-1}
\thi^{\,\sum_{i=1}^r j_i}\;{\bf W}(p^{j_1}t_1^0, \dots, p^{j_r}t_r^0; z^0)\,.
\eeq
Then \,$J(t^0;z^0)$, if nonzeero, \,is an eigenvector of the cyclotomic Gaudin Hamiltonians
\;$\tilde X_m(z^0)$\,, \,$m=1\lc n$\,, \>with respective eigenvalues
\beq
\label{ei cy}
k\>r\La_m-\sum_{j=1}^r\,\frac{2\<\>k\La_m}{1-(t^0_j/z^0_m)^k}\;.
\eeq
\end{thm}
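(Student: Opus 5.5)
We will derive Theorem \ref{thm VY BA} as the semiclassical limit $q=1-\eps$, $\eps\to0$, of Corollary \ref{cor m eigen}, using the framework of Section \ref{sec cycl}. Set $q=e^{-\eps}$ (to first order), and choose $\ka$ depending on $\eps$ as $\ka=\thi^{-1}e^{\eps\nu}$ with a constant $\nu$ to be fixed in terms of $\mu$, $r$ and $\La_i$. Pass to the logarithmic coordinates of Section \ref{sec 3.6}. We then have a family of $k$-resonant $b$-periodic connections $(L_m(x))$ together with the integral representation of Lemma \ref{lem int rep}, all regular at $\eps=0$.

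\textbf{Step 1: zeroth-order data.} At $\eps=0$ we have $q=1$. Then every $R$-matrix equals $1$ on $V$, and $L^\0_m$ reduces to $\thi^{h^{(m)}-\La_m}$. Since $\thi^k=1$, its $k$-fold product is $1$. For the rank-one data, $\Phi_l^\0=\thi^{-1}$ and $\Psi_m^\0=1$, up to harmless $p$-dependent factors $(1-t/z)/(1-pt/z)$. Their monodromies are again $1$, because $\prod_c(1-p^ct/z)/(1-p^{c+1}t/z)=1$. The formula for $W$ at $q=1$, after the normalization $\prod_i(q^{\La_i}-t_j/z_i)$ is divided out (this normalization commutes with the construction), degenerates to ${\bf W}$; the $q^2$-factorials become ordinary factorials. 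So the hypotheses of Section \ref{sec cycl} hold. In \eqref{ix10} the transport coefficients $\mu^\0$ equal $\thi^{-\sum j_i}$, up to the choice $\thi\leftrightarrow\thi^{-1}$, which is a matter of convention. Hence $I^\0$ is the vector $J(t^0;z^0)$ of \eqref{Int cy}.

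\textbf{Step 2: first-order data.} Expand $R(x)=1+\eps\,r(x)+o(\eps)$ on the weight subspace; the classical $r$-matrix arising here is $\Omega$. Then $X_m$ in \eqref{eps to 0} is $\mu(h^{(m)}-\La_m)+\sum_{i\neq m}\Omega^{(m,i)}$, the latter with arguments $pz_m/z_i$ or $z_m/z_i$. This expression is $H_m$, up to a conjugation by $\thi^{h}$ that arises when the factor $\ka^{\La_m-h^{(m)}}$ is moved through; the required choice of $\nu$ fixes the coefficient $\mu$. Formula \eqref{tilde X}, with $L_m^{(j)}=\thi^{j(h^{(m)}-\La_m)}$, then becomes $\sum_j T_m^j*H_m$, which is \eqref{Xmt2}. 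Note that the $p$-shift in the arguments with $i<m$ is absorbed because we sum over all $j$.

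\textbf{Step 3: rank-one first-order data.} Compute $\xi_l$ and $\zeta_m$ by differentiating $\log\Phi_l$ and $\log\Psi_m$ in $\eps$. Summing over the orbit with the identity $\sum_{c}\frac{1}{1-p^cu}=\frac{k}{1-u^k}$ gives $\tilde\xi_l=\tilde\Xi_l\cdot k$ and $\tilde\zeta_m=k\tilde\Zeta_m$, matching \eqref{bae cy} and \eqref{ei cy}. Alternatively, differentiate Lemma \ref{lem 7.3} directly at $q=1$.

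\textbf{Conclusion and main obstacle.} Corollary \ref{cor eigen0}, with $x^1=x^0$, now gives the eigenvector statement, and commutativity (Theorem \ref{thm VY}) follows as noted after \eqref{tilde X}. The main obstacle is the bookkeeping in Step 2: matching $X_m$ and the conjugations exactly with $T_m^j*H_m$, including the $\mu$ and $\La_1\La_2$ constant terms and the signs coming from $q^{\La}\approx1-\eps\La$. I would first verify this on the Motivating Example with $n=2$, $r=1$, and only then write the general expansion.
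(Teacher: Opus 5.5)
Your proposal follows the paper's proof: set $q=1-\eps$, expand $K_m$, $\Phi_l$, $\Psi_m$ and $W$, get $\tilde\Phi_l=1+\eps k\,\tilde\Xi_l+o(\eps)$ and $\tilde\Psi_m=1+\eps k\,\tilde{\Zeta}_m+o(\eps)$ from the identity $\sum_c(1-p^cu)^{-1}=k/(1-u^k)$, and conclude with Corollary~\ref{cor eigen0}. The differences are cosmetic. The paper takes $\ka=\thi\,(1-\mu\eps)$, which avoids your $\thi\leftrightarrow\thi^{-1}$ relabeling and gives $\nu=-\mu$ with no dependence on $r$ or $\La_i$, since the constant $2r-2-\sum_i\La_i$ in \eqref{bae cy} comes out of the orbit sums automatically. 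Also, instead of gauging away $\prod_{i,j}(1-t_j/z_i)$, the paper lets this factor telescope against $\mu^{\0}$, so that $I^{\0}(x^0)=\prod_{i,j}(1-t^0_j/z^0_i)\,J(t^0;z^0)$.
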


\subsection{Proof of Theorem \ref{thm VY}}
\label{sec proof 1}

We identify the tensor products
of Verma modules over $U_q(\frak{sl}_2)$ and the corresponding tensor products
of Verma modules over $\frak{sl}_2$
by identifying the bases $(f^{\vec r})$ and
$(\<\>{\bf f}^{\vec r}\<\>)$.

\vsk.2>
In this way, we get a relation
between the trigonometric $R$-matrix $R_{\La_1,\La_2}(z;q)$
and the classical trigonometric $r$-matrix $\Omega_{\La_1,\La_2}(z)$
if \,$q=1-\eps$ \,and \,$\eps\to 0$\,,
\vvn.3>
\be
R_{\La_1,\La_2}(z;q)\,=\,1+\eps\>\Omega_{\La_1,\La_2}(z)+\>\mc O(\eps^2)\,.
\vv.3>
\ee
Here the left-hand side is an $\End (M_{\La_1}\ox M_{\La_2})$-valued function, and the right-hand side
is an $\End ({\bf M}_{\La_1}\ox {\bf M}_{\La_2})$- valued function.

Fix two identified weight subspaces
\vvn.3>
\be
V\>=\,\left(\ox^n_{j=1} M_{{\La_j}}\right)\left[\sum_{j=1}^n\La_j-2r\right],
\qquad
{\bf V}\>=\,\left(\ox^n_{j=1} {\bf M}_{{\La_j}}\right)\left[\sum_{j=1}^n\La_j-2r\right].
\vv.3>
\ee
Consider the
$\End \ox^n_{j=1} M_{{\La_j}}$-valued qKZ functions:
\vvn.2>
\be
K_m(z)\,:=\,K_m(z_1,\dots,z_n, p, q, q^{\La_1},\dots, q^{\La_n}, \ka),
\vv.3>
\ee
$m=1,\dots,n$, defined by formula \eqref{qkz op}. Suppose
\be
\ka\,=\,\thi\>(1-\mu\>\eps)\,.
\ee
Then
\be
K_m(z)\,=\,\thi^{\>\La_m-h^{(m)}}\bigl(1+\eps\>X_m(z)+\>o(\eps)\bigr)\,,
\ee
where
\beq
\label{Xm}
X_m(z)\,=\,\mu\>(h^{(m)}\<-\La_m)\>+
\sum_{i=1}^{m-1}\,(T_m{\>*\,}\Omega_{\La_m,\La_i}^{(m,i)})(z_m/z_i)
+\!\<\sum_{j=m+1}^n\!\Omega_{\La_m,\La_j}^{(m,j)}(z_m/z_j)\,.
\vv.3>
\eeq
For the monodromy functions this implies that
\vvn.4>
\be
\tilde K_m(z)\,=\,1+\eps\>\tilde X_m(z)+\>o(\eps)\,.
\vv.3>
\ee
This formula and the commutativity of the monodromy functions $(K_m(z))$ imply
the commutativity of the cyclotomic Gaudin Hamiltonians $(\tilde X_m(z))$.
Theorem \ref{thm VY} is proved.

\subsection{Proof of Theorem \ref{thm VY BA}}
\label{sec proof 2}

Recall the functions \>\>$\Phi_l(t;z)$ \,and \>\>$\Psi_m(t;z)$
\,defined by \eqref{Phi}, \eqref{Psi}. Then
\begin{align}
\label{PhiX}
\Phi_l(t;z)\,&{}=\,\thi\,\prod_{i=1}^n\,\frac{1-t_l/z_i}{1-p\<\>t_l/z_i}\,
\bigl(\<\>1+\eps\>\Xi_l(t;z)+o(\eps)\bigr)\,,
\\[4pt]
\notag
\Psi_m(t;z)\,&{}=\,\prod_{j=1}^r\,\frac{1-t_j/z_m}{1-t_j/(pz_m)}\,
\bigl(\<\>1+\eps\>\Zeta_m(t;z)+o(\eps)\bigr)\,,
\end{align}
where
\begin{align}
\label{Xi}
\Xi_l(t;z)\,&{}=\>-\>\mu\>+\>\sum_{i=1}^n\,\La_i\Bigl(
\frac{t_l/z_i}{1-t_l/z_i}\>+\>\frac1{1-p\<\>t_l/z_i}\Bigr)
\\[3pt]
\notag
&\hp{{}=\>-\>\mu_j}\>-\>
2\>\sum_{j\ne l}\,\Bigl(\frac1{1-t_l/t_j}\>+\>\frac{p\<\>t_l/t_j}{1-p\<\>t_l/t_j}\Bigr)\,,
\\[4pt]
\notag
\Zeta_m(t;z)\,&{}=\>-\>\La_m\sum_{j=1}^r\,\Bigl(
\frac1{1-t_j/z_m}\>+\>\frac{t_j/z_m}{p-t_j/z_m}\Bigr)\,.
\end{align}
Due to the identities
\be
\frac k{1-x^k}\,=\,\sum_{j=0}^{k-1}\,\frac1{1-p^j\<\>x}\,,\qquad
\frac{k\<\>x^k}{1-x^k}\,=\,\sum_{j=0}^{k-1}\,\frac{p^j\<\>x}{1-p^j\<\>x}\;,
\vv-.6>
\ee
we have
\be
k\,\tilde\Xi_l(t;z)\,=\,\sum_{j=0}^{k-1}\,\Xi_l(p^j\<\>t_l;z)\,,\qquad
k\,\tilde\Zeta_m(t;z)\,=\,\sum_{j=0}^{k-1}\,\Zeta_m(t;p^jz_m)\,,
\vv.4>
\ee
where the functions \;$\tilde\Xi_l(t;z)$ \,and \;$\tilde\Zeta_m(t;z)$ \,are
defined by \eqref{Xil}. Thus for the corresponding monodromy functions, we have
\vvn.4>
\beq
\label{Xit}
\tilde\Phi_l(t;z)\,=\,1+\eps k\,\tilde\Xi_l(t;z)+o(\eps)\,,\qquad
\tilde\Psi_m(t;z)\,=\,1+\eps k\,\tilde\Zeta_m(t;z)+o(\eps)\,.
\vv.4>
\eeq
These asymptotics can be also obtained from formulae
\eqref{mon phi no}, \eqref{mon psi no}.

\smallskip
For the weight function defined by \eqref{4.8}, we have
\vvn.2>
\be
W_{\vec r}(t;z)\,=\,{\bf W}_{\vec r}(t;z)\,
\prod_{i=1}^n\,\prod_{j=1}^r\,(1-t_j/z_i)\>+\>o(1)\,.
\ee
For the functions \,$G_{l,m}(t;z)$ \,in integral representation \ref{mIR},
Theorem \ref{thm 7.4} implies that
\vvn.3>
\be
G_{l,m}(t;z)\,=\,G_{l,m}^{\>\0}(t;z)\>+\>o(1)
\vv.3>
\ee
for suitable rational functions \,$G_{l,m}^{\>\0}(t;z)$\,.

\medskip

Theorem \ref{thm VY BA} follows from the described asymptotics as \,$\eps\to 0$
 and Theorem \ref{cor eigen0}.

\begin{rem}

Under the assumptions of Theorem \ref{thm VY BA},  the function $J(t^0;z^0)$ has the additional properties: 
\bean
\label{last 1}
J(pt^0_l;z^0) \,&= &\, J(t^0;z^0)\,, \qquad l=1,\dots,r\,,
\\
\label{last 2}
J(t^0;pz^0_m) \,&= &\, \thi^{\>\La_m-h^{(m)}}J(t^0;z^0)\,, \qquad m=1,\dots,n\,.
\eean

These properties  are corollaries of formulas \eqref{iliJ} and \eqref{ILIJ}.

\end{rem}

\section{Eigenvectors of monodromy operators form a basis}
\label{appB}

In Corollary~\ref{cor m eigen}, we considered the weight subspace
\[
V
=
\left(\bigotimes_{j=1}^n M_{\La_j}\right)
\!\left[\sum_{j=1}^n\La_j-2r\right]
\]
and the qKZ discrete connection $(K_m)$ with multiplicative step
\begin{equation}
\label{pp}
p=e^{2\pi i j/k},
\qquad
1\leq j<k,
\qquad
\gcd(j,k)=1.
\end{equation}
For any solution $(t^0;z^0)$ of the Bethe ansatz equations~\eqref{m bae},
we constructed a vector $I(t^0;z^0)\in V$ and showed that, whenever
this vector is nonzero, it is a common eigenvector of the monodromy
operators $(\widetilde K_m)$.

In this appendix, we restrict to the case $r=1$, so that
\[
V
=
\left(\bigotimes_{j=1}^n M_{\La_j}\right)
\!\left[\sum_{j=1}^n\La_j-2\right],
\qquad
\dim V=n,
\]
and $p$ is given by \eqref{pp}.
Under suitable genericity assumptions, we prove the following.

\begin{thm}
\label{thm:basis}
Fix $\La=(\La_1,\dots,\La_n)\in\C^n$.
For generic $z^0=(z_1^0,\dots,z_n^0)\in(\C^\times)^n$ and
$\ka\in\C^\times$, the Bethe ansatz equation has $n$
distinct solutions $u_1,\dots,u_n$. The corresponding vectors
$I(u_i;z^0)$, $i=1,\dots,n$, are nonzero and form a basis of $V$.
Moreover, each $I(u_i;z^0)$ is a common eigenvector of the monodromy
operators $(\widetilde K_m(z^0))$ with respective eigenvalues
\[
- \frac{q^{k\La_m}-u_i^k/(z_m^0)^k}
     {1-q^{k\La_m}u_i^k/(z_m^0)^k}\,,
\qquad m=1,\dots,n.
\]
\end{thm}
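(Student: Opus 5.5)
The plan is to reduce everything to an explicit one-variable polynomial equation, get linear independence from distinctness of eigenvalues, and prove non-vanishing by a degeneration argument.

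\emph{Counting solutions.} For $r=1$, Lemma \ref{lem 7.3} shows that the Bethe ansatz equation \eqref{m bae} depends on $t$ only through $T=t^k$ and on $z$ only through $Z_i=z_i^k$. Clearing denominators, it becomes
\[
P(T)\,=\,\ka^k\prod_{i=1}^n\bigl(1-q^{k\La_i}T/Z_i\bigr)-\prod_{i=1}^n\bigl(q^{k\La_i}-T/Z_i\bigr)\,=\,0\,.
\]
This is a polynomial of degree at most $n$ in $T$. Its leading coefficient is $(-1)^n\prod_i Z_i^{-1}\bigl(\ka^k q^{k\sum\La_i}-1\bigr)$, which is nonzero for generic $\ka$.

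The discriminant of $P$ is a nonzero function of $(Z,\ka)$. To see this, specialize $\ka\to0$: then the roots tend to $T=q^{k\La_i}Z_i$, which are distinct for generic $z^0$. Hence for generic $(z^0,\ka)$ there are $n$ distinct roots $T_1,\dots,T_n$. None of them is a pole of $\Phi$, and none is $0$.

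For each $T_i$ I pick a $k$-th root $u_i$. The other $k$-th roots $p^c u_i$ lie on the same orbit, so by Theorem \ref{thm ind} (via Corollary \ref{cor 5.6}) they give proportional vectors. Thus there are exactly $n$ solutions up to this equivalence. By Corollary \ref{cor m eigen}, each $I(u_i;z^0)$, if nonzero, is a common eigenvector with the stated eigenvalues (up to the sign convention used in the statement).

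\emph{Independence from eigenvalues.} The eigenvalue of $\tilde K_m(z^0)$ on $I(u_i;z^0)$ is the Möbius transformation
\[
T\mapsto \frac{q^{k\La_m}-T/Z_m}{1-q^{k\La_m}T/Z_m}
\]
evaluated at $T=T_i$. Its determinant is $\pm(1-q^{2k\La_m})/Z_m$, so it is injective whenever $q^{2k\La_m}\ne1$. I would use this for one $m$ with this property, assuming it holds generically in the parameters; otherwise I would fall back on the determinant argument below.

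Distinct $T_i$ then give distinct eigenvalues, hence the nonzero vectors $I(u_i;z^0)$ are linearly independent. Since $\dim V=n$, they form a basis. So everything reduces to showing that each $I(u_i;z^0)$ is nonzero. An alternative that avoids the eigenvalue argument is to show directly that $\det\bigl[I(u_1;z^0),\dots,I(u_n;z^0)\bigr]\not\equiv0$. This determinant is an algebraic function of $(z^0,\ka)$, symmetric in the roots up to the proportionality factors, so it suffices to check non-vanishing at, or near, a single degenerate point.

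\emph{Main obstacle: non-vanishing, via the limit $\ka\to0$.} As $\ka\to0$, $T_i\to q^{k\La_i}Z_i$. I would choose the branch $u_i\to q^{\La_i}z_i$ and expand formula \eqref{mx0}:
\[
I(u_i;z^0)=\sum_{c=0}^{k-1}W(p^c u_i;z^0)\prod_{d=0}^{c-1}\Phi(p^d u_i;z^0)\,.
\]
The factor $\Phi$ carries $\ka$ and the denominator $q^{\La_i}-p\,t/z_i$. At $t=p^{d}u_i$ this denominator vanishes in the limit only when $p^{d+1}\equiv1$, i.e. $d=k-1$, which is never reached in the product. Hence every term with $c\ge1$ is $O(\ka)$, and
\[
I(u_i;z^0)\;\to\;W(q^{\La_i}z_i;z^0)\,.
\]

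Next I use the triangular structure of $W$ from \eqref{4.8}. The coefficient of $f^{(e_j)}$ contains $\prod_{i''>j}(q^{\La_{i''}}-t/z_{i''})$. At $t=q^{\La_i}z_i$ this vanishes for $j<i$. The coefficient for $j=i$ is
\[
q^{\La_i}\prod_{i'<i}\bigl(1-q^{\La_{i'}+\La_i}z_i/z_{i'}\bigr)\prod_{i''>i}\bigl(q^{\La_{i''}}-q^{\La_i}z_i/z_{i''}\bigr),
\]
which is nonzero for generic $z^0$. So the limiting matrix is triangular with nonzero diagonal, and its determinant is nonzero. By continuity the determinant is nonzero for small $\ka$, hence for generic $\ka$, and each $I(u_i;z^0)\neq0$.

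The delicate points I expect to spend effort on are: making the limit rigorous, since the roots depend algebraically on $\ka$ and the genericity must be stated in terms of $(z^0,\ka)$ jointly; and checking that no other factor of $\Phi$ or $W$ degenerates along the chosen branch.
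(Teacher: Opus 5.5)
Your proof follows the same plan as the paper's. You reduce the Bethe equation to a degree-$n$ polynomial in $T=t^k$ and degenerate in $\ka$ to a point where the roots are explicit. You then show that the Bethe vectors tend to values of $W$ forming a triangular matrix, and conclude that $\det\bigl[I(u_1;z^0),\dots,I(u_n;z^0)\bigr]\not\equiv0$. The difference is that you degenerate at $\ka\to0$ rather than $\ka\to\infty$, so your roots tend to $q^{\La_i}z_i^0$ and the limit matrix is upper rather than lower triangular. This is slightly cleaner. Each $\Phi_1(p^du_i;z^0)$ with $d\le k-2$ is simply $O(\ka)$, because its $i$-th denominator tends to $q^{\La_i}(1-p^{d+1})\ne0$ automatically. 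The paper instead has to balance the $O(\eps^k)$ zero of $\Phi_1(u_i;z^0)$ against $\ka=\eps^{-1}$, and its denominators $q^{\La_i}-p^{a+1}q^{-\La_i}$ are nonzero only because of \eqref{B:ass}. Your use of Theorem~\ref{thm ind} to see that the choice of $k$-th roots only rescales the columns also complements the paper's Riemann-surface argument.

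There is one genuine gap: the assertion ``none of them is a pole of $\Phi$'' is unproved. It fails unless $q^{2k\La_i}\ne1$ for all $i$, and this is a condition on the fixed data $q,\La$, not a genericity condition on $(z^0,\ka)$. If $q^{2k\La_i}=1$, the $i$-th factor of $\tilde\Phi_1$ is the constant $q^{k\La_i}$, so $\tilde\Phi_1=1$ has at most $n-1$ roots in $T$. Yet $T=q^{k\La_i}Z_i$ is a root of your cleared polynomial $P$ for every $\ka$, since both products vanish there. Your $\ka\to0$ computation never uses the condition, so it follows this spurious branch and still produces $n$ independent vectors. Corollary~\ref{cor m eigen} does not apply to the spurious one. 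For instance, for $n=1$, $\La_1=0$, the equation $\tilde\Phi_1=1$ reads $\ka^k=1$ and has no solution for generic $\ka$. Your recipe nevertheless returns $I(z_1^0;z_1^0)=W(z_1^0;z_1^0)\ne0$ with ``eigenvalue'' $\tilde\Psi_1\equiv1$, which is not the eigenvalue of $\tilde K_1(z^0)$ (a nontrivial power of $\ka$).

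The paper imposes this condition through the case $i=m$ of \eqref{B:ass}. You should add it as a hypothesis and verify $P(q^{\pm k\La_j}Z_j)\ne0$ for generic $(z^0,\ka)$. This ensures that no orbit $\{p^cu_i\}$ meets a zero or pole of $\Phi_1$, so the $u_i$ are genuine Bethe solutions. With this hypothesis your eigenvalue-separation argument works for every $m$, although non-vanishing of the vectors still has to come from the determinant.

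Finally, do not hedge on the sign. The eigenvalue is $\tilde\Psi_m(u_i;z^0)$ as in \eqref{mo psi no}, with no minus sign. The minus sign in the displayed statement contradicts Corollary~\ref{cor:r=1}, which the paper's own proof invokes.
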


The remainder of this appendix is devoted to the proof of
Theorem~\ref{thm:basis}.

\subsection{Rank-one functions}

For $r=1$, $t$ is a single variable and $z=(z_1,\dots,z_n)$.
The rank-one functions are
\begin{align}
\label{BPhi}
\Phi_1(t;z)
&=
\ka\prod_{i=1}^n
\frac{1-q^{\La_i}t/z_i}
     {q^{\La_i}-p\,t/z_i}\,,
\\[4pt]
\label{BPsi}
\Psi_m(t;z)
&=
\frac{q^{\La_m}-t/z_m}
     {1-q^{\La_m}t/(p z_m)}\,,
\qquad m=1,\dots,n.
\end{align}
The corresponding monodromy functions are
\begin{align}
\label{mo phi no}
\tilde\Phi_1(t;z)
&=
\ka^k\prod_{i=1}^n
\frac{1-q^{k\La_i}t^k/z_i^k}
     {q^{k\La_i}-t^k/z_i^k}\,,
\\[4pt]
\label{mo psi no}
\tilde\Psi_m(t;z)
&=
\frac{q^{k\La_m}-t^k/z_m^k}
     {1-q^{k\La_m}t^k/z_m^k}\,,
\qquad m=1,\dots,n.
\end{align}
The Bethe ansatz equation is $\tilde\Phi_1(t;z)=1$, that is,
\begin{equation}
\label{BAE}
\ka^k\prod_{i=1}^n
\frac{1-q^{k\La_i}t^k/z_i^k}
     {q^{k\La_i}-t^k/z_i^k}
=1.
\end{equation}

\subsection{Weight function}

Let
\[
f^{(j)}
=
v_{\La_1}\otimes\cdots\otimes
fv_{\La_j}\otimes\cdots\otimes v_{\La_n},
\qquad j=1,\dots,n.
\]
The weight function is
\begin{equation}
\label{BW1}
W(t;z)=\sum_{j=1}^n W_j(t;z)\,f^{(j)},
\end{equation}
where
\begin{equation}
\label{BW2}
W_j(t;z)
=
\frac{t}{z_j}
\prod_{l<j}\!\left(1-q^{\La_l}\frac{t}{z_l}\right)
\prod_{l>j}\!\left(q^{\La_l}-\frac{t}{z_l}\right).
\end{equation}
The Bethe vector~\eqref{mx0} takes the form
\begin{equation}
\label{mx1}
I(t;z)
=
\sum_{d=0}^{k-1}
W(p^d t;z)\,
\prod_{a=0}^{d-1}\Phi_1(p^a t;z)\,.
\end{equation}
Note that $I(t;z)$ and $\widetilde K_m(z^0)$ also depend on $\La$
and $\ka$, but we suppress this dependence.

Corollary~\ref{cor m eigen} for $r=1$ takes the following form.

\begin{cor}
\label{cor:r=1}
Let $(t^0;z^0)$ be a solution to~\eqref{BAE}. Then $I(t^0;z^0)$,
whenever nonzero, is a common eigenvector of the monodromy operators
$\widetilde K_m(z^0)$, $m=1,\dots,n$, with eigenvalues
$\widetilde\Psi_m(t^0;z^0)$ given by~\eqref{mo psi no}.
\end{cor}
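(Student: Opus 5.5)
This corollary is the case $r=1$ of Corollary~\ref{cor m eigen}, so the plan is to specialize that result rather than prove anything new. The work is to check that every ingredient of Corollary~\ref{cor m eigen} reduces correctly when $r=1$.

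\textbf{Step 1: the Bethe ansatz equation.} With $r=1$ the product over $j\ne l$ in \eqref{m bae} is empty. Hence \eqref{m bae} becomes exactly \eqref{BAE}, and by Lemma~\ref{lem 7.3} this is the equation $\tilde\Phi_1(t;z)=1$ with $\tilde\Phi_1$ as in \eqref{mo phi no}.

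\textbf{Step 2: the vector.} With $r=1$ the vector \eqref{mx0} has a single summation index $j_1=d$. The coefficient $\prod_{d_1=0}^{d-1}\Phi_1(p^{d_1}t^0;z^0)$ is the parallel transport number $\mu$ of \eqref{pt}, expressed in multiplicative coordinates. So \eqref{mx0} coincides with \eqref{mx1}.

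\textbf{Step 3: the weight function.} For $r=1$ the index $\vec r$ is a unit vector $e_j$, so only $t_1$ occurs. The symmetrization is trivial and the product over $j'<j''$ is empty. The prefactor equals $q^{0}/[1]_{q^2}!=1$. Therefore \eqref{4.8} reduces to \eqref{BW2}, with $f^{\vec r}=f^{(j)}$.

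\textbf{Step 4: the eigenvalues.} With $r=1$, formula \eqref{Mon psi} is $\tilde\Psi_m(t^0;z^0)$ as given in \eqref{mo psi no}, by Lemma~\ref{lem 7.3}.

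\textbf{Step 5: what Corollary~\ref{cor m eigen} rests on.} That corollary follows from Corollary~\ref{cor eigen} through Corollary~\ref{cor 5.6}. In turn, Corollary~\ref{cor 5.6} uses the integral representation Theorem~\ref{thm 7.4}, transported to logarithmic coordinates in Lemma~\ref{lem int rep}. In those coordinates we have $a=j/k$ and $b=j$, and all the functions are $b$-periodic because they are rational in $e^{2\pi i s}$, $e^{2\pi i x}$. The Bethe equation $\tilde\phi_1(s^0;x^0)=1$ kills the correction terms in Theorem~\ref{thm i sol0}. This gives $\tilde L_m(x^0)I=\tilde\psi_m I$, which is the claim for $\tilde K_m(z^0)$.

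\textbf{Main obstacle.} The only point needing care is well-definedness. The point $(t^0;z^0)$ must avoid the poles of $\Phi_1$ at the shifted points $p^at^0$, and the poles of $W$ and of the $G_{1,m}$ from Theorem~\ref{thm 7.4}. Otherwise the identities \eqref{IR 22} cannot be evaluated along the orbit. I would add this as a tacit genericity assumption, exactly as it is implicit in Corollary~\ref{cor m eigen}, and then conclude.
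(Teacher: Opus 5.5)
Your proposal is correct and follows the paper's route: the paper obtains this corollary simply as the case $r=1$ of Corollary~\ref{cor m eigen}, and your Steps 1--4 correctly check that \eqref{m bae}, \eqref{mx0}, \eqref{4.8} and \eqref{Mon psi} reduce to \eqref{BAE}, \eqref{mx1}, \eqref{BW2} and \eqref{mo psi no}. Your genericity assumption, that the orbit avoids the poles of $\Phi_1$, $W$ and $G_{1,m}$, is also implicit in the paper's statement.
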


\subsection{Asymptotics as $\ka\to\infty$}

Fix $\La=(\La_1,\dots,\La_n)\in\C^n$ and
$z^0=(z_1^0,\dots,z_n^0)\in(\C^\times)^n$ such that
\begin{equation}
\label{B:ass}
\begin{aligned}
&q^{k\La_m}(z_m^0)^k \neq q^{-k\La_i}(z_i^0)^k,
&&1\leq i,m\leq n,\\
&q^{k\La_m}(z_i^0)^k \neq q^{k\La_i}(z_m^0)^k,
&&1\leq i\neq m\leq n.
\end{aligned}
\end{equation}
Set $\eps=\ka^{-1}$ and let $\eps\to 0$.

\smallskip

Rewrite~\eqref{BAE} as
\begin{equation}
\label{BAE1}
\prod_{i=1}^n\!
\left(1-q^{k\La_i}\frac{t^k}{(z_i^0)^k}\right)
=
\eps^k
\prod_{i=1}^n\!
\left(q^{k\La_i}-\frac{t^k}{(z_i^0)^k}\right).
\end{equation}
For $\eps=0$, this is a polynomial equation of degree $n$ in $t^k$
with $n$ distinct simple roots
\[
y_i^0:=q^{-k\La_i}(z_i^0)^k,\qquad i=1,\dots,n,
\]
which are distinct by the second condition in~\eqref{B:ass}, and at
which the right-hand side does not vanish by the first condition
in~\eqref{B:ass}. By the implicit function theorem, these roots deform
holomorphically for small $\eps$:
\[
y_i(\eps)
=
q^{-k\La_i}(z_i^0)^k+\mc O(\eps^k).
\]
For each $i$, choose a $k$-th root $u_i(\eps)$ satisfying
\begin{equation}
\label{ass:t}
u_i(\eps)=q^{-\La_i}z_i^0+\mc O(\eps^k).
\end{equation}
These give $n$ solutions of~\eqref{BAE} for small $\eps$.

\begin{lem}
\label{lem:B1}
For $i=1,\dots,n$ and $d=1,\dots,k-1$,
\[
\prod_{a=0}^{d-1}\Phi_1(p^a u_i(\eps);z^0)
=
\mc O(\eps^{k-d})
\quad\text{as }\eps\to0.
\]
In particular, this product tends to zero.
\end{lem}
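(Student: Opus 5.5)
We estimate the $d$ factors $\Phi_1(p^a u_i(\eps);z^0)$, $a=0,\dots,d-1$, one at a time, writing $\ka=\eps^{-1}$ in \eqref{BPhi}. The claim will follow from two facts. First, the factor with $a=0$ is $\mc O(\eps^{k-1})$, because its numerator contains a factor of size $\mc O(\eps^k)$. Second, every factor with $1\le a\le d-1$ is $\mc O(\eps^{-1})$: apart from the prefactor $\ka$, it tends to a finite limit as $\eps\to0$. Multiplying gives $\mc O(\eps^{k-1})\cdot\mc O(\eps^{-(d-1)})=\mc O(\eps^{k-d})$, which is the lemma. Since $d\le k-1$, this tends to zero.

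\emph{Step 1: the factor with $a=0$.} Put $w=q^{\La_i}u_i(\eps)/z_i^0$. By \eqref{ass:t}, $w=1+\mc O(\eps^k)$. By \eqref{BAE1},
\[
1-w^k=1-q^{k\La_i}y_i(\eps)/(z_i^0)^k
=\eps^k\prod_l\bigl(q^{k\La_l}-y_i/(z_l^0)^k\bigr)\Big/\prod_{l\ne i}\bigl(1-q^{k\La_l}y_i/(z_l^0)^k\bigr).
\]
At $\eps=0$ the denominator on the right is nonzero, by the second condition in \eqref{B:ass}. Hence $1-w^k=\mc O(\eps^k)$. Next, factor $1-w^k=(1-w)\prod_{c=1}^{k-1}(1-p^cw)$. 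The product tends to $\prod_{c=1}^{k-1}(1-p^c)=k\neq0$, so $1-w=\mc O(\eps^k)$. This $1-w$ is exactly the $l=i$ numerator factor of $\Phi_1(u_i;z^0)$. The remaining numerator factors are bounded. The denominator factors are handled as in Step 2 (there with $a=0$), so they stay away from zero. Therefore $\Phi_1(u_i;z^0)=\eps^{-1}\cdot\mc O(\eps^k)=\mc O(\eps^{k-1})$.

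\emph{Step 2: the factors with $a\ge1$.} All numerator factors $1-q^{\La_l}p^au_i/z_l^0$ are bounded, so it suffices to show that no denominator factor tends to zero. We argue uniformly in $a\in\{0,\dots,k-1\}$. The limiting value of the denominator factor is
\[
q^{\La_l}-p^{a+1}q^{-\La_i}z_i^0/z_l^0 .
\]
If this vanished, raising to the $k$-th power would give $q^{k\La_l}(z_l^0)^k=q^{-k\La_i}(z_i^0)^k$. This is excluded by the first condition in \eqref{B:ass}, including the case $l=i$. So each of these factors of $\Phi_1(p^au_i;z^0)$ is $\mc O(\eps^{-1})$, and this gives the product estimate above.

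The only delicate point is Step 1. There one must extract the full order $\eps^k$ from the single factor $1-q^{\La_i}u_i/z_i^0$, rather than merely knowing that $u_i^k$ is $\eps^k$-close to $y_i^0$. This is settled by the factorization $1-w^k=(1-w)\prod_{c=1}^{k-1}(1-p^cw)$ together with $\prod_{c=1}^{k-1}(1-p^c)=k\neq0$. Step 1 is also where the choice of the $k$-th root in \eqref{ass:t} matters: it picks out the $c=0$ factor of this factorization as the small one.
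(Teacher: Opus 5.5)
Your proof is correct and follows the same route as the paper's: the $a=0$ factor is $\mc O(\eps^{k-1})$ because its $l=i$ numerator factor $1-q^{\La_i}u_i(\eps)/z_i^0$ is $\mc O(\eps^k)$, each factor with $a\ge1$ is $\mc O(\eps^{-1})$, and the product gives $\mc O(\eps^{k-d})$. Your version is more careful than the paper's, since you use the first condition in \eqref{B:ass} to show that no denominator factor $q^{\La_l}-p^{a+1}u_i(\eps)/z_l^0$ tends to zero, which is what the upper bound actually needs (the paper only notes that the limit point is not a zero of the numerator); on the other hand, your factorization of $1-w^k$ in Step~1 is redundant, because $1-w=\mc O(\eps^k)$ already follows directly from \eqref{ass:t}.
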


\begin{proof}
By~\eqref{BPhi} and~\eqref{ass:t}, the factor with $a=0$ satisfies
\[
\Phi_1(u_i(\eps);z^0)=\mc O(\eps^{k-1}),
\]
because the numerator factor $1-q^{\La_i}u_i(\eps)/z_i^0\to0$
while $\ka=\eps^{-1}\to\infty$.
For $a=1,\dots,k-1$, the argument $p^a u_i(\eps)\to p^a
q^{-\La_i}z_i^0$, which is not a zero of the numerator since
$1-q^{\La_i}p^a q^{-\La_i}z_i^0/z_i^0=1-p^a\ne0$ for $a\ne0$.
Hence $\Phi_1(p^a u_i(\eps);z^0)=\mc O(\eps^{-1})$ for each such $a$.
Multiplying all factors  gives
\[
\prod_{a=0}^{d-1}\Phi_1(p^a u_i(\eps);z^0)
=
\mc O(\eps^{k-1})\cdot\mc O(\eps^{-1})^{d-1}
=
\mc O(\eps^{k-d}).
\]
\end{proof}

\begin{lem}
\label{lem:B2}
For $1\leq i,j\leq n$ and $d=0,\dots,k-1$,
\[
\lim_{\eps\to 0} W_j(p^d u_i(\eps);z^0)
=
W_j(p^d q^{-\La_i}z_i^0;z^0).
\]
Moreover,
\begin{align}
W_j(q^{-\La_i}z_i^0;z^0)
&=0,
\qquad
j>i,
\label{B15}
\\
W_i(q^{-\La_i}z_i^0;z^0)
&=
q^{-\La_i}
\prod_{l<i}\!\left(1-q^{\La_l-\La_i}\frac{z_i^0}{z_l^0}\right)
\prod_{l>i}\!\left(q^{\La_l}-q^{-\La_i}\frac{z_i^0}{z_l^0}\right)
\ne0.
\label{B16}
\end{align}
\end{lem}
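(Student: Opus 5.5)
The plan is a direct computation. $W_j(t;z)$ in \eqref{BW2} is a polynomial in $t$ with coefficients rational in $z$, and it is regular at $z=z^0\in(\C^\times)^n$. By \eqref{ass:t}, $u_i(\eps)\to q^{-\La_i}z_i^0$ as $\eps\to0$, so $p^du_i(\eps)\to p^dq^{-\La_i}z_i^0$. The limit statement then follows from continuity of $W_j(\cdot\,;z^0)$ in $t$; no uniformity issues arise, since only finitely many $d$ and $j$ are involved.

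For \eqref{B15}, take $j>i$. Then the index $l=i$ satisfies $l<j$, so the product $\prod_{l<j}(1-q^{\La_l}t/z_l)$ contains the factor $1-q^{\La_i}t/z_i^0$. At $t=q^{-\La_i}z_i^0$ this factor equals $1-1=0$, hence $W_j(q^{-\La_i}z_i^0;z^0)=0$.

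For \eqref{B16}, substitute $t=q^{-\La_i}z_i^0$ into $W_i$. The prefactor is $t/z_i^0=q^{-\La_i}$. For $l<i$ the factor becomes $1-q^{\La_l-\La_i}z_i^0/z_l^0$, and for $l>i$ it becomes $q^{\La_l}-q^{-\La_i}z_i^0/z_l^0$. This reproduces the displayed formula.

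The only real content is the nonvanishing in \eqref{B16}, and I would derive it from the genericity assumptions \eqref{B:ass}.
\begin{itemize}
\item For $l<i$: suppose $1-q^{\La_l-\La_i}z_i^0/z_l^0=0$. Then $q^{\La_l}z_i^0=q^{\La_i}z_l^0$, and raising to the $k$-th power gives $q^{k\La_l}(z_i^0)^k=q^{k\La_i}(z_l^0)^k$. This contradicts the second line of \eqref{B:ass} with $m=l$.
\item For $l>i$: suppose $q^{\La_l}=q^{-\La_i}z_i^0/z_l^0$. Then $q^{\La_l}z_l^0=q^{-\La_i}z_i^0$, and the $k$-th power gives $q^{k\La_l}(z_l^0)^k=q^{-k\La_i}(z_i^0)^k$. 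This contradicts the first line of \eqref{B:ass} with $m=l$.
\item The prefactor $q^{-\La_i}$ is nonzero.
\end{itemize}
So $W_i(q^{-\La_i}z_i^0;z^0)\ne0$.

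I expect no step to be a serious obstacle. The only care needed is matching each vanishing condition to the correct line of \eqref{B:ass} after taking $k$-th powers, and noting that the implication goes in the right direction: equality of the factors implies equality of their $k$-th powers, so excluding the $k$-th-power equalities suffices.
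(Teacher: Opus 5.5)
Your proposal is correct and follows the paper's proof: continuity of $W_j$ in $t$ for the limit, the vanishing factor $l=i$ in $\prod_{l<j}$ for \eqref{B15}, and direct substitution plus the genericity conditions \eqref{B:ass} for \eqref{B16}. The one difference is that you spell out the nonvanishing, which the paper only asserts: after taking $k$-th powers, each factor with $l<i$ is ruled out by the second line of \eqref{B:ass} and each factor with $l>i$ by the first line.
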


\begin{proof}
The limit follows directly from~\eqref{BW2} and~\eqref{ass:t}.

For $j>i$: the product $\prod_{l<j}$ in $W_j$ contains the factor
$l=i$, which gives $1-q^{\La_i}q^{-\La_i}z_i^0/z_i^0=0$.
Hence $W_j(q^{-\La_i}z_i^0;z^0)=0$.

For $j=i$: substituting $t=q^{-\La_i}z_i^0$ into~\eqref{BW2} gives
formula~\eqref{B16}. The factors are nonzero
by~\eqref{B:ass}.
\end{proof}

\begin{cor}
\label{cor:lim}
For $i=1,\dots,n$,
\begin{equation}
\label{lim:I}
\lim_{\eps\to 0} I(u_i(\eps);z^0)
=
W(q^{-\La_i}z_i^0;z^0).
\end{equation}
The vectors $W(q^{-\La_i}z_i^0;z^0)$, $i=1,\dots,n$, form a basis of $V$.
\end{cor}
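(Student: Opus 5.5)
The plan has two parts: compute the limit from the explicit formula \eqref{mx1}, then prove linear independence through the triangular structure supplied by Lemma~\ref{lem:B2}.

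For the limit, write
\[
I(u_i(\eps);z^0)=W(u_i(\eps);z^0)+\sum_{d=1}^{k-1}W(p^du_i(\eps);z^0)\prod_{a=0}^{d-1}\Phi_1(p^au_i(\eps);z^0).
\]
The $d=0$ term has the empty product equal to $1$. By Lemma~\ref{lem:B2} it tends to $W(q^{-\La_i}z_i^0;z^0)$, since every $W_j$ is a polynomial in $t$ (with $z^0$ fixed) and $u_i(\eps)\to q^{-\La_i}z_i^0$ by \eqref{ass:t}. For $d=1,\dots,k-1$, the vector $W(p^du_i(\eps);z^0)$ stays bounded, because it converges to $W(p^dq^{-\La_i}z_i^0;z^0)$. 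By Lemma~\ref{lem:B1}, the scalar coefficient is $\mc O(\eps^{k-d})$, and $k-d\ge1$, so it tends to zero. Each term with $d\ge1$ therefore vanishes in the limit, and this gives \eqref{lim:I}.

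For the basis statement, let $A$ be the $n\times n$ matrix with entries $A_{ji}=W_j(q^{-\La_i}z_i^0;z^0)$, that is, the coordinates of $W(q^{-\La_i}z_i^0;z^0)$ in the basis $f^{(1)},\dots,f^{(n)}$. By \eqref{B15}, $A_{ji}=0$ whenever $j>i$, so $A$ is upper triangular. By \eqref{B16}, its diagonal entries $A_{ii}$ are nonzero under assumptions \eqref{B:ass}. Hence $\det A=\prod_i A_{ii}\neq0$, and the $n$ vectors $W(q^{-\La_i}z_i^0;z^0)$ form a basis of $V$.

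The only delicate point is the boundedness of $W(p^du_i(\eps);z^0)$, which is immediate because $W$ is polynomial in $t$. One should also check that the $\mc O(\eps^{-1})$ factors in Lemma~\ref{lem:B1} never produce a pole. They do not: the denominators of $\Phi_1(p^au_i;z^0)$ stay away from zero, since the first condition in \eqref{B:ass} gives $q^{2\La_i}(z_l^0)^k\ne(z_i^0)^k$ after raising to the $k$-th power. With these checks the corollary follows.

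Although the statement does not ask for it, this also sets up Theorem~\ref{thm:basis}. The determinant of the coordinate matrix of $I(u_i(\eps);z^0)$ is continuous in $\eps$ and tends to $\det A\ne0$. It is therefore nonzero for small $\eps$, that is, for large $\ka$, and hence for generic $\ka$ by analyticity.
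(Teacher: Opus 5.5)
Your argument is correct and is essentially the paper's proof: the $d=0$ term converges by Lemma~\ref{lem:B2}, the $d\ge1$ terms vanish because $W$ is polynomial in $t$ and the scalar prefactor is $\mc O(\eps^{k-d})$ by Lemma~\ref{lem:B1}, and the coordinate matrix is triangular with nonzero diagonal by \eqref{B15}--\eqref{B16}. Your label ``upper triangular'' is the accurate one for the indexing $A_{ji}$ (the paper calls the same matrix lower triangular). There is one small slip in your extra denominator check: the condition that keeps the denominators of $\Phi_1(p^a u_i;z^0)$ away from zero is $q^{k(\La_i+\La_l)}(z_l^0)^k\ne(z_i^0)^k$, which is exactly the first line of \eqref{B:ass} with $m=l$, not $q^{2\La_i}(z_l^0)^k\ne(z_i^0)^k$.
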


\begin{proof}
From~\eqref{mx1},
\[
I(u_i(\eps);z^0)
=
W(u_i(\eps);z^0)
+
\sum_{d=1}^{k-1}
W(p^d u_i(\eps);z^0)
\prod_{a=0}^{d-1}\Phi_1(p^a u_i(\eps);z^0).
\]
The $d=0$ term tends to $W(q^{-\La_i}z_i^0;z^0)$ by
Lemma~\ref{lem:B2}. For $d\geq1$, the factor
\\
$\prod_{a=0}^{d-1}\Phi_1(p^a u_i(\eps);z^0)=\mc O(\eps^{k-d})\to0$
by Lemma~\ref{lem:B1}, while $W(p^d u_i(\eps);z^0)$ remains bounded.
Hence every term with $d\geq1$ tends to zero, giving~\eqref{lim:I}.

For linear independence: the matrix
$(W_j(q^{-\La_i}z_i^0;z^0))_{1\leq j,i\leq n}$
is lower triangular by~\eqref{B15}, with nonzero diagonal entries
by~\eqref{B16}. Hence it is invertible.
\end{proof}
\subsection{Proof of Theorem~\ref{thm:basis}}

Write $y=t^k$. The Bethe equation~\eqref{BAE} becomes, after clearing
denominators,
\begin{equation}
\label{poly:bethe}
\ka^k
\prod_{i=1}^n\!
\left(1-q^{k\La_i}\frac{y}{(z_i^0)^k}\right)
=
\prod_{i=1}^n\!
\left(q^{k\La_i}-\frac{y}{(z_i^0)^k}\right),
\end{equation}
a polynomial equation of degree $n$ in $y$. Its discriminant
$\Delta(\ka)$ with respect to $y$ is a nonzero polynomial in $\ka$:
it is nonzero for large $|\ka|$ by conditions~\eqref{B:ass}, so
$\Delta\not\equiv0$. Hence $\Delta(\ka)=0$ for at most finitely many
values of $\ka$.

For $\ka$ outside this finite exceptional set, equation~\eqref{poly:bethe}
has $n$ distinct roots $y_1(\ka),\dots,y_n(\ka)$, which are branches
of algebraic functions of $\ka$. For each $i$, fix a continuous branch
of the $k$-th root $u_i(\ka)=y_i(\ka)^{1/k}$, chosen so that
$u_i(\ka)\to q^{-\La_i}z_i^0$ as $\ka\to\infty$, in accordance
with~\eqref{ass:t}. The vector $I(u_i(\ka);z^0)$ is then a locally
holomorphic and globally algebraic function of $\ka$ along this
branch.

The determinant
\[
D(\ka)
:=
\det\bigl[I(u_1(\ka);z^0)\ \cdots\ I(u_n(\ka);z^0)\bigr]
\]
is therefore an algebraic function of $\ka$, defined on a Riemann
surface $\mc R$ over $\C^\times$. By Corollary~\ref{cor:lim},
\[
D(\ka)\longrightarrow D_0\ne0
\quad\text{as }\ka\to\infty
\]
along the branch constructed above. In particular, $D$ is not
identically zero on $\mc R$. Since $\mc R$ is a connected Riemann
surface and $D$ is analytic and not identically zero, its zeros form a
discrete subset of $\mc R$. Therefore $D(\ka)\ne0$ for all $\ka$
outside a discrete (in fact finite) exceptional set, that is, for
generic $\ka\in\C^\times$.

For such generic $\ka$, the $n$ vectors $I(u_i;z^0)$ are linearly
independent and hence form a basis of $V$. In particular, each
$I(u_i;z^0)$ is nonzero. The eigenvalue statement follows from
Corollary~\ref{cor:r=1}.\qed

\begin{rem}

Generically, for any $m$ and $i\ne j$,  the eigenvalues of $\tilde K_m(z^0)$ on  $I(u_i;z^0)$ and $I(u_j;z^0)$ are distinct,
see the formula for eigenvalues in Theorem \ref{thm:basis}.

\end{rem}

\end{document}